\newcommand{\R}{\mathbb{R}}
\newcommand{\eps}{\varepsilon}
\renewcommand{\epsilon}{\varepsilon}
\newcommand{\opt}{\mathsf{OPT}}
\newcommand{\Err}{\mathsf{Err}}
\DeclareMathOperator*{\E}{\mathbb{E}}
\DeclareMathOperator*{\Diam}{Diam}
\DeclareMathOperator*{\argmin}{arg\,min}
\DeclareMathOperator{\Ber}{Ber}
\DeclareMathOperator{\Bin}{Bin}
\DeclareMathOperator{\colspace}{colspace}
\DeclareMathOperator{\poly}{\mathsf{poly}}
\DeclarePairedDelimiterX\setdef[2]{\{}{\}}{\,#1 \;\delimsize\vert\; #2\,}
\DeclarePairedDelimiter{\abs}{\lvert}{\rvert}
\DeclarePairedDelimiter{\norm}{\lVert}{\rVert}
\DeclarePairedDelimiterX{\inner}[2]{\langle}{\rangle}{#1 , #2}
\DeclarePairedDelimiterXPP\normp[1]{}{\lVert}{\rVert}{_p}{#1}
\DeclarePairedDelimiterXPP\normtwo[1]{}{\lVert}{\rVert}{_2}{#1}
\DeclarePairedDelimiterXPP\norminf[1]{}{\lVert}{\rVert}{_\infty}{#1}
\DeclarePairedDelimiterXPP\normpI[1]{}{\lVert}{\rVert}{_{I,p}}{#1}
\DeclarePairedDelimiterXPP\norminfI[1]{}{\lVert}{\rVert}{_{I,\infty}}{#1}
\newcommand\numberthis{\addtocounter{equation}{1}\tag{\theequation}}
\DeclareMathOperator{\diag}{diag}
\newtheorem{theorem}{Theorem}
\newtheorem{lemma}[theorem]{Lemma}
\newtheorem{corollary}[theorem]{Corollary}
\newtheorem{definition}[theorem]{Definition}
\newtheorem{problem}[theorem]{Problem}
\newcommand{\lip}{\mathsf{Lip}}
\title{Near-optimal Active Regression of Single-Index Models}
\author{
    Yi Li\thanks{Supported in part by Singapore Ministry of Education AcRF Tier 2 grant MOE-T2EP20122-0001 and Tier 1 grant RG75/21.} \\
    School of Physical and Mathematical Sciences \\ and College of Computing and Data Science \\
    Nanyang Technological University \\
    \texttt{yili@ntu.edu.sg}
    \and
    Wai Ming Tai\thanks{Supported by Singapore Ministry of Education AcRF Tier 2 grant MOE-T2EP20122-0001 when the author was affiliated with Nanyang Technological University, where  most of this work was done.} \\ 
    Independent Researcher \\ 
    \texttt{taiwaiming2003@gmail.com}
}
\begin{document}
\maketitle

\begin{abstract}
The active regression problem of the single-index model is to solve $\min_x \lVert f(Ax)-b\rVert_p$, where $A$ is fully accessible and $b$ can only be accessed via entry queries, with the goal of minimizing the number of queries to the entries of $b$.
When $f$ is Lipschitz, previous results only obtain constant-factor approximations. This work presents the first algorithm that provides a $(1+\varepsilon)$-approximation solution by querying $\tilde{O}(d^{\frac{p}{2}\vee 1}/\varepsilon^{p\vee 2})$ entries of $b$. This query complexity is also shown to be optimal up to logarithmic factors for $p\in [1,2]$ and the $\varepsilon$-dependence of $1/\varepsilon^p$ is shown to be optimal for $p>2$.
\end{abstract}

\section{Introduction}
Active regression, an extension of the classical regression model, has gained increasing attention in recent years. In its most basic form, active regression aims to solve $\min_{x\in\R^d} \norm*{Ax-b}_p$ ($p\geq 1$), where the matrix $A\in \R^{n\times d}$ represents $n$ data points in $\R^d$ and the vector $b\in \R^n$ represents the corresponding labels. However, since label access can be expensive, the challenge is to minimize the number of entries of $b$ that are accessed while still solving the regression problem approximately. 
A typical guarantee of the approximate solution is to find $\hat x\in\R^d$ such that
\begin{equation}\label{eqn:classical_active_regression}
	\norm*{A\hat x-b}_p \leq (1+\eps)\norm*{A x^\ast - b}_p,
\end{equation}
where $x^\ast$ is the optimal solution to the regression problem, i.e., $x^* = \argmin_{x\in\mathbb{R}^d} \norm*{Ax-b}_p$. 

Research on this problem often focuses on randomized algorithms with constant failure probability, i.e., the entries of $b$ are sampled randomly (but typically not uniformly) and the output $\hat x$ satisfies the error guarantee above with probability at least a large constant. When $p=2$, \citet{CP19} showed sampling $O(d/\eps)$ entries of $b$ suffices, and when $p=1$, \citet{PPP21} showed an optimal query complexity of $\Theta(d/\eps^2)$. The case of general $p$ was later settled by \citet{musco2022active}, who showed a query complexity of $\tilde{O}(d/\eps)$ for $p\in (1,2]$ and $\tilde{O}(d^{p/2}/\eps^{p-1})$ for $p > 2$. Their proof was later refined by \citet{taisuke:thesis}.

A more general form of the regression problem is the single-index model, which, in our context, asks to solve
\[
\min_{x\in \R^d} \norm*{f(Ax) - b}_p,
\]
where $f:\R \to \R$ is a nonlinear function and, for $u\in \R^n$, we abuse the notation slightly to denote $f(u) = (f(u_1),\dots,f(u_n))$, i.e., applying $f$ entrywise. This formulation arises naturally in neural networks and has received significant recent attention (see, e.g., \citep{diakonikolas20,gajjar2023active,ICLR24} and the references therein). A neural network can be viewed as the composition of a network backbone and a linear prediction layer $x\in\R^d$ with an activation function $f$, where a typical choice of $f$ is the ReLU function. The network's prediction is given by $f(Ax)$, where the matrix $A$ is the feature matrix generated by the network backbone from the dataset. The goal is to learn the linear prediction layer $x$, which corresponds to solving the regression problem.

In this paper, we consider $f$ to be a general $L$-Lipschitz function. 
It is tempting to expect an analogous guarantee as \Cref{eqn:classical_active_regression} for the canonical active regression problem, i.e. to find $\hat{x}\in \mathbb{R}^d$ such that
\begin{align*}
	\normp{f(A\hat{x}) - b} \leq (1+\eps) \normp{f(Ax^*) - b}
\end{align*}
where $x^* = \argmin_{x\in \mathbb{R}^d} \normp{f(Ax) - b}$.
However, \citet{gajjar2023active} showed that achieving this guarantee with a $\poly(d)$ query complexity is impossible even when $f$ is a ReLU function and $\eps$ is a constant.
Hence, we can only expect a weaker guarantee.
The single-index regression problem was studied for $p=2$ in the same paper~\citep{gajjar2023active}, which showed that sampling $O(d^2/\eps^4)$ entries suffices to find an $\hat x\in\R^d$ such that
\begin{equation}\label{eqn:constant_approximation_p=2}
	\normtwo{f(A\hat{x}) - b}^2 \leq C(\normtwo{f(Ax^\ast) - b}^2 + \eps L^2 \normtwo{Ax^*}^2),
\end{equation}
where $x^\ast = \argmin_{x\in\mathbb{R}^d} \norm*{f(Ax)- b}_2$ is the minimizer and $C$ is an absolute constant.
For general $p$, \citet{ICLR24} obtained
\begin{equation}\label{eqn:constant_approximation_p}
	\normp{f(A\hat{x}) - b}^p \leq C(p)(\normp{f(Ax^\ast) - b}^p + \eps L^p \normp{Ax^*}^p)
\end{equation}
for some constant $C(p)$ depending only on $p$, using $O(d/\eps^4)$ samples when $1\leq p\leq 2$ and $O(d^{p/2}/\eps^4)$ samples when $p>2$. For $p=2$, \citet{gajjar2023improved} also independently obtained $\tilde{O}(d/\eps^4)$ samples  and, with additional co-authors, further improved the query complexity to $\tilde{O}(d/\eps^2)$ in \citep{COLT24}. 

The main drawback of the existing results for the single-index model, compared to the basic form of active regression, is that \Cref{eqn:constant_approximation_p=2} and \Cref{eqn:constant_approximation_p} only achieve constant-factor approximations,  whereas \Cref{eqn:classical_active_regression} achieves a guarantee of $(1+\eps)$-approximation. The goal of this paper is to obtain a $(1+\eps)$-approximation for the single-index model.

Note that all existing results assume access to an oracle solver for the regression problem of the form $\argmin_{x\in \mathbb{R}^d} \normp{f(A'x)-b'}$ or its regularized variant, which may not be a convex programme and where the objective function may be non-differentiable due to $f$. 
In this work, we retain the assumption of having such an oracle solver.

\subsection{Problem Definition}

Now we define our problem formally. For $L\geq 0$, let $\lip_L$ denote the set of $L$-Lipschitz functions $f$ such that $f(0) = 0$, i.e. 
\[
\lip_L := \setdef[\big]{f:\mathbb{R}\to \mathbb{R}}{f(0)=0\text{ and } \abs{f(x)-f(y)}\leq L\cdot\abs{x-y}\text{ for all $x,y\in \mathbb{R}$}}.
\]
Suppose we are given a function $f\in\lip_L$, 
a matrix $A\in\mathbb{R}^{n\times d}$ and a query access to the entries of an unknown $n$-dimensional vector $b\in \mathbb{R}^n$.
Define
\begin{align*}
	\opt
	:=
	\min_{x\in\mathbb{R}^d}\normp{f(Ax) - b}^p 
	\quad\text{and}\quad
	x^*
	:=
	\argmin_{x\in \mathbb{R}^d: \normp{f(Ax)-b}^p=\opt} \normp{Ax}^p.
\end{align*}
That is, if there are multiple minimizers $x^*$, we choose an arbitrary one that minimizes $\norm*{Ax^\ast}_p$.
As noted in~\citep{COLT24}, there is no loss of generality in assuming $f(0) = 0$ because one can shift both $f(x)$ and $b$ by $f(0)$.
For an error parameter $\eps > 0$, our goal is to find an $\hat{x}\in\R^d$ such that 
\[
\normp{f(A\hat{x}) - b}^p
\leq (1+\eps)\opt + C\eps \normp{Ax^*}^p,
\]
where $C$ is some constant that depends only on $L$ and $p$ while the number of queries to the entries of $b$ is minimized.
Therefore, we would like to ask:
\begin{quote}
	\emph{What is the minimum number (in terms of $d$ and $\eps$) of queries to the entries of $b$ needed in order to achieve this goal?}
\end{quote}
We will solve this problem in this paper.

\subsection{Our Results}

We first present the main result of this paper that querying $O(d/\eps^2 \poly\log n)$ entries of $b$ is sufficient for a $(1+\eps)$-approximation when $1\leq p\leq 2$ and $O(d^{p/2}/\eps^p \poly\log n)$ entries when $p > 2$. Our result achieves the same query complexity (up to logarithmic factors) for the constant-factor approximation algorithm in~\citep{COLT24} when $p=2$.

\begin{theorem}\label{thm:main_upper}
	There is a randomized algorithm, when given $A\in \R^{n\times d}$, $b\in \R^n$, $f\in \lip_L$ and an arbitrary sufficient small $\eps > 0$,
	with probability at least $0.9$, makes $O\big(d^{1\vee\frac{p}{2}}/\eps^{2\vee p} \cdot \poly\log(d/\eps)\big)$ queries to the entries of $b$ and returns an $\hat{x}\in \R^d$ such that
	\begin{equation}\label{eq:general_error_guarantee}
		\normp{f(A\hat{x}) - b}^p
		\leq
		\opt + \eps(\opt + L^p\normp{Ax^*}^p).
	\end{equation} 
	The hidden constant in the bound on number of queries depends on $p$ only.
\end{theorem}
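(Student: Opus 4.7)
The plan is to adapt the $L_p$-Lewis-weight importance sampling framework of classical active regression to the single-index setting. Concretely, compute the $L_p$ Lewis weights $\tau_i$ of $A$; sample each row $i$ independently with probability $q_i \propto \tau_i$ so that the total expected sample count is $m = \tilde O(d^{1\vee p/2}/\varepsilon^{2\vee p})$; rescale by $q_i^{-1/p}$ to obtain a sampling operator $S$; query the corresponding entries of $b$; and invoke the oracle solver on $\argmin_x \lVert f(SAx) - Sb\rVert_p^p$ to produce $\hat x$. Since one cannot rule out a spurious $\hat x$ with very large $\lVert A\hat x\rVert_p$ using only the Lipschitz bound (which controls $f(Ax)-f(Ay)$ in one direction only), I would first run the known constant-factor algorithm of~\citep{COLT24,ICLR24} to obtain a $\tilde x$ with $\lVert A\tilde x\rVert_p = O(\lVert Ax^*\rVert_p)$ using $o(m)$ additional queries, and then restrict the sampled optimisation to the ball $\mathcal{C} = \{x : \lVert A(x-\tilde x)\rVert_p \leq R\}$ with $R = O(\lVert Ax^*\rVert_p + \OPT^{1/p}/L)$, which contains $x^*$.

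The core of the proof is a uniform-in-$x$ concentration: with probability $\geq 0.9$,
\[
\lVert f(SAx) - Sb\rVert_p^p = \lVert f(Ax) - b\rVert_p^p \pm \varepsilon\bigl(\lVert f(Ax) - b\rVert_p^p + L^p\lVert Ax^*\rVert_p^p\bigr),\quad\text{for all } x\in\mathcal{C}.
\]
Once this is established, applying it at both $\hat x$ and $x^*$ and using the fact that $\hat x$ beats $x^*$ on the sampled loss gives \Cref{eq:general_error_guarantee} after absorbing constants into $\varepsilon$.

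To prove the concentration, decompose $f(Ax) - b = r + g(x)$ where $r = f(Ax^*) - b$ is fixed with $\lVert r\rVert_p^p = \OPT$, and $g(x) = f(Ax) - f(Ax^*)$ is entrywise bounded by $L\lvert A_i^\top(x-x^*)\rvert$. A single Bernstein estimate, using the Lewis-weight sensitivity bound on $\lvert r_i\rvert^p/q_i$, controls $\lVert Sr\rVert_p^p$ versus $\lVert r\rVert_p^p$. For the $x$-dependent $g$-part, run a chaining argument over geometric scales of an $\eta$-net in the $\lVert A\cdot\rVert_p$-pseudometric on $\mathcal{C}$: at each scale, the $L_p$-Lewis-weight subspace embedding for $A$ with $m$ samples (to the appropriate $\varepsilon$-accuracy) combined with the entrywise Lipschitz bound converts the embedding for $A$ into uniform control of $\lVert Sg(x)\rVert_p^p$ versus $\lVert g(x)\rVert_p^p$. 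The cross terms arising from expanding $|r_i + g(x)_i|^p$ entrywise are absorbed using $|a+b|^p \leq (1+\varepsilon)|a|^p + O_p(\varepsilon^{1-p})|b|^p$, and summed against the same Lewis-weight controls.

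The main obstacle is the chaining for the $g$-part, because $g(x)$ is not itself a vector in $\mathrm{colspan}(A)$: its entries depend on the nonlinear $f$, so off-the-shelf $L_p$-subspace embedding results for $A$ do not directly apply to $g(x)$. The workaround is to dominate $g(x)$ entrywise by $D_x A(x - x^*)$ for a data-dependent diagonal $D_x = \diag(d_{x,i})$ with $\lvert d_{x,i}\rvert \leq L$, and to uniformise over the (effectively continuous) family of such diagonals by incorporating them into the chaining net. This inflates the per-scale covering number and, I expect, is exactly what produces the extra factor of $1/\varepsilon$ in the query complexity compared with the classical active $L_p$-regression bound $\tilde O(d^{p/2}/\varepsilon^{p-1})$ of~\citet{musco2022active}.
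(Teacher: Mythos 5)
Your high-level decomposition into a fixed part $r = f(Ax^*) - b$ and a shifting part $g(x) = f(Ax) - f(Ax^*)$ is in the same spirit as the paper's reference-point trick, and your identification of the chaining-for-$g$ step as the crux is correct. However, there are three concrete gaps.

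First, your claim that ``a single Bernstein estimate, using the Lewis-weight sensitivity bound on $|r_i|^p/q_i$, controls $\|Sr\|_p^p$ versus $\|r\|_p^p$'' is not valid: the Lewis-weight sensitivity bound (Lemma~\ref{lem:lewis_weight_properties}(d)) applies only to vectors \emph{in the column space of $A$}, and $r = f(Ax^*)-b$ is not such a vector. In the worst case $r$ can be supported on a single coordinate of near-zero Lewis weight, and then $(1\pm\eps)$-concentration of $\|Sr\|_p^p$ would need $\Omega(n)$ samples. The paper deliberately does \emph{not} attempt such a tight control of $\|Sr\|_p^p$; instead it only invokes Markov's inequality to get $\|S\Lambda(f(A\bar x)-b)\|_p^p = O(\OPT)$ with constant probability (\Cref{eq:markov_opt}), and then argues about the reference-subtracted quantity $\Err(x)$, which \emph{is} entrywise Lipschitz-dominated by a vector in $\colspace(A)$. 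You recover exactly this structure only if you never demand concentration for $\|Sr\|_p^p$ alone.

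Second, your plan relies on obtaining a warm start $\tilde x$ with $\|A\tilde x\|_p = O(\|Ax^*\|_p)$ from the prior constant-factor algorithms of~\citep{COLT24,ICLR24}. Those algorithms do not deliver this. They bound the objective $\|f(A\tilde x)-b\|_p^p$, or in the case of \citet{COLT24} the quantity $\|f(A\tilde x)-f(Ax^*)\|_p^p$, but the Lipschitz inequality only goes the other way: small $\|f(A\tilde x)-f(Ax^*)\|_p$ does not control $\|A(\tilde x - x^*)\|_p$ (think of $f$ flat on a half-line). The regularized solve gives only $\|A\tilde x\|_p^p \lesssim \OPT/\tau + \|Ax^*\|_p^p$, which for small $\tau$ is a radius $\Theta(1/\eps)$ larger than you need; this is exactly the bottleneck that forces the $1/\eps^4$ dependence of prior work. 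The paper's fix is not a single warm start but an \emph{iterative bootstrapping} of the radius: starting from $R_0 \sim \OPT/\eps + \|Ax^*\|_p^p$, it applies the improved concentration bound, re-derives a tighter bound on $\|A\hat x\|_p^p$ via \Cref{eq:radius_recur}, shrinks $R$, and repeats $\Theta(\log\log(1/\eps))$ times, together with the refinement of restricting $T$ by $\|f(Ax)-f(Ax^*)\|_p^p\lesssim F$ so the chaining error scales like $\sqrt{FR}$ rather than $R$. Your proposal contains neither the iteration nor the $F$-restriction, and without them a single direct pass will land at $1/\eps^4$ (or at best $1/\eps^3$), not $1/\eps^2$.

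Third, the ``incorporate the diagonals $D_x$ into the chaining net'' step would be fatal: the set of diagonal matrices with $|d_{x,i}|\leq L$ lives in $\R^n$, and any net of it contributes $\Omega(n)$ to the log-covering number, destroying the polynomial-in-$d$ query bound. The paper avoids this: in Lemma~\ref{lem:metric_transform_combined} the chaining pseudometric $\rho(x,x')$ is bounded directly, via the entrywise Lipschitz bound $|\Lambda_{ii}(f(Ax)-f(Ax'))_i|\le|\Lambda_{ii}(A(x-x'))_i|$ together with Lewis-weight estimates, by a weighted $\ell_q$ norm of $W^{-1/p}\Lambda A(x-x')$, a quantity living in $\colspace(A)$; the covering numbers of $B_{w,p}(\colspace(W^{-1/p}\Lambda A))$ in the $\|\cdot\|_{w,q}$ metric (Lemma~\ref{lem:entropy_estimates_consolidated}) have no $n$-dependence beyond a logarithm. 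Consequently, your intuition that the extra $1/\eps$ factor ``is exactly what produces the extra covering of diagonals'' is also not how the paper's $1/\eps^2$ emerges: the extra factor traces to the initial radius $R_0\sim\OPT/\eps$, which the bootstrapping removes, not to the chaining net. Finally, you do not address how to suppress the $\log n$ factors from the entropy estimates; the paper does this with a two-stage sampling $S S^\circ$ and a proxy point $x^\circ$ (Section~\ref{sec:log n factors}), which requires a separate argument.
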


Recall that in the canonical active regression problem, i.e.\ $f(x) = x$, the query complexity is $\tilde{\Theta}(d/\eps)$ when $1 < p\leq 2$ and $\tilde{\Theta}(d^{p/2}/\eps^{p-1})$ when $p>2$~\citep{musco2022active}. We can show that accommodating a general $f$ pushes up the $\eps$-dependence to $1/\eps^2$ and $1/\eps^p$, respectively. In particular, for $1\leq p\leq 2$, we show a lower bound of $\Omega(d/\eps^2)$ queries, which suggests that our upper bound is tight up to logarithmic factors. The following are formal statements on our lower bound.

\begin{theorem}\label{thm:main_lower}
	Suppose that $p\geq 1$, $\eps > 0$ is sufficiently small and $n\gtrsim (d\log d)/\eps^2$. Any randomized algorithm that, given $A\in \R^{n\times d}$, a query access to the entries of an unknown $b\in \R^n$ and $f\in \lip_1$, outputs a  $d$-dimensional vector $\hat{x}$ such that \Cref{eq:general_error_guarantee} holds with probability at least $4/5$ must make $\Omega(d/\eps^2)$ queries to the entries of $b$.
\end{theorem}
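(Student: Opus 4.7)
The plan is to prove the lower bound via a reduction from a $d$-fold coin-flipping distinguishing problem: given $d$ independent coins with biases in $\{\tfrac12-\eps,\tfrac12+\eps\}$, recovering a constant fraction of these biases correctly requires $\Omega(d/\eps^2)$ coin flips by standard information-theoretic arguments (KL divergence plus Pinsker's inequality, combined with Fano-type direct-sum reasoning to aggregate across blocks). Our aim is to embed $d$ such hard coins into a single regression instance and show that any $\hat x$ satisfying~\Cref{eq:general_error_guarantee} implicitly decodes most of the biases.

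\textbf{Construction.} Partition $[n]$ into disjoint blocks $B_1,\ldots,B_d$ of size $m=n/d\gtrsim\log(d)/\eps^2$. Take $A\in\R^{n\times d}$ to be block-diagonal, with column $j$ supported on $B_j$ and normalized so that $\|Ae_j\|_p=1$. Draw $\beta=(\beta_1,\ldots,\beta_d)\in\{\pm1\}^d$ uniformly, and for each $j$ independently, let $b_i$ for $i\in B_j$ be iid samples from a distribution $D_{\beta_j}$ whose mean differs between $\beta_j=+1$ and $\beta_j=-1$ by $\Theta(\eps)$. The choice of $f\in\lip_1$ is adversarial and depends on $p$: for $p=1$ take $f$ to be the identity, which directly recovers the canonical $\ell_1$ active-regression lower bound of \citet{PPP21}; for $p>1$ take $f$ to be a nonlinear Lipschitz clipping (with one saturated side) tailored to the per-block loss so that the sensitivity to $\beta_j$ is concentrated rather than distributed across a wide tolerance interval.

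\textbf{Analysis (three steps).}
(i) Compute the minimizer: show $x^*_j=\psi(\beta_j)$ for an explicit function $\psi:\{\pm1\}\to\R$, and write out $\OPT$ and $\|Ax^*\|_p^p$ in closed form; both should be of order $md\cdot\Theta(1)$ (up to constants depending on $p$).
(ii) Translate the guarantee into per-block precision: show that any $\hat x$ satisfying \Cref{eq:general_error_guarantee} must have $\hat x_j$ close enough to $x^*_j$ that the induced decoded bit $\hat\beta_j:=\mathrm{sgn}(\hat x_j-\tfrac12(\psi(+1)+\psi(-1)))$ agrees with $\beta_j$ on at least a $(1-c)$-fraction of blocks, with probability $\geq4/5$.
(iii) Conclude by the coin-flipping lower bound: producing such a $\hat\beta$ forces the algorithm to gather $\Omega(1/\eps^2)$ samples on a constant fraction of the blocks, for a total of $\Omega(d/\eps^2)$ queries to $b$.

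\textbf{Main obstacle.} The heart of the argument is step (ii). For $p\in(1,2]$, a purely linear $f$ would only force a per-block tolerance on $\hat x_j$ of order $\sqrt\eps$ (because the additive slack $\eps\|Ax^*\|_p^p$ is generous at the $\ell_p$ scale), which suffices merely to distinguish $\sqrt\eps$-biased coins and yields only the known canonical bound $\Omega(d/\eps)$. To break this $\sqrt\eps$-barrier and obtain precision at the genuine $\eps$-scale requires exploiting the nonlinearity of $f$: one must engineer $f$ so that the per-block loss landscape around $x^*_j$ responds sharply to a flip of $\beta_j$ while $f$ remains globally $1$-Lipschitz, thereby amplifying the per-block cost of misidentification relative to the available slack. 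Carefully balancing $\OPT$, $\|Ax^*\|_p^p$, and the local geometry of $f$ so that this amplification holds uniformly across blocks—and handling the separate regime $p>2$ in which the slack and the $p$-th power scaling interact differently—is the main technical work of the proof.
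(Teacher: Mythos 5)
Your roadmap matches the paper's at the structural level (block-diagonal $A$, a $d$-fold coin-distinguishing problem, a saturated Lipschitz $f$, per-block sign decoding, direct-sum aggregation), but the step you explicitly flag as the "heart of the argument" — step (ii), showing that the $\eps(\OPT + \|Ax^*\|_p^p)$ slack only buys $O(d)$ wrong signs rather than $\Theta(d)$ imprecise coordinates at the $\sqrt\eps$ scale — is left entirely open, and that is where the whole proof lives. To close it, the paper commits to a very concrete geometry. The hard pair is
\[
u=\begin{bmatrix}3\\2\end{bmatrix},\quad v=\begin{bmatrix}2\\3\end{bmatrix},\quad
a_{2i-1}=1,\; a_{2i}=-1,\quad
f(x)=\begin{cases}2 & x\le -6\\ -x-4 & -6\le x\le -4\\ 0 & -4\le x\le 0\\ x/2 & x\ge 0\end{cases},
\]
so that the planar curve $x\mapsto(f(x),f(-x))$ passes exactly through $u$ at $x=-6$ and $v$ at $x=6$, and — this is the crucial verified fact, your step (i) — the per-block loss is minimized over $x\le 0$ at $x=-6$ and over $x\ge 0$ at $x=6$ (\Cref{lem:p<=2_LB_aux}). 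With $m\sim(\log d)/\eps^2$ pairs per block and biases $\tfrac12\pm\eps$, a Chernoff bound gives that the correct sign yields per-block loss $\approx m-2\eps m$ while the wrong sign yields $\approx m+2\eps m$, a gap of $\Theta(\eps m)$; since $\OPT\sim md$ and $\|Ax^*\|_p^p\sim md\cdot 6^p$, the total additive slack is $O(\eps m d)$, which can absorb at most $O(d)$ wrong signs. Without exhibiting an $f$ and a pair $(u,v)$ with this "locus passes through both targets and is one-sidedly monotone away from them" property, the $\sqrt\eps$-barrier you name is not actually broken. Also note that your $p=1$ suggestion (identity $f$) is a distraction: the paper uses the same nonlinear $f$ for all $p\ge 1$, and indeed the theorem statement covers $p\ge 1$ uniformly (the tightness claim is for $p\in[1,2]$), whereas for $p\in(1,2]$ the identity only gives $\Omega(d/\eps)$. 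The separate $\Omega(d/\eps^p)$ bound for $p>2$ is a genuinely different construction (planted spike), covered by \Cref{thm:main_lower_p>2}, not by a variant of this one.
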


\begin{theorem}\label{thm:main_lower_p>2}
	Suppose that $p > 2$, $\eps > 0$ is sufficiently small and $n\gtrsim d/\eps^p$. Any randomized algorithm that, given $A\in \R^{n\times d}$, a query access to the entries of an unknown $b\in \R^n$ and $f\in \lip_1$, outputs a  $d$-dimensional vector $\hat{x}$ such that \Cref{eq:general_error_guarantee} holds with probability at least $4/5$ must make $\Omega(d/\eps^p)$ queries to the entries of $b$.
\end{theorem}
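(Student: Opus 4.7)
The plan is to prove the lower bound by reducing to $d$ independent one-dimensional sub-problems via a block-diagonal construction, and then establishing that each sub-problem requires $\Omega(1/\eps^p)$ queries via a hypothesis-testing argument tailored to the $L_p$ geometry for $p>2$.

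First I would take $A\in\R^{n\times d}$ to be block-diagonal with $d$ blocks of size $m := \Theta(1/\eps^p)$, where each block is the all-ones vector $\mathbf{1}_m$; this is feasible under the assumption $n\gtrsim d/\eps^p$. With this choice, $Ax = (x_1\mathbf{1}_m,\ldots,x_d\mathbf{1}_m)$, so both $\normp{f(Ax)-b}^p$ and $\normp{Ax}^p$ decompose as sums of $d$ independent terms, one per block-coordinate. An error guarantee of the form \Cref{eq:general_error_guarantee} then forces the algorithm to solve a $(1+\eps)$-type approximation in each block (otherwise the overall $\eps$-slack is used up by a single bad coordinate), so via Yao's minimax principle and a direct-sum / Fano argument it suffices to show that any algorithm achieving the guarantee for a single 1-D block with constant success probability needs $\Omega(1/\eps^p)$ queries to $b_{\text{block}}$.

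For the 1-D hard instance I would fix a particular $f\in\lip_1$ (for example the identity, or ReLU if the nonlinearity turns out to help) and use a planting-style two-point method: under $H_0$ the block-vector $b_{\text{block}}$ is drawn from a distribution $P_0$ (e.g. an unbiased symmetric distribution on $\{-v,+v\}^m$), and under $H_1$ from a nearby distribution $P_1$ obtained by planting a random sparse perturbation of "density" $\Theta(\eps^p)$. Standard counting / LeCam arguments then show that any query algorithm that makes $o(1/\eps^p)$ queries cannot distinguish $H_0$ from $H_1$ with advantage larger than, say, $1/10$. The final piece is to show that the single-index optimum $x^*$ under $H_0$ and $H_1$ differ by enough that outputting the $H_0$-optimum under $H_1$ produces a cost exceeding $(1+\eps)\opt + \eps L^p\normp{Ax^*}^p$; this is where the $p$-th power structure is essential, because for $p>2$ a sparse $\eps^p$-fraction perturbation produces an $\Omega(\eps)$-fractional change in the objective value relative to $\opt + \normp{Ax^*}^p$, which is what the guarantee is sensitive to.

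The main obstacle is calibrating the construction so that one simultaneously has (i) $H_0$ and $H_1$ being $\Theta(1/\eps^p)$-query indistinguishable, and (ii) the gap between the respective optima being genuinely large enough that no single output can satisfy the guarantee under both. In particular, naive Bernoulli-bias constructions with bias $\delta$ give only an $\Omega(1/\delta^2) = \Omega(1/\eps)$ lower bound because the $\ell_p$-objective change is of order $\delta^2\cdot\opt$, which is absorbed by the $\eps\opt$ slack already at $\delta=\sqrt{\eps}$. The $p>2$ lower bound therefore requires a genuinely $\ell_p$-sensitive planting (a sparse set of size $\sim \eps^p m$ of "heavy" coordinates), combined with a careful choice of $v$ and---if needed---a nonlinear $f$ so that $\normp{Ax^*}^p$ is controlled and the additive term in the guarantee does not silently absorb the planted signal's contribution. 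Once these parameters are set, the proof concludes by combining the per-block indistinguishability with the direct-sum reduction above to yield the claimed $\Omega(d/\eps^p)$ bound.
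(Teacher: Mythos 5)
Your high-level plan---a block-diagonal $A$ with $d$ independent one-dimensional sub-problems of size $m=\Theta(1/\eps^p)$, a direct-sum argument, and a per-block test based on planting a single heavy coordinate---matches the paper's strategy, and your observation that a $\Theta(1/\eps)$-height entry contributes $\Theta(1/\eps^p)=\Theta(m)$ to the $\ell_p^p$-objective is exactly the right calibration. However, the specific per-block construction you sketch has two gaps that would break the proof.

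The first is taking each diagonal block of $A$ to be the all-ones vector $\mathbf{1}_m$. This makes the per-block objective $\sum_{j}|f(x)-b_j|^p$ completely permutation-invariant in $b$, so no ``where is the plant'' hypothesis can possibly be recovered from the solution $\hat x$ to the regression. The paper instead sets the block to $a$ with $a_j=+1$ on the first half of indices and $a_j=-1$ on the second half, and the two hypotheses are ``heavy entry lives in the first half'' versus ``second half.'' The sign asymmetry in $a$ is what makes the location of the plant express itself through the sign of the optimal $x$, which in turn is what lets the reduction read off the hypothesis from $\hat x$.

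The second, and more fundamental, is your treatment of $f$ as a free knob (``identity, or ReLU if the nonlinearity turns out to help''). It is not free: the paper uses a two-sided saturating $f$ (equal to $0$ on $(-\infty,0]$, $x$ on $[0,1]$, $1$ on $[1,\infty)$), and this choice is load-bearing. Paired with the signed $a$, it makes $f(ax)$ jump between the two opposite $\{0,1\}^{2m}$-corners $(\mathbf{1}_m,\mathbf{0}_m)$ at $x=1$ and $(\mathbf{0}_m,\mathbf{1}_m)$ at $x=-1$, and the heavy entry is either near-hit or badly missed depending on the sign. That discrete switch is what produces a \emph{first-order} objective gap $\Theta(\eps\cdot\opt)$ between the correct and incorrect sign of $\hat{x}$. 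With $f$ the identity (even with the paper's signed $a$), the minimizer under each hypothesis is only an $O(\eps)$ perturbation of $x=0$, and the cost penalty for outputting the wrong-sign minimizer is second-order, namely $\Theta(\eps^2 m)$, which is swallowed by the $\eps\cdot\opt=\Theta(\eps m)$ slack in the guarantee. So the guarantee never forces the algorithm to learn the hypothesis, and no lower bound follows. A ReLU $f$ saturates only on one side and would also have to be re-checked; it is not obviously sufficient.

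Finally, a small but real imprecision: the parenthetical ``otherwise the overall $\eps$-slack is used up by a single bad coordinate'' is not how the direct-sum step goes. The slack in \Cref{eq:general_error_guarantee} scales like $\eps\cdot d\cdot(\text{per-block quantity})$, so a single wrong block is always affordable; the argument is rather that more than $\Theta(d)$ wrong blocks would exceed the slack, which is why one must inflate the per-block gap by a constant (the $K$ appearing in the paper's \Cref{lem:p>2_main_claim}) and then apply a counting lemma such as \Cref{lem:distinguishing_dist}.
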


\section{Proof Overview}

In this section, we will provide a proof overview of our theorems. 

\subsection{Upper Bound} \label{sec:upper_overview}

\paragraph{General Observations}
We follow the usual ``sample-and-solve'' paradigm as in many previous active regression algorithms~\citep{CP19,musco2022active,gajjar2023active,CLS2022,ICLR24}. 
Querying entries of $b$ can be viewed as multiplying $b$ with a sampling matrix $S$, which is a diagonal matrix with sparse diagonal entries; the nonzero entries in $Sb$ correspond to the queried entries of $b$. 
Hence, we would like to minimize the number of nonzero diagonal entries in $S$.
The same sampling matrix $S$ is also applied to $f(Ax)$, leading to a natural attempt at solving the optimization problem $\min_{x\in \R^d} \normp{S(f(Ax) - b)}^p$. 
However, we preview that this optimization problem is not the one we seek and we will provide more explanation on how to modify it.

The natural question is how to design the sampling matrix $S$. In all previous work, $S$ is a row-sampling matrix with respect to Lewis weights; see the statement of Lemma~\ref{lem:SE}. 
In this paper, we adopt the same approach.
This means that (i) (unbiased estimator) $\E \normp{Sv}^p = \normp{v}^p$ for each $v\in\R^n$ and (ii) (subspace embedding) when $S$ has sufficiently many nonzero rows, $\normp{SAx} \approx \normp{Ax}$ for all $x\in\R^d$. Note that the query complexity for active regression is usually higher than necessary for subspace embedding alone.


\paragraph{Formulating the Concentration Bounds}
Let $\hat{x} = \argmin_{x\in \mathbb{R}^d} \normp{S(f(Ax) - b)}^p$. 
Although $\E\normp{S(f(Ax) - b)}^p = \normp{f(Ax) - b}^p$ for \emph{each}  $x\in \mathbb{R}^d$, it is unclear what $\E\normp{S(f(A\hat{x}) - b)}^p$ is 
since $\hat{x}$ depends on $S$. 
To address this, we instead argue that the sampling error $\abs*{\normp{S(f(Ax) - b)}^p - \normp{f(Ax) - b}^p}$ is small for \emph{all} $x\in T$, where $T$ is a ``small'' bounded domain containing $\hat{x}$. 
Here, the small size of $T$ is critical for a small error bound since it controls the number of points in $T$ at which the error needs to be  small  when applying Dudley's integral, a classical extension of the net argument.
To further reduce the variance,
we choose $\normp{S(f(Ax^\ast) - b)}^p - \normp{f(Ax^\ast) - b}^p$ as a reference point and argue that the difference
\begin{align*}
	\Err(x) = \abs{(\normp{S(f(Ax) - b)}^p - \normp{f(Ax) - b}^p ) - (\normp{S(f(Ax^*) - b)}^p - \normp{f(Ax^*) - b}^p)}.\numberthis\label{eq:concen_intro}
\end{align*}
is uniformly small over $T$. 
Note that the reference point can be $\normp{S(f(A\bar{x}) - b)}^p - \normp{f(A\bar{x}) - b}^p$ for any $\bar{x}\in \R^d$.
This approach has been employed by \citet{taisuke:thesis} for 
the canonical active regression (i.e.\xspace $f(x) = x$). 
However, for general Lipschitz functions $f$, 
a key challenge is to identify an appropriate domain $T$, which we shall discuss below.

\paragraph{Regularized Regression}
As previously noted, the optimization problem $\min_{x\in \R^d} \normp{S(f(Ax) - b)}^p$ is not the one we are seeking.
It turns out that there is a fundamental challenge to argue that $\hat{x} = \argmin_{x\in \mathbb{R}^d} \normp{S(f(Ax) - b)}^p$ satisfies the desired guarantee~\Cref{eq:general_error_guarantee}.
In the canonical active regression, 
one can show that
\begin{equation}\label{eq:standard_reg_radius}
	\normp{A(\hat{x} - x^*)} \lesssim \normp{Ax^* - b}.
\end{equation}
This suggests that $\hat{x}\in T$ for $T = \setdef{x\in\mathbb{R}^d}{\normp{A(x-x^*)} \lesssim \normp{Ax^* - b}}$, which is a ``\emph{small}'' bounded region.
Unfortunately, for a Lipschitz function $f$, it is not clear how to obtain an analogy to \Cref{eq:standard_reg_radius} and thus a bounded region $T$ containing $\hat{x}$. 
Hence, we still seek a bound on the norm $\normp{A\hat{x}}^p$ to keep $T$ bounded and ideally small.
For constant-factor approximations, previous work \citep{gajjar2023active,ICLR24} restrict the domain in the regression problem and solve $\min_{x\in T} \normp{S(f(Ax) - b)}^p$ for some ``small" $T$, but this leads to a poor $\eps$-dependence of $1/\eps^4$ in query complexity. 

To improve the $\eps$-dependence, one can consider a regularized regression problem
\[
\min_{x\in \mathbb{R}^d}\ \normp{S(f(Ax) - b)}^p + \tau \cdot \normp{Ax}^p,
\]
where $\tau > 0$ is a regularization parameter. This approach, as demonstrated by \citet{COLT24}, improves the $\eps$-dependence for constant-factor approximations when $p=2$, and will therefore be adopted in this paper. 
To ease the notation, assume that the Lipschitz constant $L=1$ from now on.

An important question is how to choose the regularization parameter $\tau$. 
Recall that we want the sampling error $\Err(x)$ (defined in \Cref{eq:concen_intro}) to be small, ideally close to $0$, when $x=\hat{x}$. Then
\begin{align*}
	\normp{f(A\hat{x}) -b}^p - \opt
	& =
	\normp{f(A\hat{x}) -b}^p - \normp{f(Ax^*) - b}^p \\
	& \leq 
	\normp{S(f(A\hat{x}) -b)}^p - \normp{S(f(Ax^*) - b)}^p + \text{error terms}  \\
	& \leq
	\tau \cdot  \normp{Ax^*}^p + \text{error terms},
\end{align*}
where the last inequality follows from the optimality of $\hat{x}$.
The desired guarantee \Cref{eq:general_error_guarantee} has an additive error $\eps\normp{Ax^*}^p$, indicating that $\tau$ should be set smaller than $\eps$.
On the other hand, the main purpose of regularization is to bound $\normp{A\hat{x}}^p$.
By the optimality of $\hat{x}$ and rearranging the terms, we have
\begin{align*}
	\normp{A\hat{x}}^p
	& \leq
	\frac{1}{\tau}\cdot \bigg(\normp{S(f(Ax^*) - b)}^p - \normp{S(f(A\hat{x}) - b)}^p\bigg) + \normp{Ax^*}^p.
\end{align*}
This suggests that $\tau$ should be set as large as possible for a better bound on $\normp{A\hat{x}}^p$.
Therefore, we set $\tau = \eps$ and the optimization problem becomes
\begin{align*}
	\min_{x\in \mathbb{R}^d} \normp{S(f(Ax) - b)}^p + \eps \cdot \normp{Ax}^p.\numberthis\label{eq:overview_reg_prob}
\end{align*}

\paragraph{Bounding the Error}

To avoid overloading our notation, we focus on $1\leq p\leq 2$. A similar argument works for $p\geq 2$. Recall that our intention is to bound
\begin{align*}
	\sup_{x\in T} \Err(x),
\end{align*}
where $\Err(x)$ is the sampling error defined in \Cref{eq:concen_intro}.
The first issue we need to resolve is defining the domain $T$.
By the optimality of $\hat{x}$ in \Cref{eq:overview_reg_prob} and rearranging the terms, we have
\begin{align*}
	\normp{A\hat{x}}^p
	& \leq
	\frac{1}{\eps}\bigg(\normp{S(f(Ax^*) - b)}^p - \normp{S(f(A\hat{x}) - b)}^p\bigg) + \normp{Ax^*}^p \numberthis\label{eq:radius_recur}\\
	& \leq
	\frac{1}{\eps}\normp{S(f(Ax^*) - b))}^p + \normp{Ax^*}^p \\
	& \lesssim
	\frac{1}{\eps}\opt + \normp{Ax^*}^p \quad \text{by Markov inequality.} \numberthis\label{eq:first_attemp_rad_cal}
\end{align*}
Hence, we can set
\begin{align*}
	R
	& =
	\frac{1}{\eps}\opt + \normp{Ax^*}^p
	\quad \text{and}\quad
	T
	=
	\setdef{x\in\R^d}{\normp{Ax}^p \lesssim R}.
\end{align*}
Then $\hat{x}\in T$.
Now, while we omit the details, we obtain the following concentration bound in \Cref{eq:main_concen_first_attempt} by following the standard technique of upper bounding the supremum of a stochastic process using Dudley's integral, which has been the central tool in the work on subspace embeddings~\citep{BLM89,LT91,CP15} and previous work on active regression~\citep{musco2022active,CLS2022,ICLR24}.
In short, when the number of queries is $m$, we can obtain that with probability at least $1-\delta$,
\begin{align*}
	\sup_{x\in T}\Err(x) \lesssim
	\sqrt{\frac{d\poly(\log n, \log \delta^{-1})}{m}}\cdot R. \numberthis\label{eq:main_concen_first_attempt}
\end{align*}
We preview here that this concentration bound will yield a weaker result, but it serves to illustrate the key idea and will guide us in refining the analysis later. 

Suppose that $m\sim \frac{d}{\eps^4}\poly\log n$.
By plugging $m$ and $R$ into \Cref{eq:main_concen_first_attempt}, we have with constant probability, 
\begin{align*}
	\sup_{x\in T}\Err(x) \lesssim
	\eps(\opt + \eps \normp{Ax^*}^p).
\end{align*}
Since $\hat{x}\in T$, we have
\begin{align*}
	\normp{f(A\hat{x}) - b}^p - \opt
	& \lesssim
	\normp{S(f(A\hat{x}) - b)}^p - \normp{S(f(Ax^*) - b)}^p + \eps (\opt + \eps\normp{Ax^*}^p) \\
	& \leq
	\eps \normp{Ax^*}^p + \eps (\opt + \eps\normp{Ax^*}^p) \quad\text{by the optimality of $\hat{x}$ in \Cref{eq:overview_reg_prob}} \\
	& \lesssim 
	\eps(\opt + \normp{Ax^*}^p), \numberthis\label{eq:first_attemp_err_cal}
\end{align*}
which achieves the desired guarantee \Cref{eq:general_error_guarantee} by a rescaling of $\eps$.

\paragraph{Attempt to Improve}
The reason we previously set $m\sim \frac{d}{\eps^4}\poly\log n$ is because $R = \frac{1}{\eps}\opt + \normp{Ax^*}^p$ and we need the square root term in \Cref{eq:main_concen_first_attempt} to be $\eps^2$ so that the overall error is at most $\eps(\opt + \normp{Ax^*}^p)$.
Indeed, when we compare to the canonical case of $f(x)=x$ and bound the radius of the domain in \Cref{eq:standard_reg_radius}, this large $R$ is the main reason why extra factors of $\frac{1}{\eps}$ are needed.

Notice that the term $\normp{S(f(Ax) - b)}^p - \normp{S(f(Ax^*) - b)}^p$ in $\Err(x)$ also appears in \Cref{eq:radius_recur}. This means that we can plug the bound on $\Err(x)$ into \Cref{eq:radius_recur} and improve the radius $R$.

For example, let $m\sim \frac{d}{\eps^3}\poly\log n$. Here, the exponent $3$ can be replaced by any value \emph{strictly} larger than $2$ and we simply choose this number for demonstration purposes.
By plugging $m$ and $R$ into \Cref{eq:main_concen_first_attempt}, we have with constant probability
\begin{align*}
	\sup_{x\in T}\Err(x) 
	&=  
	\sup_{x\in T}\abs{(\normp{S(f(Ax) - b)}^p - \normp{S(f(Ax^*) - b)}^p) - (\normp{f(Ax) - b}^p - \normp{f(Ax^*) - b}^p) } \\
	& \lesssim
	\eps^{\frac{1}{2}}\opt + \eps^{\frac{3}{2}} \normp{Ax^*}^p.\numberthis\label{eq:main_concen_first_improve}
\end{align*}
Since $\hat{x}\in T$, by plugging \Cref{eq:main_concen_first_improve} into \Cref{eq:radius_recur} and a similar calculation to that in \Cref{eq:first_attemp_rad_cal}, we have
\begin{align*}
	\normp{A\hat{x}}^p
	& \lesssim
	\frac{1}{\eps^{\frac{1}{2}}}\opt + \normp{Ax^*}^p
	\quad \text{which is smaller than $R$.}
\end{align*}
If we set 
\begin{align*}
	R'
	& =
	\frac{1}{\eps^{\frac{1}{2}}}\opt + \normp{Ax^*}^p
	\quad\text{and}\quad
	T'
	=
	\setdef{x\in\R^d}{\normp{Ax}^p \lesssim R'},
\end{align*}
then $\hat{x}\in T'$.
By plugging $m$, $R'$ and $T'$ into \Cref{eq:main_concen_first_attempt}, we have with constant probability,
\begin{align*}
	\sup_{x\in T'}\Err(x)  \lesssim
	\eps\opt + \eps^{\frac{3}{2}} \normp{Ax^*}^p.
\end{align*}
This is an improved error bound compared to \Cref{eq:main_concen_first_improve}. It follows from $\hat{x}\in T'$ and a similar calculation to that in \Cref{eq:first_attemp_err_cal} that
\begin{align*}
	\normp{f(A\hat{x}) - b}^p - \opt
	& \lesssim 
	\eps(\opt + \normp{Ax^*}^p),
\end{align*}
which achieves the guarantee \Cref{eq:general_error_guarantee}.
Therefore, we have successfully improved the query complexity from $\frac{d}{\eps^4}\poly\log n$ to $\frac{d}{\eps^3}\poly\log n$.
Although this bootstrapping idea of reusing the error guarantee to shrink $T$ has appeared in previous work~\citep{musco2022active,taisuke:thesis}, we emphasize that there is a fundamental difference in the detailed analysis for general Lipschitz functions $f$, due to the lack of convexity of $\normp{f(A(\cdot)) - b}^p$. 

\paragraph{Further Improvement}

Recall that we are targeting a query complexity of $\frac{d}{\eps^2}\poly\log n$.
One may immediately check that setting $m\sim \frac{d}{\eps^2}\poly\log n$ in the above argument is not helpful.
To address this issue, we refine the analysis of \Cref{eq:main_concen_first_attempt} and improve the bound as follows.
Recall that we set $R=\frac{1}{\eps}\opt + \normp{Ax^*}^p$ such that $\normp{A\hat{x}}^p\lesssim R$.
If we further restrict the domain $T$ and set it to be 
\begin{align*}
	T
	=
	\setdef*{x\in \R^d}{\normp{Ax}^p\lesssim R \quad \text{and}\quad \normp{f(Ax)-f(Ax^*)}^p \lesssim F}\quad \text{for some $F\geq \opt$}
\end{align*}
then \Cref{eq:main_concen_first_attempt} can be improved to 
\begin{align*}
	\sup_{x\in T}\Err(x)
	& \lesssim 
	\sqrt{\frac{d\poly(\log n, \log \delta^{-1})}{m}}\cdot \sqrt{FR}. \numberthis\label{eq:main_concen_second_attempt}
\end{align*}
Note that, by the Lipschitz condition and the fact that $\normp{Ax^*}^p\lesssim R$, we have
\begin{align*}
	\normp{f(Ax) - f(Ax^*)}^p
	& \leq
	\normp{Ax-Ax^*}^p
	\lesssim 
	R
\end{align*}
and hence one can set $F=R$.
That means \Cref{eq:main_concen_second_attempt} is always no worse than \Cref{eq:main_concen_first_attempt}.
To apply \Cref{eq:main_concen_second_attempt}, we need to show that $\hat{x}\in T$, i.e. find a suitable $F$ such that $\normp{f(A\hat{x}) - f(Ax^*)}^p\leq F$.

In the proof of a constant-factor approximation by \citet{COLT24}, a key step is 
\begin{align*}
	\normp{f(A\hat{x}) - f(Ax^*)}^p 
	& \lesssim
	\opt+\eps\normp{Ax^*}^p
\end{align*}
when $p=2$.
A straightforward modification extends it to general $p$, which means that we can set $F = \opt + \eps \normp{Ax^*}^p$.

Now, we pick $m\sim \frac{d}{\eps^2}\poly\log n$.
By plugging $m$, $R$ and $F$ into \Cref{eq:main_concen_second_attempt}, we have
\begin{align*}
	\sup_{x\in T}\Err(x)
	& \lesssim
	\eps \cdot \sqrt{(\opt + \eps \normp{Ax^*}^p)\cdot (\frac{1}{\eps}\opt + \normp{Ax^*}^p)}
	\lesssim
	\eps^{\frac{1}{2}}\opt + \eps^{\frac{3}{2}} \normp{Ax^*}^p.\numberthis\label{eq:eps2_first_step}
\end{align*}
Following the same argument before and plugging it into \Cref{eq:radius_recur}, we have
\begin{align*}
	\normp{A\hat{x}}^p
	& \lesssim
	\frac{1}{\eps^{\frac{1}{2}}}\opt + \normp{Ax^*}^p
\end{align*}
which allows us to refine further the radius $R$ and thus the domain $T$, leading to a better bound on $\normp{A\hat{x}}^p$. Iterate this process 
and apply \Cref{eq:main_concen_second_attempt} $\log\log\frac{1}{\eps}$ times, we shall arrive at  the bound
\begin{align*}
	\Err(\hat{x})
	& =
	\abs{(\normp{S(f(A\hat{x}) - b)}^p - \normp{S(f(Ax^*) - b)}^p) - (\normp{f(A\hat{x}) - b}^p - \normp{f(Ax^*) - b}^p) }\\
	& \lesssim
	\eps \cdot (\opt + \normp{Ax^*}^p).
\end{align*}
Finally, we follow a calculation similar to that in \Cref{eq:first_attemp_err_cal} to achieve the desired guarantee~\Cref{eq:general_error_guarantee}.
The caveat here is that repeatedly applying the concentration bound \Cref{eq:main_concen_second_attempt} in the iterative process causes the failure probability to accumulate.
We resolve this by setting $\delta = 1/\log\log(1/\eps)$ in \Cref{eq:main_concen_second_attempt}, keeping $\log(1/\delta)$ at most $\log n$.
Hence, the query complexity remains $(d/\eps^2)\poly\log n$.

\paragraph{Dependence on $n$}

Although we have achieved the query complexity of $\frac{d}{\eps^2}\poly\log n$, it may not be desirable when $n$ is large and we seek to further remove the dependence on $n$.
The $\poly\log n$ factor arises from estimating a covering number when bounding Dudley's integral. 
Indeed, by using a simple net argument with a sampling matrix of $\poly(d/\eps)$ non-zero rows, the solution guarantee can still be achieved. While the dependence on $d$ and $\eps$ are both worse, the query complexity is independent of $n$.
To take the advantage of this trade-off, a standard approach involves using two query matrices $S^\circ$ and $S$, where $S^\circ$ has an suboptimal number of nonzero rows, and then solving the following new regularized problem
\begin{align*}
    \hat{x}
    =
    \argmin_{x\in \R^d}\normp{SS^\circ(f(Ax) - b))}^p + \eps \normp{S^\circ A x}^p.
\end{align*}
We need to pay close attention to the fact that we are not simply using $S^\circ A$ as the input matrix $A$ in the original statement because of the function $f$.

To bound the error, a natural attempt is to use the concentration bounds and show that
\begin{align*}
    \MoveEqLeft \normp{f(A\hat{x}) - b}^p - \normp{f(Ax^*) - b}^p \\
    & \leq 
    \normp{S^\circ (f(A\hat{x}) - b)}^p - \normp{S^\circ (f(Ax^*) - b)}^p  + \text{error terms} \numberthis\label{eq:no_n_first_concen} \\
    & \leq
    \normp{SS^\circ (f(A\hat{x}) - b)}^p - \normp{SS^\circ (f(Ax^*) - b)}^p  + \text{error terms}\numberthis\label{eq:no_n_second_concen} 
\end{align*}
and then we use the optimality of $\hat{x}$ to complete the proof.
However, we remind the readers that, to apply the concentration bounds, it is important to check that the relevant points $x^*,\hat{x}$ are in the domain of interest for the corresponding bounds.
It turns out that we can obtain \Cref{eq:no_n_first_concen} but arguing \Cref{eq:no_n_second_concen} is the main obstacle because of that.
While we will omit the detail, we note that we need a proxy point $x^\circ$ to circumvent this obstacle when using the concentration bound for $S$.
The proxy point $x^\circ$ is defined as
\begin{align*}
    x^\circ
    =
    \argmin_{x\in \R^d} \normp{S^\circ(f(Ax) - b)}^p + \eps^2\normp{Ax}^p
\end{align*}
and this proxy point allows us to show that $\hat{x}$ lies within the domain of interest. 
We can then modify the above argument by continuing from \Cref{eq:no_n_first_concen}
\begin{align*}
    \MoveEqLeft \normp{S^\circ (f(A\hat{x}) - b)}^p - \normp{S^\circ (f(Ax^*) - b)}^p \\
    & \leq
    \normp{S^\circ (f(A\hat{x}) - b)}^p - \normp{S^\circ (f(Ax^\circ) - b)}^p + \eps^2\normp{Ax^*}^p \\
    & \leq
    \normp{SS^\circ (f(A\hat{x}) - b)}^p - \normp{SS^\circ (f(Ax^\circ) - b)}^p + \text{error terms}
\end{align*}
and use the optimality of $\hat{x}$ to finish the proof.

\subsection{Lower Bound}

\paragraph{General Observations}
As in the previous results~\citep{musco2022active,taisuke:thesis}, via Yao's minimax theorem, the proof of the lower bounds is reduced to distinguishing between two distributions (which are called hard instance). Specifically,
we construct two ``hard-to-distinguish'' distributions on the vector $b$, and it requires a certain number of queries to the entries of $b$ to distinguish between these distributions with constant probability. The reduction is using an approximation solution $\hat{x}$ to determine from which distribution $b$ is drawn.
We construct our hard instances for $1\leq p\leq 2$ and $p\geq 2$ separately. These instances are inspired by those  in~\citep{musco2022active,taisuke:thesis} but are more complex, as we are showing a higher lower bound. 
For the purpose of exposition, we assume $d=1$, in which case the matrix $A$ degenerates to a vector $a\in\R^n$.
We shall then extend the result to the general $d$.

\paragraph{Hard Instance for $1\leq p\leq 2$}
We pair up the entries (say $2i-1$ and $2i$).
Let 
\begin{align*}
	u
	& =
	\begin{bmatrix}
		3\\
		2
	\end{bmatrix}
	\quad \text{and} \quad 
	v
	=\begin{bmatrix}
		2\\
		3
	\end{bmatrix}.
\end{align*}
Let $D_0$ (resp. $D_1$) be the distribution on $b\in \R^{2n}$ that, for all $i=1,\dots,n$, each pair
\begin{align*}
	\begin{bmatrix}
		b_{2i-1} \\
		b_{2i}
	\end{bmatrix}
	& =
	\begin{cases}
		u & \text{with probability $\frac{1}{2}+\eps$ (resp. $\frac{1}{2}-\eps$)} \\
		v & \text{with probability $\frac{1}{2}-\eps$ (resp. $\frac{1}{2} + \eps$).}
	\end{cases}
\end{align*}
By the standard information-theoretic lower bounds, one needs to query $\Omega(\frac{1}{\eps^2})$ entries of $b$ to distinguish $D_0$ and $D_1$.

To reduce this problem to our problem, we set 
\begin{align*}
	f(x)
	& =
	\begin{cases}
		2 & \text{if $x\leq -6$} \\
		-x-4 & \text{if $-6 \leq x \leq -4 $} \\
		0 & \text{if $-4 \leq x \leq 0$} \\
		\frac{1}{2}x & \text{if $0\leq x$} \\
	\end{cases}
	\quad \text{and}\quad 
	\begin{bmatrix}
		a_{2i-1} \\
		a_{2i}
	\end{bmatrix}
	=
	\begin{bmatrix}
		1 \\
		-1
	\end{bmatrix} \quad \text{for $i=1,\dots,n$.}
\end{align*}
Let $k$ be the number of $u$'s in $b$.
The objective function becomes
\begin{align*}
	\normp{f(a\cdot x) - b}^p
	& =
	k\cdot \normp{f(\begin{bsmallmatrix}
			x\\
			-x
		\end{bsmallmatrix}) - u}^p + (n-k)\cdot \normp{f(\begin{bsmallmatrix}
			x \\ 
			-x
		\end{bsmallmatrix}) - v}^p.
\end{align*}
The takeaway of this construction is one can view $f(\begin{bsmallmatrix}
	x\\
	-x
\end{bsmallmatrix})$ as a locus in $\R^2$ as $x$ varies, illustrated in \Cref{fig:locus_overview}.
Suppose $b$ is drawn from $D_0$.
It implies that $k\approx \frac{n}{2} + \eps n$ and hence $n-k<k$.
One can view each component as the $\ell_p$ distance between the locus and $u$ or $v$.
As seen in \Cref{fig:locus_overview}, the locus passes through $u$ and $v$.
\begin{figure}[tb]
	\centering
	\includegraphics[width=0.3\textwidth]{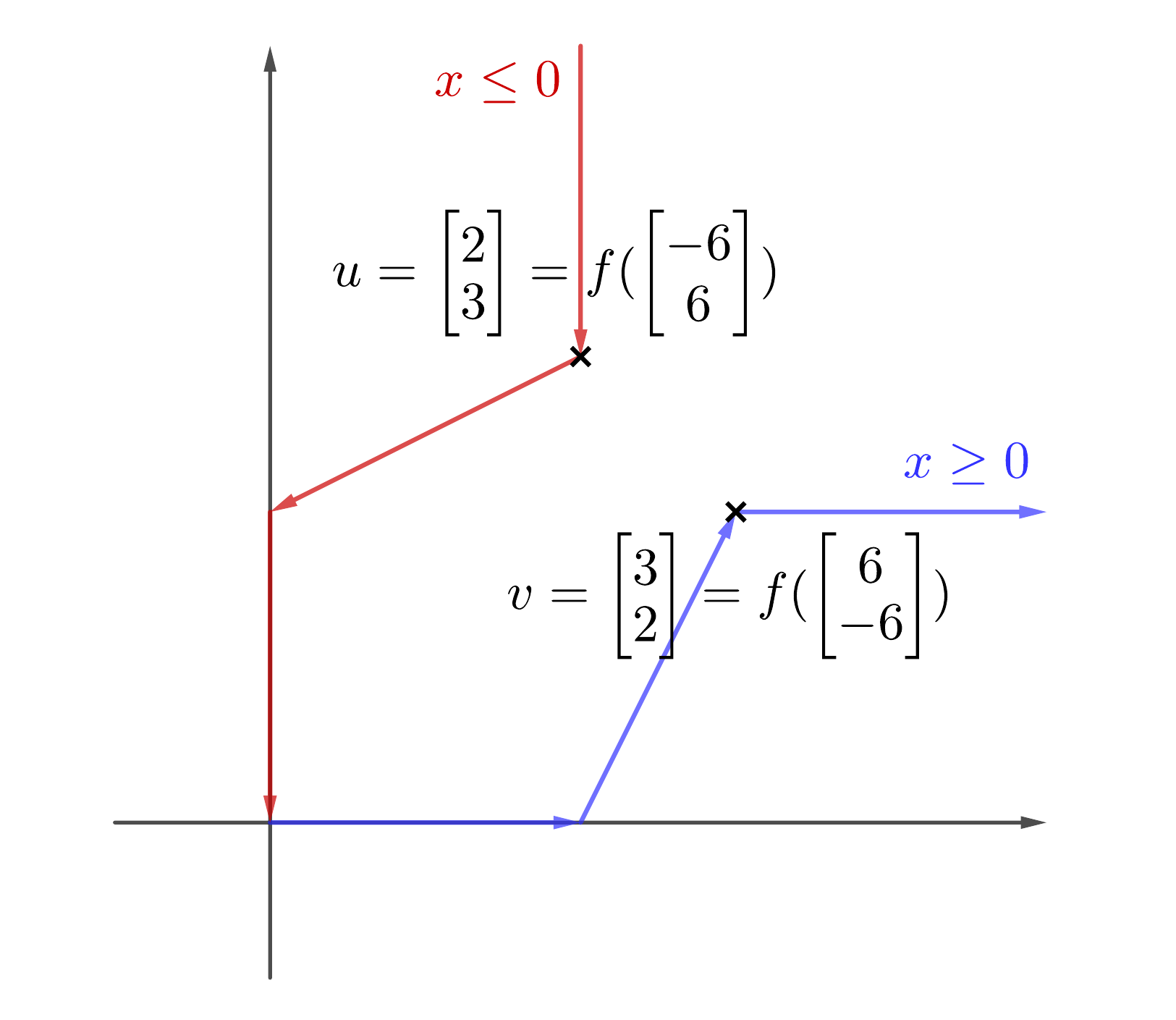}
	\includegraphics[width=0.3\textwidth]{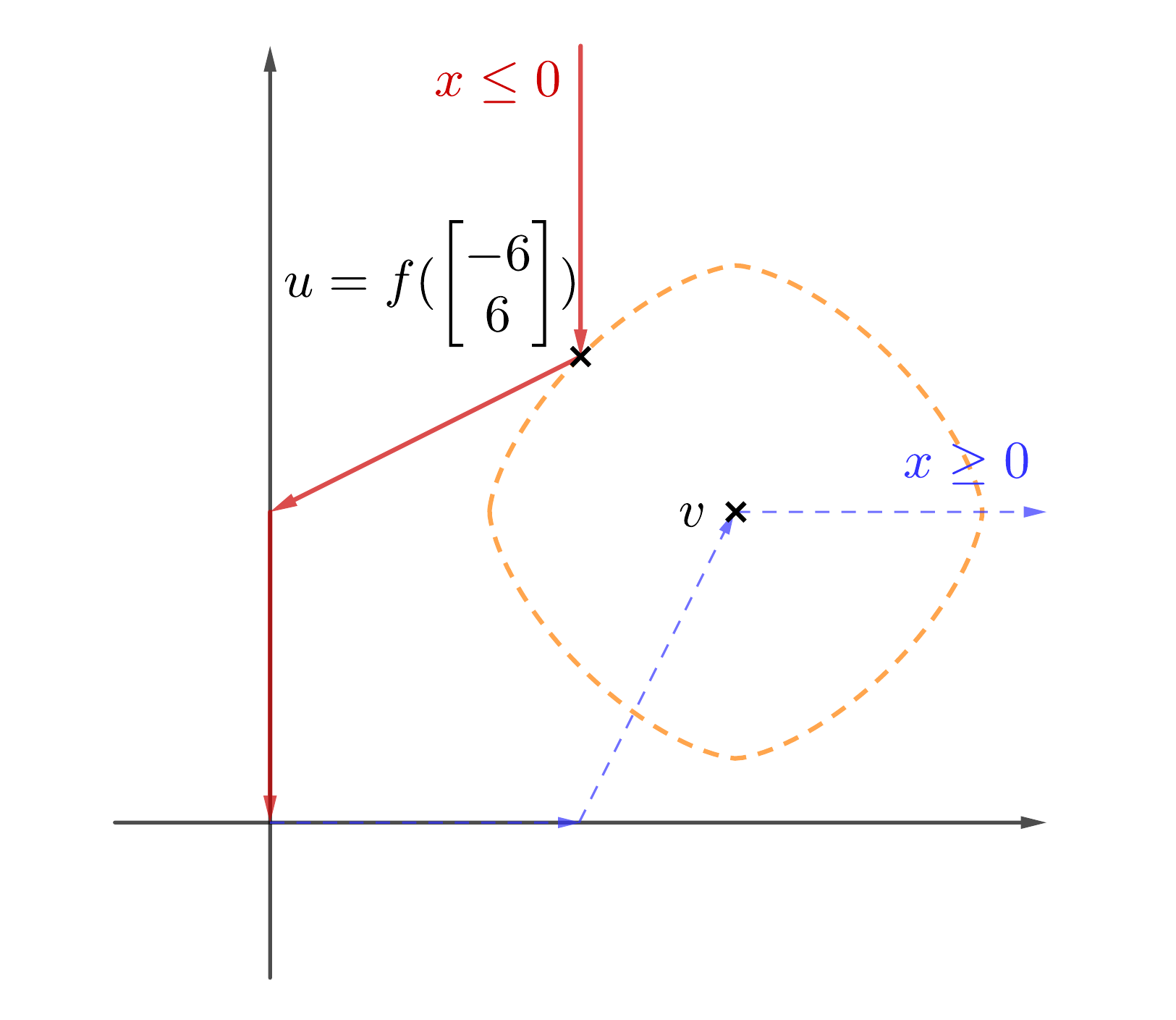}
 \includegraphics[width=0.3\textwidth]{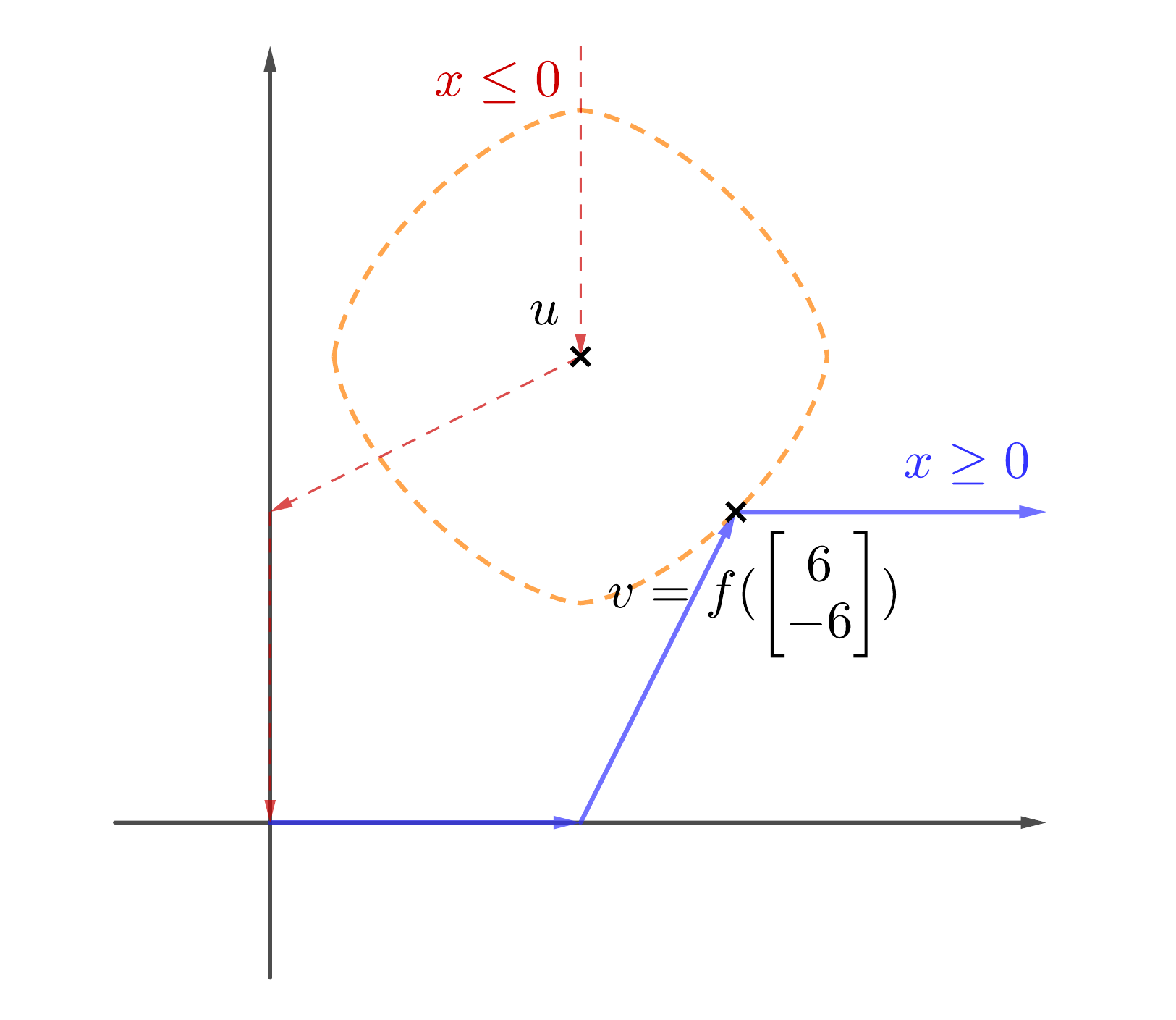}
    \vspace{-17pt}
	\caption{(left) Plot of the locus $f(\begin{bsmallmatrix}
	    x\\
     -x
	\end{bsmallmatrix})$, where the red (resp. blue) part corresponds to $x\leq 0$ (resp. $x\geq 0$);
 (middle) $f(\begin{bsmallmatrix}
     -6\\
     6
 \end{bsmallmatrix})$ is the point on the red part that is closest to $u$ and $v$ in the $\ell_p$-distance;
 (right) $f(\begin{bsmallmatrix}
     6\\
     -6
 \end{bsmallmatrix})$ is the point on the blue part that is closest to $u$ and $v$ in the $\ell_p$-distance}
	\label{fig:locus_overview}
\end{figure}
When $x = -6$, we have 
$f(\begin{bsmallmatrix}
	x\\
	-x
\end{bsmallmatrix}) = u$ and so
\begin{align*}
	\opt 
	& \leq
	k\cdot 0 + (n-k)\cdot \normp{u-v}^p
	=
	2^p(n-k)
	\approx 
	2^{p-1}n(1-2\eps).
\end{align*}
On the other hand, \Cref{fig:locus_overview} also suggests that, when $x > 0$, we have
\begin{align*}
	\normp{f(a\cdot x) - b}^p
	& \geq 
	k\cdot \normp{u-v}^p + (n-k)\cdot 0
	\geq 
	2^pk
	\approx
	2^{p-1}n(1+2\eps).
\end{align*}
Suppose we have a solution $\hat{x}$ such that
\[
    \normp{f(a\cdot \hat{x}) - b}^p
    \leq 
	(1+c \cdot \eps) \opt + c\cdot \eps \cdot\normp{a\cdot x^*}^p.
\]
It then follows that 
\begin{align*}
	\normp{f(a\cdot \hat{x}) - b}^p
	& \leq 
	(1+c \cdot \eps) 2^p (n-k) + c\cdot \eps \cdot\normp{a\cdot x^*}^p \\
	&\approx 
	(1+c\cdot \eps)\cdot 2^{p-1}n(1-2\eps) + c\cdot \eps \cdot 6^p\cdot 2n \\
	& =
	2^{p-1}n(1-\Omega(\eps)) \\
	& <
	2^{p-1}n(1+2\eps),
\end{align*}
which implies that $\hat{x}<0$.
Similarly, suppose $b$ is drawn from $D_1$, one can show the symmetric result.
We can declare $b$ is drawn from $D_0$ if $\hat{x}<0$ and $D_1$ otherwise. This concludes our reduction.

\paragraph{Hard Instance for $p\geq 2$}
We start with the all-one vector $b\in \R^{2n}$.
Then, we pick a random index $i^*$ from $\{1,\dots,2n\}$ uniformly and update $b_{i^*} \gets b_{i^*} + 1/\eps$. 
Our question is to determine whether $i^*\leq n$ or $i^\ast > n$, and it follows from a straightforward probability calculation that $\Omega(n)$ queries to the entries of $b$ are required.
Recall that we are targeting a query complexity of $\Omega(1/\eps^p)$ and hence we set $n=1/\eps^p$.

To reduce this problem to our problem, we set 
\begin{align*}
    f(x)
    & =
    \begin{cases}
        0 & \text{if $x\leq 0$} \\
        x & \text{if $0\leq  x\leq 1$} \\
        1 & \text{if $1\leq x$.}
    \end{cases}
    \quad \text{and}\quad 
    a_i
    =
    \begin{cases}
        1 & \text{if $i=1,\dots,n$} \\
        -1 & \text{if $i=n+1,\dots,2n$.}
    \end{cases}
\end{align*}
Suppose that $i^*\leq n$.
If $x=1$, we have the following.
For $i=1,\dots, n$ and $i\neq i^*$, we have $f(a\cdot x)_i = f(1) = 1 = b_i$.
Recall that $b_{i^*} = 1+1/\eps$.
Hence, we have $\sum_{i=1}^n \abs{f(a\cdot x)_i - b_i}^p = 1/\eps^p = n$.
For $i=n+1,\dots,2n$, we have $f(a\cdot x)_i = f(-1) = 0$ and therefore we have $\sum_{i=n+1}^{2n} \abs{f(a\cdot x)_i - b_i}^p = \sum_{i=n+1}^{2n} 1^p = n$.
Namely, we have
\begin{align*}
    \opt
    & \leq
    \normp{f(a\cdot x) - b}^p
    =
    2n.
\end{align*}
On the other hand, it is easy to check that, when $x<0$, we have
\begin{align*}
    \normp{f(a\cdot x) - b}^p
    & \geq 
    \sum_{i=1}^n \abs{f(a\cdot x)_i - b_i}^p
    \geq 
    n-1 + (\frac{1}{\eps} + 1)^p
    \geq 
    2n(1+\eps)
\end{align*}
Suppose we have a solution $\hat{x}$ such that
\begin{align*}
    \normp{f(a\cdot \hat{x}) - b}^p
    & \leq
    (1+c\cdot \eps)\opt + c\cdot\eps\normp{a\cdot x^*}^p \\
    & \leq
    (1+c\cdot \eps)\cdot 2n + c\cdot\eps\cdot 1^p\cdot n\\
    & <
    2n(1+\eps) \quad\text{for a sufficiently small $c>0$}
\end{align*}
which implies $\hat{x}>0$.
Similarly, suppose that $i^*\geq n+1$, one can show the symmetric result.
We can declare $i^*\leq n$ if $\hat{x}>0$ and $i^\ast > n$ otherwise. This concludes our reduction.

\paragraph{Extension to $d > 1$}

We consider the problem of solving multiple independent copies of hard instances of $d=1$ and reduce this new problem to the regression.
The formal construction is as follows.
Let $m = \Theta(1/\eps^{p\vee 2})$.
We have a $dm$-dimensional vector $b$, which can be partitioned into $d$ blocks of $m$-dimensional vectors, with each block drawn from either $D_0$ or $D_1$ (the hard instances introduced earlier depending on $p$).
By a straightforward probability calculation, it can be shown that $\Omega(dm)$ queries to the entries of $b$ are needed to correctly answer, with constant probability, which distribution each block of $m$-dimensional vector is drawn from, for at least $2d/3$ blocks.

To reduce it to our problem, let $A$ be a $dm$-by-$d$ block-diagonal matrix, partitioned into $d^2$ blocks of $m$-dimensional vectors. 
Each diagonal block is the vector $a$ which we constructed earlier.
The function $f$ remains the same as before.
Suppose we have a solution $\hat{x}$ satisfying \Cref{eq:general_error_guarantee}.
By the independence between blocks in $b$ and the block-diagonal structure of $A$, we can argue that \Cref{eq:general_error_guarantee} can be decomposed into the sum of the objective functions for each independent block and declare that each block is drawn from $D_0$ or $D_1$ based on the same criteria as in the case of $d=1$.
By the standard counting techniques, at least $2d/3$ of the $d$ answers are correct and this completes the reduction.
Hence, we achieve the query complexity of $d/\eps^{p\vee 2}$.

We point out that for the canonical case of $f(x)=x$ and $p\geq 2$, the previous result of \citet{taisuke:thesis} gives a stronger lower bound, in terms of $d$, of $\Omega(d^{p/2}/\eps^{p-1})$.
Unfortunately, it is still not clear how to apply the techniques in our setting.

\section{Preliminaries}\label{sec:prelim}

\paragraph{Notation} 
For a distribution $\mathcal{D}$, we write $X\sim \mathcal{D}$ to denote a random variable $X$ drawn from $\mathcal{D}$ and $\beta\cdot\mathcal{D}$ to denote the distribution of the scaled random variable $\beta X$, where $X\sim \mathcal{D}$.
For any $0\leq p\leq 1$ and positive integer $n$, we use $\Ber(p)$ to denote the Bernoulli distribution with expected value $p$ and $\Bin(n,p)$ to denote the Binomial distribution with $n$ trials and success probability $p$ for each trial.
That is, if $X\sim \Ber(p)$ then
\begin{align*}
	X
	=
	\begin{cases}
		1 & \text{with probability $p$}\\
		0 & \text{with probability $1-p$}
	\end{cases}
\end{align*}
and if $X\sim \Bin(n,p)$ then $X$ can be expressed as $\sum_{i=1}^n X_i$ where $X_1,\dots, X_n$ are i.i.d.\ $\Ber(p)$ variables.

For a matrix $A$, we use $A_{i,\cdot}$ to denote its $i$-th row and $A_{\cdot,i}$ its $i$-th column. For $\lambda_1,\dots,\lambda_n\in\R$, we use $\diag\{\lambda_1,\dots,\lambda_n\}$ to denote a diagonal matrix whose diagonal entries are $\lambda_1,\dots,\lambda_n$.

In a normed space $(X,\|\cdot\|)$, the unit ball $B(X)$ is defined as $B(X) = \setdef{x\in X}{\norm*{x}\leq 1}$. When $X$ is clear from the context, we may omit the space and write only $B$ for the unit ball. When $X$ is the column space of a matrix $A$, we also write the unit ball as $B(A)$. If the norm has a subscript $\norm*{\cdot}_{\scriptscriptstyle\square}$, we shall include the subscript of the norm and denote the associated unit ball by $B_{\scriptscriptstyle\square}$ (or $B_{\scriptscriptstyle\square}(A)$ if $X$ is the column space of $A$). In $\R^n$, the standard $\ell_p$-norm and the weighted $\ell_p$-norm, denoted by $\norm*{\cdot}_p$ and $\norm*{\cdot}_{w,p}$, are defined as $\norm*{x}_p = (\sum_{i=1}^n |x_i|^p)^{1/p}$ and $\norm*{x}_{w,p} = (\sum_{i=1}^n w_i |x_i|^p)^{1/p}$, respectively, where $w\in \mathbb{R}^n$ and $w_i>0$ for $i\in [n]$.

We shall use $C$, $C_1$, $C_2$, ..., $c$, $c_1$, $c_2$, ... to denote absolute constants. 
We also write $\max\{a,b\}$ and $\min\{a,b\}$ as $a\vee b$ and $a\wedge b$, respectively.
We use $O,\Omega, \Theta$ and $\lesssim, \gtrsim, \sim$ interchangeably.

\paragraph{Lewis Weights} We now define an important concept regarding matrices, which have played critical roles in the construction of space-efficient subspace embeddings. 

\begin{definition}[$\ell_p$-Lewis weights]
	Let $A\in \R^{n\times d}$ and $p\geq 1$. For each $i\in [n]$, the $\ell_p$-Lewis weight of $A$ for the $i$-th row is defined to be $w_i$ that satisfies
	\begin{align*}
		w_i(A)
		& =
		(a_i^\top (A^\top W^{1-\frac{2}{p}} A)^{\dagger} a_i)^{\frac{p}{2}}
	\end{align*}
	where $a_i$ is the $i$-th row of $A$ (as a column vector), $W = \diag\{w_1,\dots,w_n\}$ and $\dagger$ denotes the pseudoinverse.
\end{definition}
When the matrix $A$ is clear in context, we will simply write $w_i(A)$ as $w_i$. Adopting that $0\cdot\infty = 0$, we have $w_i(A) = 0$ if $a_i = 0$. The following are a few important properties of Lewis weights; see, e.g., \citep{W91book} for a proof.
\begin{lemma}[Properties of Lewis weights]\label{lem:lewis_weight_properties}
	Suppose that $A\in \R^{n\times d}$ has full column rank and Lewis weights $w_1,\dots,w_n$. Let $W = \diag\{w_1,\dots,w_n\}$. The following properties hold.
	\begin{enumerate}[label={(\alph*)},nosep]
		\item $\sum_i w_i = d$;
		\item There exists a matrix $U\in \R^{n\times d}$ such that 
		\begin{enumerate}[label={(\roman*)},nosep]
			\item the column space of $U$ is the same as that of $A$;
			\item $w_i = \norm*{U_{i,\cdot}}_2^p$;
			\item $W^{\frac{1}{2}-\frac{1}{p}}U$ has orthonormal columns;
		\end{enumerate}
		\item It holds for all vectors $u$ in the column space of $A$ that $\norm{W^{\frac{1}{2}-\frac{1}{p}}u}_2\leq d^{\frac{1}{2}-\frac{1}{2\vee p}}\norm{u}_p$.
		\item It holds for all vectors $u$ in the column space of $A$ that $|u_i| \leq d^{\frac{1}{2}-\frac{1}{2\vee p}}w_i^{\frac{1}{p}}\norm{u}_p$.
	\end{enumerate}
\end{lemma}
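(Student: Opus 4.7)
The plan is to prove the four properties in the order (b), (a), (c), (d), since (b) supplies the structural matrix $U$ that drives everything else. The existence of Lewis weights satisfying the implicit fixed-point equation in the definition is itself a classical result (via either a convex program or a contraction argument); I would cite this and treat $w_1,\dots,w_n$ as given.

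For (b), I would construct $U$ explicitly. Let $M = (A^\top W^{1-\frac{2}{p}} A)^{-1}$, which is well-defined because $A$ has full column rank and $w_i>0$ for every $i$. Pick any $R\in\R^{d\times d}$ with $RR^\top = M$ (say via a Cholesky factorization), and set $U := AR$. Then (i) is immediate because $R$ is invertible; (iii) is the identity $R^\top(A^\top W^{1-\frac{2}{p}}A)R = I$, which holds by the defining relation for $R$; and (ii) follows from $\|U_{i,\cdot}\|_2^2 = a_i^\top RR^\top a_i = a_i^\top M a_i = w_i^{2/p}$ by the Lewis weight equation.

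For (a), I multiply both sides of the Lewis weight equation by $w_i^{1-\frac{2}{p}}$ to rewrite $w_i = w_i^{1-\frac{2}{p}}\,a_i^\top M a_i$, sum over $i$, and use the cyclic trace identity:
\[
\sum_i w_i \;=\; \sum_i w_i^{1-\frac{2}{p}}\,\mathrm{tr}(M a_i a_i^\top) \;=\; \mathrm{tr}\!\bigl(M\cdot A^\top W^{1-\frac{2}{p}} A\bigr) \;=\; \mathrm{tr}(I_d) \;=\; d.
\]
For (c), I would split on $p$. When $p\geq 2$, Hölder with exponents $p/(p-2)$ and $p/2$ gives
\[
\|W^{\frac{1}{2}-\frac{1}{p}}u\|_2^2 \;=\; \sum_i w_i^{(p-2)/p}\,u_i^2 \;\leq\; \Bigl(\sum_i w_i\Bigr)^{(p-2)/p}\Bigl(\sum_i |u_i|^p\Bigr)^{2/p} \;=\; d^{1-\frac{2}{p}}\,\|u\|_p^2,
\]
using (a). When $1\leq p<2$, the exponent $1-\frac{2}{p}$ is negative so this Hölder step is blocked; instead, write $u = Uz$ (possible since $\colspace(U)=\colspace(A)$), note $\|W^{\frac{1}{2}-\frac{1}{p}}u\|_2 = \|z\|_2$ by the orthonormality in (iii), and use $|u_i| \leq \|U_{i,\cdot}\|_2\|z\|_2 = w_i^{1/p}\|z\|_2$ from (ii) combined with Cauchy--Schwarz to split $|u_i|^2 = |u_i|^{2-p}\cdot|u_i|^p$ and conclude
\[
\|z\|_2^2 \;=\; \sum_i w_i^{1-\frac{2}{p}}|u_i|^{2-p}\cdot|u_i|^p \;\leq\; \|z\|_2^{2-p}\,\|u\|_p^p,
\]
i.e., $\|z\|_2\leq \|u\|_p$, matching the claimed bound $d^{\frac{1}{2}-\frac{1}{2}}\|u\|_p$. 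Property (d) is then a one-liner: for $u = Uz$, Cauchy--Schwarz and (ii) give $|u_i|\leq w_i^{1/p}\|z\|_2 = w_i^{1/p}\|W^{\frac{1}{2}-\frac{1}{p}}u\|_2$, and (c) closes the estimate.

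The main obstacle I anticipate is the $p<2$ case of (c): the direct Hölder bound on $\sum_i w_i^{1-\frac{2}{p}}u_i^2$ is not available because the weight exponent is negative, so one must route the estimate through the auxiliary matrix $U$ and use the pointwise control $|u_i|\leq w_i^{1/p}\|z\|_2$ coming out of orthonormality. The other genuinely nontrivial ingredient is the existence of the Lewis weights themselves, which I would import as a black box rather than reprove.
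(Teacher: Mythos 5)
Your proof is correct. The paper does not prove this lemma itself but cites \citep{W91book}, so there is no internal proof to compare against; your construction $U=AR$ with $RR^\top=(A^\top W^{1-\frac{2}{p}}A)^{-1}$, the trace computation for (a), and the two-branch argument for (c)---H\"older for $p\geq 2$, and the pointwise bound $|u_i|\leq w_i^{1/p}\|z\|_2$ from orthonormality for $p<2$, which you correctly identify as the way around the negative-exponent obstruction---are all standard and sound. One small caveat: you assume $w_i>0$ for every $i$ in order to invert $A^\top W^{1-\frac{2}{p}}A$, whereas the paper explicitly allows $a_i=0$ (hence $w_i=0$); in that case one should either restrict attention to the nonzero rows (they carry no content in any of the four claims) or keep the pseudoinverse as in the definition. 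That is bookkeeping, not a gap.
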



\paragraph{Subspace Embeddings} Suppose that $A\in\R^{n\times d}$ and $\eps \in (0,1)$. We say a matrix $S\in\R^{m\times n}$ is an $\ell_p$-subspace embedding matrix for $A$ with distortion $1+\eps$ if $(1+\eps)^{-1}\normp{Ax} \leq \normp{SAx}\leq (1+\eps)\normp{Ax}$. The prevailing method to construct $\ell_p$-subspace embedding matrices is to sample the rows of $A$ according to its Lewis weights. 
\begin{lemma} \label{lem:SE}
	Suppose that $A\in\R^{n\times d}$ has Lewis weights $w_1,\dots,w_n$. Let $p_i\in [0,1]$ satisfy that $p_i\geq (\beta w_i)\wedge 1$ and $S\in \R^{n\times n}$ be a diagonal matrix with independent diagonal entries $S_{ii} \sim p_i^{-1/p}\Ber(p_i)$. Then with probability at least $0.99$, $S$ is an $\ell_p$-subspace embedding matrix for $A$ with distortion $1+\eps$ if 
	\[\beta \gtrsim_p
	\begin{cases}
		\frac{1}{\eps^2}\log\frac{d}{\eps}(\log\log\frac{d}{\eps})^2, & 1 <p<2 \\
		\frac{1}{\eps^2}\log \frac{d}{\eps}, & p=1,2 \\
		\frac{d^{\frac{p}{2}-1}}{\eps^2}(\log d)^2\log\frac{d}{\eps} & p>2.
	\end{cases}
	\]
\end{lemma}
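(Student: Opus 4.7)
The plan is to follow the standard Lewis-weight sampling argument, which proceeds in four phases: a change of basis, a reduction to a centered empirical process, symmetrization, and a chaining bound whose geometry differs across the three regimes of~$p$.

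First I would use part~(b) of \Cref{lem:lewis_weight_properties} to replace $A$ by a matrix $U$ with the same column space such that $w_i = \norm{U_{i,\cdot}}_2^p$ and $W^{1/2-1/p}U$ has orthonormal columns. Since $\colspace(SA) = \colspace(SU)$, the claim reduces to showing
\[
\sup_{y\in \colspace(U),\,\norm{y}_p\leq 1}\Big|\norm{Sy}_p^p - \norm{y}_p^p\Big| \leq \eps
\]
with probability at least $0.99$. Writing $\delta_i = \mathbbm{1}[S_{ii}\ne 0] \sim \Ber(p_i)$, we have $S_{ii}^p = \delta_i/p_i$, so the quantity inside the supremum equals $\sum_i p_i^{-1}(\delta_i - p_i)\,\abs{y_i}^p$, a centered sum of independent random variables for each fixed~$y$.

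Next I would apply a standard symmetrization step to replace the centered process by a Rademacher process $\sum_i \eps_i (\delta_i/p_i)\abs{y_i}^p$, and then bound its expected supremum by Dudley's integral with respect to the subgaussian metric $d(y,y') = \bigl(\sum_i (\delta_i/p_i^2)(\abs{y_i}^p - \abs{y'_i}^p)^2\bigr)^{1/2}$. Concentration of the supremum around its expectation follows from Talagrand's functional Bernstein inequality, pushing the bound from $\mathbb{E}$ to a high-probability statement. The entropy input is the covering number of $B_p(U) = \setdef{Ux}{\norm{Ux}_p \leq 1}$ at various scales, estimated via the dual-Sudakov minoration after passing to the $\ell_2$ metric induced by the orthonormal columns of $W^{1/2-1/p}U$.

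The three regimes then diverge, and this is where the main technical work lies. For $p=2$, a direct matrix-Chernoff argument on $(SU)^\top (SU)$ bypasses chaining and immediately gives the $\frac{1}{\eps^2}\log(d/\eps)$ bound. For $p=1$, a Bernstein-type bound per scale and a truncation at $\log(d/\eps)$ suffice. For $1<p<2$, the delicate $(\log\log(d/\eps))^2$ factor requires a refined multi-level chaining that uses both the $\ell_p$ and $\ell_2$ structure on $B_p(U)$, in the spirit of Cohen-Peng's Lewis-weight analysis; this is the portion I expect to be the main obstacle, since naive Dudley chaining loses an extra $\log(d/\eps)$. For $p>2$, the entries of $Ux$ need not be uniformly small, and one invokes part~(d) of \Cref{lem:lewis_weight_properties} to obtain the pointwise bound $\abs{(Ux)_i}\leq d^{1/2-1/p}w_i^{1/p}\norm{Ux}_p$; this produces the $d^{p/2-1}$ prefactor when controlling $\sum_i \abs{y_i}^{2p}/p_i$ in the Bernstein variance term, and a further $(\log d)^2 \log(d/\eps)$ factor comes from the chaining over the heavier-tailed $\ell_p$-ball.

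The main obstacle in each case is obtaining the sharp logarithmic factors: constant-distortion versions of this lemma follow from crude net arguments, but matching the precise $\beta$ thresholds stated requires either the refined chaining of Bourgain-Lindenstrauss-Milman (for $p>2$) or the Cohen-Peng one-step contraction (for $1<p<2$). Once these entropy bounds are plugged in, the conclusion $\norm{SUx}_p\in[(1+\eps)^{-1},1+\eps]\norm{Ux}_p$ for all $x$ follows by combining the $(1+\eps)^p \leq 1+O(\eps)$ linearization with Markov's inequality on the deviation, yielding success probability at least $0.99$ for the stated~$\beta$.
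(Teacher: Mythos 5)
The paper does not prove Lemma~\ref{lem:SE}; it states it as a known result and cites \citet{CP15} (building on Talagrand) for $p\in[1,2]$ and \citep{taisuke:thesis,ICLR24} for $p>2$. So there is no proof in the paper to compare yours against. Your sketch is a correct high-level outline of the machinery in those references: the reduction to a centered empirical process over $B_p(U)$ is set up properly (indeed $S_{ii}^p=\delta_i/p_i$, so the additive error is $\sum_i p_i^{-1}(\delta_i-p_i)|y_i|^p$), and the change of basis, symmetrization, Dudley chaining with Lewis-weight entropy estimates, and the case split over $p$ are the right skeleton.

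Two points need correction, though. First, your attribution for $p>2$ is reversed: Bourgain--Lindenstrauss--Milman together with \citet{CP15} only give the weaker $\beta\gtrsim (d^{p/2-1}/\eps^5)(\log d)\log(1/\eps)$, while the $\eps^{-2}$ dependence in the stated threshold comes from the more recent improvement in \citep{taisuke:thesis,ICLR24}, so ``the refined chaining of Bourgain--Lindenstrauss--Milman'' is not the argument that matches the claimed bound. Second, and more substantively, you identify obtaining the exact $(\log\log(d/\eps))^2$ factor for $1<p<2$ and the $d^{p/2-1}(\log d)^2\log(d/\eps)$ factor for $p>2$ as ``the main obstacle'' and leave them unresolved. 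Those sharp factors are precisely the non-trivial content of the lemma: the constant-distortion version, or versions with a crude extra $\log$ or with $\eps^{-5}$, follow from the generic symmetrization-plus-Dudley template you describe. What you have is a correct setup plus a pointer at the hard step, not a complete argument --- which is consistent with why the paper itself cites rather than re-derives this result.
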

The results for $p\in [1,2]$ are due to \citet{CP15}, based on earlier work of \citet{talagrand:p1,talagrand:1<p<2}. The result for $p > 2$ can be found in  \citep{taisuke:thesis,ICLR24}, which improves upon the previous bound $\beta \gtrsim (d^{p/2-1}/\eps^5) (\log d)\log(1/\eps)$ in \citep{BLM89,CP15}. 

\paragraph{Covering Numbers and Dudley's Integral} Suppose that $T$ is a pseudometric space endowed with a pseudometric $d$. The diameter of $T$, denoted by $\Diam(T,d)$, is defined as $ \Diam(T,d) := \sup_{t,s\in T} \rho(t,s)$.

Given an $r > 0$, an $r$-covering of $(T,d)$ is a subset $X\subseteq T$ such that for every $t\in T$, there exists $x\in X$ such that $d(t,x) \leq r$. The covering number $\mathcal{N}(T,d,r)$ is the minimum number $K$ such that there exists an $r$-covering of cardinality $K$. 

The covering numbers are intrinsically related to a subgaussian process on the space $T$ that conforms to the pseudometric $d$. This relationship is captured by the well-known Dudley's integral.

\begin{lemma}[Dudley's integral {\citet{Vers18}}]\label{lem:dudley}
	Let $X_t$ be a zero-mean stochastic process that is subgaussian w.r.t.\ a pseudo-metric $d$ on the indexing set $T$. Then it holds that
	\[
	\Pr\left\{ \sup_{t,s\in T} \abs{X_t - X_s} > C\left(\int_0^\infty \sqrt{\log N(T,d,\eps)}d\eps + u\cdot \Diam(T)\right) \right\} \leq 2\exp(-u^2),
	\]
	where $C$ is an absolute constant.
	As a consequence,
	\[
	\E \left(\sup_{t,s\in T} \abs{X_t - X_s}\right)^\ell \leq C'\cdot C^\ell\left[\left(\int_0^\infty \sqrt{\log N(T,d,\eps)}d\eps\right)^\ell + (\sqrt{\ell}\Diam(T))^\ell  \right],
	\]
	where $C$ and $C'$ are absolute constants.
\end{lemma}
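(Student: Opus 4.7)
The plan is to prove this by the standard Dudley--Fernique chaining argument, together with a careful tracking of the tail parameter $u$ so that integrating the tail bound yields the moment bound as a direct corollary.

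First I would set $\Delta = \Diam(T,d)$ and build a sequence of dyadic nets: for each $k \geq 0$, let $T_k \subseteq T$ be a $2^{-k}\Delta$-covering of minimum cardinality $N_k := N(T,d,2^{-k}\Delta)$, with $T_0 = \{t_0\}$ a single base point. For each $t\in T$, pick $\pi_k(t)\in T_k$ with $d(t,\pi_k(t))\leq 2^{-k}\Delta$, so by the triangle inequality $d(\pi_{k-1}(t),\pi_k(t))\leq 3\cdot 2^{-k}\Delta$. Telescoping gives
\begin{equation*}
X_t - X_{t_0} \;=\; \sum_{k\geq 1}\bigl(X_{\pi_k(t)}-X_{\pi_{k-1}(t)}\bigr),
\end{equation*}
so bounding $\sup_{t\in T}|X_t-X_{t_0}|$ reduces to controlling each level's increments uniformly in $t$.

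Next I would apply the subgaussian tail inequality level by level. At level $k$ there are at most $N_{k-1}\cdot N_k \leq N_k^2$ distinct pairs $(\pi_{k-1}(t),\pi_k(t))$ arising as $t$ varies, and each such increment is subgaussian with parameter $\lesssim 2^{-k}\Delta$, so a union bound gives that, except on an event of probability $2\exp(-u_k^2 - 2\log N_k)$,
\begin{equation*}
\sup_{t\in T}\bigl|X_{\pi_k(t)}-X_{\pi_{k-1}(t)}\bigr| \;\leq\; C_1\cdot 2^{-k}\Delta\,\bigl(\sqrt{\log N_k}+u_k\bigr).
\end{equation*}
I would calibrate $u_k := u + \sqrt{k}$ so that the failure probabilities $2\exp(-u_k^2 - 2\log N_k)$ sum to at most $2\exp(-u^2)$, while $\sum_{k\geq 1} 2^{-k}\Delta\,u_k\lesssim u\Delta$. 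Summing over $k$, and comparing $\sum_{k\geq 1}2^{-k}\Delta\sqrt{\log N_k}$ to $\int_0^\infty\sqrt{\log N(T,d,\eps)}\,d\eps$ via the standard dyadic estimate, we obtain
\begin{equation*}
\sup_{t\in T}|X_t - X_{t_0}| \;\leq\; C\!\left(\int_0^\infty\sqrt{\log N(T,d,\eps)}\,d\eps + u\Delta\right)
\end{equation*}
with probability at least $1-2\exp(-u^2)$. The triangle inequality $|X_t-X_s|\leq |X_t-X_{t_0}|+|X_s-X_{t_0}|$ then upgrades this to $\sup_{t,s}|X_t-X_s|$ at the cost of absorbing a factor of $2$ into $C$, proving the first claim.

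For the moment bound, write $Y = \sup_{t,s\in T}|X_t-X_s|$ and $I = \int_0^\infty\sqrt{\log N(T,d,\eps)}\,d\eps$, so that $\Pr(Y > C(I + u\Delta))\leq 2e^{-u^2}$ for every $u\geq 0$. The layer-cake formula then yields
\begin{equation*}
\E Y^\ell \;\leq\; (2CI)^\ell + C^\ell \Delta^\ell\cdot \ell\int_0^\infty u^{\ell-1}\cdot 2e^{-u^2}\,du,
\end{equation*}
and the Gaussian moment integral evaluates to $\Theta(\Gamma(\ell/2))\lesssim (\sqrt{\ell})^\ell$, yielding the second claim. The single delicate point I expect is the calibration of the per-level tail parameters $u_k$: they must grow fast enough that the union bound over all levels still leaves a total failure probability of $\lesssim e^{-u^2}$, yet slowly enough that $\sum_k 2^{-k}u_k$ stays $\lesssim u$. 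Everything else is essentially bookkeeping, and in fact the stated inequality can simply be cited from Vershynin's textbook, as the lemma attribution already indicates.
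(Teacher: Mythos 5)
The paper does not prove \Cref{lem:dudley}: it is stated with a citation to Vershynin's textbook and no proof is given, so there is no ``paper's own proof'' to compare against. Your sketch is the standard Dudley--Fernique chaining proof and is essentially correct; I will only flag the small bits of bookkeeping that need tightening if you wanted a self-contained writeup.

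First, in the per-level union bound, with threshold $C_1\,2^{-k}\Delta(\sqrt{\log N_k}+u_k)$ each individual increment exceeds the threshold with probability $\lesssim \exp(-c\,C_1^2(\log N_k + u_k^2))$; choosing $C_1$ large so that $c\,C_1^2\geq 2$ makes this $\leq N_k^{-2}\exp(-u_k^2)$, and the union bound over the at most $N_{k-1}N_k\leq N_k^2$ pairs gives a level-$k$ failure probability $\leq 2\exp(-u_k^2)$. Your phrase ``probability $2\exp(-u_k^2 - 2\log N_k)$'' is the pre-union-bound per-pair estimate and should be stated as such. Second, your calibration $u_k=u+\sqrt k$ actually gives $\sum_k 2^{-k}\Delta\, u_k \lesssim \Delta(u+1)$, not $\Delta u$; the extra $\Delta$ is harmless because $\Diam(T,d)\lesssim \int_0^\infty\sqrt{\log N(T,d,\eps)}\,d\eps$ whenever the diameter is positive, so it is absorbed into the integral term (and for $u<\sqrt{\log 2}$ the tail bound is vacuous anyway). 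Third, in the moment step, write $Z=(Y/C - I)_+/\Delta$ so that $\Pr(Z>u)\leq 2e^{-u^2}$; then $Y\leq C(I+\Delta Z)$ gives $\E Y^\ell \leq 2^{\ell-1}C^\ell(I^\ell + \Delta^\ell\,\E Z^\ell)$ rather than the additive bound you wrote, and $\E Z^\ell = 2\ell\int_0^\infty u^{\ell-1}e^{-u^2}\,du = \ell\,\Gamma(\ell/2) \lesssim \sqrt\ell\,(\sqrt{\ell/(2e)})^\ell$, where the stray $\sqrt\ell$ and $2^{\ell-1}$ are absorbed by taking the absolute constant $C$ in the lemma a bit larger. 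Finally, the chaining telescope $X_t - X_{t_0} = \sum_k (X_{\pi_k(t)}-X_{\pi_{k-1}(t)})$ implicitly uses separability of the process; this is automatic in the paper's finite-dimensional applications but worth a word in a self-contained proof. None of these affect the validity of the argument, and as you note the result can simply be cited.
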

Note that when $r > \Diam(T,d)$, the covering number $\mathcal{N}(T,d,r) = 1$ and thus the integrand becomes $0$. Hence, Dudley's integral is in fact taken over the finite interval $[0,\Diam(T,d)]$.

The following covering numbers related to $\norm{\cdot}_{w,p}$ will be useful our analysis.  These are not novel results though we include a proof in \Cref{sec:entropy_estimates} for completeness.
\begin{lemma}\label{lem:entropy_estimates_consolidated}
	Suppose that $A\in \R^{n\times d}$ has full column rank and $W$ is a diagonal matrix whose diagonal entries are the Lewis weights of $A$. It holds that
	\[
	\log \mathcal{N}(B_{w,p}(W^{-1/p} A), \norm*{\cdot}_{w,q}, t)
	\lesssim 
	\begin{cases}
		d \log\frac{1}{t} & q = p\geq 1 \\
		t^{-p} q \sqrt{\log d} & 1\leq p\leq 2 \text{ and } q>2 \\
		t^{-2} q d^{1-\frac{2}{p}+\frac{2}{q}}  & p,q\geq 2.
	\end{cases}
	\]
\end{lemma}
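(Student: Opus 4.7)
My plan is to reduce each of the three cases to a standard entropy estimate on a symmetric convex body in $\R^d$ via the Lewis-weight reparametrization from \Cref{lem:lewis_weight_properties}(b). Let $U$ be the matrix from that lemma, so that $U$ and $A$ share a column space, $w_i = \norm{U_{i,\cdot}}_2^p$, and $W^{1/2-1/p}U$ has orthonormal columns. Every $v$ in the column space of $W^{-1/p}A$ is uniquely expressible as $v = W^{-1/p}Uz$ for some $z\in\R^d$, and a direct calculation yields
\[
\norm{v}_{w,p} = \norm{Uz}_p,
\qquad
\norm{v}_{w,q}^q = \sum_i w_i^{1-q/p}\abs{(Uz)_i}^q,
\qquad
\norm{z}_2^2 = \sum_i w_i^{1-2/p}\abs{(Uz)_i}^2.
\]
Hence $B_{w,p}(W^{-1/p}A)$ corresponds to the symmetric convex body $K = \setdef{z\in\R^d}{\norm{Uz}_p \le 1}$, and the problem becomes covering $K$ in the weighted $\ell_q$ geometry induced on $Uz$.

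For $q = p$, the set $K$ is a symmetric convex body in a $d$-dimensional normed space, so the standard volumetric argument gives $\log \mathcal{N}(K,K,t) \le d\log(3/t)$. For $1\le p\le 2$ and $q>2$, I would run Maurey's empirical method: each $Uz$ with $\norm{Uz}_p \le 1$ is a signed convex combination of suitably scaled canonical basis vectors, and an empirical average of $m\sim t^{-p}$ i.i.d.\ such atoms approximates $Uz$ in the weighted $\ell_q$ norm with error controlled by the Type-$2$ constant of $\ell_q$, which is $\sqrt{q}$, times the expected $\ell_q$-size of a Rademacher sum of atoms. The Lewis normalization $\sum_i w_i = d$ controls the latter by $\sqrt{\log d}$, and counting equivalence classes of empirical averages gives $\log\mathcal{N} \lesssim t^{-p}q\sqrt{\log d}$.

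For $p,q\ge 2$, \Cref{lem:lewis_weight_properties}(c) gives $\norm{v}_{w,2} \le d^{1/2-1/p}\norm{v}_{w,p}$ on the column space, so $B_{w,p} \subseteq d^{1/2-1/p}B_{w,2}$. It then suffices to cover $B_{w,2}$ in $\norm{\cdot}_{w,q}$, which is a dual-Sudakov-type bound: $\log\mathcal{N}(B_{w,2}, \norm{\cdot}_{w,q}, s) \lesssim s^{-2}\,\E\norm{g}_{w,q}^2$ for $g$ a standard Gaussian on the column space with respect to $\norm{\cdot}_{w,2}$. The Lewis identities let me evaluate $\E\norm{g}_{w,q}^2 \lesssim q\, d^{2/q}$ (the $d^{2/q}$ arising from $\sum_i w_i = d$ and the Gaussian moment in $\ell_q$), and rescaling $s = t/d^{1/2-1/p}$ produces the claimed exponent $d^{1-2/p+2/q}$.

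The main obstacle I anticipate is obtaining the $t^{-p}$ rate (rather than a crude $t^{-2}$) in the second case. A black-box Maurey bound, which uses only Type-$2$ of $\ell_q$, yields $t^{-2}$; the improvement requires a refined $p$-convex empirical argument that exploits the $\ell_p$-geometry of $K$ itself, essentially an entropy bound for the identity $\ell_p^d \to \ell_q^d$. Executing this inside the Lewis-weighted framework so that the result depends only on $d$ and not on the ambient dimension $n$ (achieved by tracking the Lewis identities above rather than applying uniform bounds on the atoms) is the central technical point, and analogous care is needed to ensure that the factor $d^{2/q}$ in the third case emerges from $\sum_i w_i = d$ instead of from a loose dimensional estimate.
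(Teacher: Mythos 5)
Your reparametrization $v = W^{-1/p}Uz$ is set up correctly, and your arguments for $q=p$ (volume bound) and for $p,q\geq 2$ (reduce $B_{w,p}\subseteq d^{1/2-1/p}B_{w,2}$ via \Cref{lem:lewis_weight_properties}(c), then dual Sudakov with the Gaussian moment computation driven by $\sum_i w_i = d$) coincide with the paper's proofs of those two cases.

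The middle case ($1\leq p\leq 2$, $q>2$) contains a genuine gap, and you diagnose it yourself: a black-box Maurey argument, which exploits only that $\ell_q$ has type $2$, yields an $m\sim q/t^2$ empirical-average bound and hence a $t^{-2}$ entropy estimate, not the claimed $t^{-p}$. The ``refined $p$-convex empirical argument'' you gesture at is exactly the part you do not supply, and it is not a routine modification: for $1<p\leq 2$ a point of $B_{w,p}$ is not a signed convex combination of scaled coordinate vectors with $\ell_1$-bounded coefficients, so the atoms you would like to sample are not directly available without a level-set decomposition that you mention but do not execute. The paper avoids $p$-convex machinery entirely. It factors through $\ell_2$,
\[
\log\mathcal{N}(B_{w,p},\norm*{\cdot}_{w,q},t) \leq \log\mathcal{N}(B_{w,p},\norm*{\cdot}_{w,2},\lambda) + \log\mathcal{N}\big(B_{w,2},\norm*{\cdot}_{w,q},t/\lambda\big),
\]
bounds the second term using the $p,q\geq 2$ lemma (which you already have), and bounds the first term by entropy duality, rewriting it as $\log\mathcal{N}(B_{w,2},\norm*{\cdot}_{w,p'},\lambda)$, then interpolating $\norm*{\cdot}_{w,p'}$ between $\norm*{\cdot}_{w,2}$ and $\norm*{\cdot}_{w,r}$ via H\"older and applying the $p,q\geq 2$ lemma once more with exponent $r = p'\vee\log d$; optimizing over $\lambda$ is what produces the $t^{-p}$ rate. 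Maurey's empirical method is used only for $p=1$, and even there only for the $\ell_1\to\ell_2$ covering (where $B_{w,1}$ is genuinely a symmetric convex hull of $n$ scaled coordinates), with the $\ell_q$ covering still obtained from the $\lambda$-split. So the step you flag as ``the central technical point'' is precisely where your route and the paper's diverge, and as written your proposal does not close it.
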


\paragraph{Lower Bound} 
The following two lemmata, \Cref{lem:lb_ber} and \Cref{lem:lb_idx_loc}, are needed in the proof of our lower bounds for $p\leq 2$ and $p\geq 2$, respectively. \Cref{lem:lb_ber} is a classical result, whose proof can be found, for example, in~\citet[p711]{kiltz2022theory}. 
The proof of \Cref{lem:lb_idx_loc} is postponed to \Cref{sec:prelim_proofs}.
\begin{lemma}\label{lem:lb_ber}
	Let $m$ be a positive integer.
	Suppose we have an $m$-dimensional vector whose entries are i.i.d. samples all drawn from either $\Ber(\frac{1}{2}+\frac{1}{\sqrt{m}})$ or  $\Ber(\frac{1}{2}-\frac{1}{\sqrt{m}})$.
	It requires $\Omega(m)$ queries to distinguish $\Ber(\frac{1}{2}+\frac{1}{\sqrt{m}})$ and $\Ber(\frac{1}{2}-\frac{1}{\sqrt{m}})$ with probability at least $3/5$.
\end{lemma}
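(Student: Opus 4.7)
The plan is a standard information-theoretic distinguishability argument via Le Cam's two-point method combined with Pinsker's inequality. Since the $m$ entries of the vector are i.i.d.\ under either hypothesis, any (possibly adaptive) algorithm that issues $k\leq m$ distinct queries is simply observing $k$ independent samples from the unknown Bernoulli distribution; adaptivity provides no benefit, because the conditional law of the unqueried entries given the queried ones is unchanged under either hypothesis, so the joint law of any $k$ queried values equals the product distribution of $k$ i.i.d.\ samples. Thus the lemma reduces to: distinguishing $P_0^{\otimes k} := \Ber(\tfrac{1}{2}+\tfrac{1}{\sqrt{m}})^{\otimes k}$ from $P_1^{\otimes k} := \Ber(\tfrac{1}{2}-\tfrac{1}{\sqrt{m}})^{\otimes k}$ with success probability at least $3/5$ requires $k=\Omega(m)$.

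By Le Cam's lemma, under a uniform prior on the two hypotheses, the Bayes-optimal tester succeeds with probability $\tfrac{1}{2}\bigl(1+\mathrm{TV}(P_0^{\otimes k},P_1^{\otimes k})\bigr)$. The hypothesis of success probability $\geq 3/5$ therefore forces
\[
\mathrm{TV}(P_0^{\otimes k},P_1^{\otimes k}) \geq \tfrac{1}{5}.
\]
I would then upper bound the total variation via Pinsker's inequality: $\mathrm{TV}(P_0^{\otimes k},P_1^{\otimes k}) \leq \sqrt{\tfrac{1}{2}\,D(P_0^{\otimes k}\,\|\,P_1^{\otimes k})}$. Using tensorization of KL, $D(P_0^{\otimes k}\,\|\,P_1^{\otimes k}) = k\cdot D(P_0\,\|\,P_1)$, and the elementary estimate $D(\Ber(p)\,\|\,\Ber(q)) \leq (p-q)^2/(q(1-q))$ applied at $p=\tfrac{1}{2}+\tfrac{1}{\sqrt{m}}$, $q=\tfrac{1}{2}-\tfrac{1}{\sqrt{m}}$ gives $D(P_0\,\|\,P_1) \leq C/m$ for an absolute constant $C$.

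Combining the two bounds yields $\tfrac{1}{5} \leq \sqrt{Ck/(2m)}$, i.e.\ $k \geq m/(50C) = \Omega(m)$, which is the desired lower bound. The only mildly subtle point is the reduction from queries to i.i.d.\ samples in the first paragraph: it is important that the queries are into a vector whose entries remain i.i.d.\ under each hypothesis, so that an adaptive querier gains no information from the indices chosen, only from the revealed values. Everything else is a routine application of classical inequalities.
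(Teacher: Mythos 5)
Your proof is correct and uses the standard textbook route (reduce adaptivity to i.i.d.\ sampling, then Le Cam's two-point method with Pinsker and KL tensorization). The paper does not supply its own proof of this lemma but cites it as classical (\citet[p711]{kiltz2022theory}), so your argument is essentially what that reference contains.
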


\begin{lemma}\label{lem:lb_idx_loc}
	Let $m$ be a positive integer.
	Suppose that $x\in \R^{2m}$ is a random vector in which all but one of the entries are the same and the distinct entry $x_{i^\ast}$ is located at a uniformly random position $i^*\in [2m]$.
	Any deterministic algorithm that determines with probability at least $3/5$ whether $i^\ast$ lies within $\{1,\dots,m\}$ or $\{m+1,\dots,2m\}$ must read $\Omega(m)$ entries of $x$.
\end{lemma}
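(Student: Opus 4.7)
The plan is a direct adaptive-query lower bound via Yao's minimax principle. Fix two distinct values $a \ne b$ and take the hard distribution to be: $x_j = a$ for all $j \ne i^*$ and $x_{i^*} = b$, with $i^*$ uniform over $[2m]$. It then suffices to show that every deterministic algorithm whose success probability against this distribution is at least $3/5$ reads $\Omega(m)$ entries.

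The key device is the algorithm's \emph{default execution path}: simulate the algorithm while responding to every query with the value $a$. Since the algorithm is deterministic, this yields a fixed sequence of queried positions $p_1,\dots,p_K \in [2m]$ and a fixed final output $\hat{y} \in \{\text{left},\text{right}\}$. Let $Q = \{p_1,\dots,p_K\}$; if the algorithm makes at most $q$ queries on every input, then $|Q| \le q$. The central observation is that whenever $i^* \notin Q$, every query of the actual execution returns $a$, so the algorithm's true trajectory coincides with the default path and its output equals $\hat{y}$.

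Because $i^*$ is uniform over $[2m]$, we have $\Pr[i^* \in Q] \le q/(2m)$. On the complementary event $\{i^* \notin Q\}$, the algorithm outputs the fixed label $\hat{y}$, which is correct only when $i^*$ falls in the half of $[2m]$ that $\hat{y}$ points to; this contributes at most $m/(2m) = 1/2$ to the success probability. Combining the two cases yields $\Pr[\text{correct}] \le \tfrac{1}{2} + \tfrac{q}{2m}$, and demanding that this be at least $3/5$ forces $q \ge m/5$.

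No step is substantive; the only subtle point is justifying that determinism forces the algorithm onto the default path whenever none of its queries hits $i^*$. This is immediate: the response sequence on the true input is identical to the all-$a$ response sequence up to (and only up to) the first query that lands on $i^*$, and since the algorithm's next query depends only on the response history, both executions produce the same queries and the same output.
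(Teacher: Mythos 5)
Your proof is correct and follows essentially the same approach as the paper's: condition on whether $i^*$ falls among the queried positions, note that on the complementary event the output is a fixed label, and combine to get $\Pr[\text{correct}] \leq \tfrac{1}{2} + \tfrac{q}{2m}$. Your explicit ``default execution path'' device is a cleaner justification than the paper's for why the queried set and output are fixed whenever $i^*$ is not hit, which is the point where adaptivity of the deterministic algorithm must be handled; the paper leaves this step implicit but relies on the same observation.
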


The next lemma extends the previous two lemmata to multiple instances of the problem considered therein. 
The proof is postponed to \Cref{sec:prelim_proofs}.

\begin{lemma} \label{lem:distinguishing_dist}
	Let $d$ and $m$ be positive integers.
	Suppose that $D_0$ and $D_1$ are two distributions in $\mathbb{R}^m$ and distinguishing whether a vector is drawn from $D_0$ or $D_1$ with probability at least $3/5$ requires querying $\beta m$ entries of the vector for some constant $\beta>0$.
	Consider a $dm$-dimensional random vector consisting of $d$ blocks, each of which is an $m$-dimensional vector drawn from either $D_0$ or $D_1$.
	Every deterministic algorithm that correctly distinguishes, with probability at least $2/3$, the distributions in $2d/3$ instances requires $\Omega(dm)$ entry queries to this $dm$-dimensional random vector.
\end{lemma}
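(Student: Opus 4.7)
My plan is to invoke Yao's minimax principle and then lower bound the query complexity of deterministic algorithms via an information-theoretic argument that exploits the independence between the $d$ blocks. By Yao's principle it suffices to consider the input distribution under which the labels $\theta=(\theta_1,\dots,\theta_d)\in \{0,1\}^d$ are i.i.d.\ uniform and the blocks are independent with $B_i\mid \theta_i \sim D_{\theta_i}$. Fix a deterministic algorithm $\mathcal{A}$ that makes $q$ queries, let $T$ denote its transcript (queried coordinates together with the values returned), and let $\hat{\theta}(T)\in \{0,1\}^d$ be the vector of guesses produced by $\mathcal{A}$. The success requirement reads $\Pr\{d_H(\hat{\theta},\theta)\leq d/3\}\geq 2/3$, where $d_H$ denotes Hamming distance.

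First, I would translate this recovery guarantee into a mutual-information lower bound. Since the Hamming ball of radius $d/3$ in $\{0,1\}^d$ has cardinality at most $2^{H_2(1/3)d}$ and the binary entropy satisfies $H_2(1/3)<1$, a list-decoding form of Fano's inequality gives $I(\theta;T)\geq c_0 d$ for some absolute constant $c_0>0$.

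Second, I would decompose this mutual information block by block. Since $\theta_1,\dots,\theta_d$ are mutually independent, the chain rule yields $I(\theta;T)\leq \sum_{i=1}^d I(\theta_i;T\mid \theta_{-i})$. Conditionally on $\theta_{-i}$, the other blocks $B_j$ ($j\neq i$) can be regenerated without reference to $\theta_i$, so all information about $\theta_i$ present in $T$ must pass through the entries of block $i$ that $\mathcal{A}$ chooses to query. Using the tensorization/chain rule for KL divergence of adaptive query responses, one bounds $I(\theta_i;T\mid \theta_{-i}) \leq C\E[q_i]/(\beta m)$, where $q_i$ is the number of queries $\mathcal{A}$ places in block $i$, provided each such query contributes at most $O(1/(\beta m))$ bits of information about $\theta_i$. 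This per-query information bound is the Pinsker-style reformulation of the single-instance hypothesis and, for the concrete hard instances employed in this paper (biased Bernoullis underlying \Cref{lem:lb_ber} and single-heavy-coordinate vectors underlying \Cref{lem:lb_idx_loc}), can be verified by a direct per-entry KL calculation. Combining with $\sum_i \E[q_i]\leq q$ gives $c_0 d \leq Cq/(\beta m)$, whence $q = \Omega(\beta dm) = \Omega(dm)$.

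The main obstacle is this middle step, namely translating the qualitative hypothesis ``$\beta m$ queries are necessary to distinguish $D_0$ from $D_1$ with probability $3/5$'' into the quantitative per-query information bound $O(1/(\beta m))$. A tempting alternative---reducing to the single-instance problem by placing the input block at a uniformly random position among the $d$ blocks and then appealing to the single-instance lower bound on the resulting algorithm---turns out to yield single-instance success probability only $\geq 4/9$, which is not bounded away from $1/2$ even after flipping outputs, and therefore cannot be amplified by repetition and majority voting. The information-theoretic route above circumvents this obstacle by exploiting the product structure of the input distribution directly.
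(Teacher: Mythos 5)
Your approach is genuinely different from the paper's, which uses a short direct argument: by pigeonhole, all but $d/10$ blocks receive fewer than $\beta m$ queries; for each such block the algorithm errs with probability at least $2/5$ (conditioning on everything outside block $i$, the interaction with block $i$ is exactly a single-instance tester making too few queries, and $\theta_i$ is independent of the other blocks); a concentration bound then forces more than $d/3$ errors with probability exceeding $1/3$. That argument never leaves the query-complexity language of the hypothesis.

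Your proposal, by contrast, routes through mutual information: Fano to get $I(\theta;T)\gtrsim d$, then the chain rule to decompose over blocks, then a per-query information bound $I(\theta_i;T\mid\theta_{-i})\lesssim \E[q_i]/(\beta m)$. The Fano step and the chain-rule step are correct. But the per-query bound is a genuine gap, not merely a technicality, and you should not characterize it as ``the Pinsker-style reformulation of the single-instance hypothesis.'' The hypothesis is only a query-complexity lower bound, which is strictly weaker than a per-query KL bound. Concretely, take $D_0$ and $D_1$ to be uniform on $\{0,1\}^m$ conditioned on parity $0$ and parity $1$ respectively: any $m-1$ coordinates are exactly uniform under both, so the first $m-1$ queries carry zero information and the $m$-th carries one full bit. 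The query-complexity hypothesis holds with $\beta=1$ and the lemma's conclusion is true, yet there is no $O(1/(\beta m))$ per-query bound. So your plan, as written, does not prove the lemma in the generality in which it is stated; it would only work after replacing the hypothesis with a stronger per-query (or strong data-processing) assumption. Your observation that the naive ``embed one instance at a random block'' reduction only yields success probability $4/9$ is correct, but the paper's counting argument is not that reduction --- it circumvents the amplification problem by letting the many low-query blocks each err with constant probability and then concentrating the error count, rather than by extracting a single good tester.
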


\section{Upper Bound}

\begin{algorithm}[bt]
	\caption{Generating a Sampling Matrix $\texttt{GSM}(k_1,\dots,k_n,\alpha)$}\label{alg:sampling}
	\begin{algorithmic}[1]
		\REQUIRE{
				$n$ integers $k_1,\dots,k_n\geq 0$; a sampling rate $\alpha<1$ \strut
		}
		\STATE $S \gets$ an $n\times n$ diagonal matrix, initialized to a zero matrix
		\FOR{$i=1,\dots,n$}
		\IF{$k_i > 0$}
		\STATE Generate a binomial random variable $N_i\sim \Bin(k_i,\alpha)$
		\STATE $S_{ii} \gets (\frac{N_i}{\alpha k_i})^{\frac{1}{p}}$ \label{step:row_split}
		\ENDIF
		\ENDFOR
		\STATE Return $S$
	\end{algorithmic}
\end{algorithm}

\begin{algorithm}[bt]
	\caption{Algorithm for Active Learning without Dependence on $n$}\label{alg:main_general_no_n}
	\begin{algorithmic}[1]
		\REQUIRE{
			\begin{varwidth}[t]{\linewidth}
				a matrix $A \in\mathbb{R}^{n \times d}$ \\
				a query access to the entries of the vector $b\in\mathbb{R}^n$ \\
				a function $f \in \lip_L$ \\
				an error parameter $\eps$ \\
				two sampling rates $\alpha <\alpha^\circ< 1$ \strut
		\end{varwidth}}
		\STATE Compute the Lewis weights of $A$, denoted by $w_1(A),\dots,w_n(A)$\label{step:dimreduce_start}
		\FOR{$i=1,\dots,n$}
		\STATE $k_i^\circ \gets \lceil \frac{n \cdot w_i(A)}{d} \rceil$ 
		\ENDFOR
		\STATE $S^\circ \gets \texttt{GSM}(k_1^\circ,\dots,k_n^\circ,\alpha^\circ)$ from \Cref{alg:sampling} \label{step:dimreduce_end}
		\STATE $m\gets $ number of nonzero rows in $S^\circ$
		\STATE Compute the Lewis weights of $S^\circ A$, denoted by $w_1(S^\circ A),\dots, w_n(S^\circ A)$ \label{step:main_start}
		\FOR{$i=1,\dots,n$}
		\STATE $k_i \gets \lceil \frac{m \cdot w_i(S^\circ A)}{d} \rceil$ 
		\ENDFOR
		\STATE $S \gets \texttt{GSM}(k_1,\dots,k_n,\alpha)$ from \Cref{alg:sampling} \label{step:main_end}
		\STATE Solve the minimization problem $\hat{x} := \arg\min_{x\in \mathbb{R}^d} \normp{SS^\circ f(Ax) - SS^\circ b}^p + \eps\normp{S^\circ Ax}^p$
		\STATE Return the vector $\hat{x} \in\mathbb{R}^d$
	\end{algorithmic}
\end{algorithm}

\subsection{Algorithm}\label{sec:full_alg}

To complement the proof overview in \Cref{sec:upper_overview}, we present our full algorithm in \Cref{alg:main_general_no_n} and explain the explicit implementation. 

It first constructs a sampling matrix $S^\circ$ (line \ref{step:dimreduce_start} to line \ref{step:dimreduce_end} of \Cref{alg:main_general_no_n}) and applies it to $A$, $f(A(\cdot))$ and $b$.
This sampling matrix $S^\circ$ is generated using \Cref{alg:sampling}.
When applying $S^\circ$ to $A$, in step \ref{step:row_split} of \Cref{alg:sampling}, it is equivalent to splitting the rows of $A$ such that all rows have uniformly bounded Lewis weights of $O(d/n)$.
To achieve this, it needs the Lewis weights of $A$ and they can be computed as in~\citep{CP15} for $p<4$ and as in~\citep{FLPS22} for $p\geq 4$.
Afterwards, we sample each row with the same probability $\alpha^\circ$.
This row-splitting approach has been used in the proofs of \citet{CP15,taisuke:thesis} and in the algorithms in \citep{gajjar2023improved,COLT24}.
Details of this row-splitting technique can be found in \Cref{sec:generalized}. 

We set the sampling rate 
$\alpha^\circ = \poly(d/\eps)/n$. 
This effectively reduces the dimension from $n$, the number of rows of $A$, to $m \sim \alpha^\circ n = \poly(d/\eps)$, the number of non-zero rows of $S^\circ A$. 
Therefore, it removes the dependence on $n$ in our bound.

It then constructs the main sampling matrix $S$ (line \ref{step:main_start} to line \ref{step:main_end} of \Cref{alg:main_general_no_n}) with the sampling rate $\alpha = d^{\frac{p}{2}\vee 1}/(\eps^{p\vee 2}m) \poly\log(m)$, whereby avoiding dependence on $n$ as previously discussed, and applies to $S^\circ A$, $S^\circ f(A(\cdot))$ and $S^\circ b$.
That means that the number of non-zero entries of $SS^\circ b$ is, with high probability, at most $2\alpha m \sim d^{\frac{p}{2}\vee 1}/(\eps^{p\vee 2}) \poly\log(d/\eps)$, which is the query complexity we are looking for.
Note that $S$ is also generated using \Cref{alg:sampling} and hence satisfies the property of uniformly bounded Lewis weights through the previously mentioned row-splitting techniques.
Finally, the algorithm outputs the optimal solution $\hat{x}$ of the regularized problem 
\begin{align*}
    \min_{x\in \mathbb{R}^d} \normp{SS^\circ f(Ax) - SS^\circ b}^p + \eps\normp{S^\circ Ax}^p
\end{align*}
and that completes our full algorithm.

\begin{algorithm}[t]
	\caption{Algorithm for Active Learning}\label{alg:main_general}
	\begin{algorithmic}[1]
		\REQUIRE{
			\begin{varwidth}[t]{\linewidth}
				a matrix $A \in\mathbb{R}^{n \times d}$ \\
				a query access to the entries of the vector $b\in\mathbb{R}^n$ \\
				a function $f \in \lip_L$ \\
				an error parameter $\eps$ \\
				a sampling rate $\alpha < 1$ \strut
		\end{varwidth}}
		\STATE Compute the Lewis weights $w_1,\dots,w_n$ of $A$
		\FOR{$i=1,\dots,n$}
		\STATE $k_i \gets \lceil \frac{n \cdot w_i}{d} \rceil$
		\ENDFOR
		\STATE $S \gets \texttt{GSM}(k_1,\dots,k_n,\alpha)$ from \Cref{alg:sampling}
		\STATE Solve the minimization problem $\hat{x} := \arg\min_{x\in \mathbb{R}^d} \normp{Sf(Ax) - Sb}^p + \eps\normp{Ax}^p$
		\STATE Return the vector $\hat{x} \in\mathbb{R}^d$
	\end{algorithmic}
\end{algorithm}

\subsection{Equivalent Statement}\label{sec:generalized}

We shall first reduce the problem to the case where $A$ has uniformly bounded Lewis weights, before proving in the next section that the output of Algorithm \ref{alg:main_general} with a suitable $\alpha$ satisfies \Cref{eq:main_obj} with probability $0.99$.

We start with the following observation.
Let $k_i$ be $\lceil \frac{n \cdot w_i}{d} \rceil$ for $i=1,\dots,n$ which is the same term also defined in Algorithm \ref{alg:main_general}.
Hence, we rewrite
\begin{align*}
	\normp{f(Ax) - b}^p
	& =
	\sum_{i=1}^n \abs{(f(Ax) -b)_i}^p
	=
	\sum_{i=1}^n k_i \cdot \frac{1}{k_i}\abs{(f(Ax) -b)_i}^p
	=
	\sum_{i=1}^n \sum_{j=1}^{k_i}  \frac{1}{k_i}\abs{(f(Ax) -b)_i}^p.
\end{align*}
Now, suppose that we duplicate the $i$-th term, $\abs{(f(Ax) - b)_i}^p$, $k_i$ times and assign a weight of $1/k_i^{\frac{1}{p}}$ to each duplicate term.
Formally, let 
\begin{itemize}[noitemsep]
	\item $n' = \sum_{i=1}^n k_i$,
	\item $A'$ be an $n'$-by-$d$ matrix in which $A'_{j,\cdot} = A_{i,\cdot}$ if $\sum_{a=1}^{i-1}k_a  < j \leq \sum_{a=1}^{i}k_a$,
	\item $b'$ be an $n'$-dimensional vector in which $b'_j = b_i$ if $\sum_{a=1}^{i-1}k_a  < j \leq \sum_{a=1}^{i}k_a$,
	\item $\Lambda$ be an $n'$-by-$n'$ diagonal matrix in which $\Lambda_{jj} = {k_i}^{-\frac{1}{p}}$ if $\sum_{a=1}^{i-1}k_a  < j \leq \sum_{a=1}^{i}k_a$,
\end{itemize}
In other words, we have
\begin{align*}
	\normp{f(Ax) - b}^p
	& =
	\sum_{j=1}^{n'} \Lambda_{jj}^p\abs{(f(A'x) - b')_j}^p
	=
	\normp{\Lambda f(A'x) - \Lambda b'}^p
\end{align*}
Note that we still have
\begin{equation}\label{eqn:x star and opt}
	\opt 
	=
	\min_{x\in \mathbb{R}^d}\normp{\Lambda f(A'x) - \Lambda b'}^p
	\quad \text{and} \quad x^*
	=
	\argmin_{\substack{x\in \mathbb{R}^d\\ \normp{\Lambda f(A'x)-\Lambda b'}^p=\opt}}\normp{\Lambda A'x}^p.
\end{equation}

On the other hand, in Algorithm \ref{alg:main_general}, recall that $N_i\sim \Bin(k_i,\alpha)$ for $i=1,\dots,n$, it can be rewritten as 
\begin{align*}
	N_i
	& =
	\sum_{j=1}^{k_i} N_{i,j},
\end{align*}
where $N_{i,1},\dots,N_{i,k_i}$ are i.i.d.\xspace $\Ber(\alpha)$ variables. 
In other words, we have
\begin{align*}
	\normp{Sf(Ax) - Sb}^p
	& =
	\sum_{i=1}^n S_{ii}^p\abs{(f(Ax) - b)_i}^p
	=
	\sum_{i=1}^n \sum_{j=1}^{k_i} \frac{N_{i,j}}{\alpha} \frac{1}{k_i}\abs{(f(Ax) - b)_i}^p.
\end{align*}
Let $S'$ be an $n'$-by-$n'$ diagonal matrix in which $S'_{jj} = \big(\frac{N_{i,j'}}{\alpha}\big)^{\frac{1}{p}}$ if $j=\sum_{a=1}^{i-1}k_a + j'$ for $j'=1,\dots,k_i$.
Then 
\begin{align*}
	\normp{Sf(Ax) - Sb}^p
	& =
	\sum_{j=1}^{n'} {S_{jj}'}^p\Lambda_{jj}^p\abs{(f(A'x) - b')_j}^p
	=
	\normp{S'\Lambda f(A'x) - S' \Lambda b'}^p.
\end{align*}
Also, it is easy to check that
\begin{align*}
	\normp{Ax}^p
	& =
	\normp{\Lambda A'x}^p.
\end{align*}
We still have 
\begin{equation}\label{eqn:x hat}
	\hat{x}
	=
	\argmin_{x\in\mathbb{R}^d} \normp{S'\Lambda f(A'x) - S' \Lambda b'}^p + \eps \normp{\Lambda A'x}^p.
\end{equation}

The advantage of introducing the diagonal matrix $\Lambda$ is to bound the Lewis weights.
Formally, we have the following observation.
By the definition of Lewis weights, the $j$-th Lewis weight of $\Lambda A'$ is $\frac{w_i}{k_i}$ if $j=\sum_{a=1}^{i-1}k_a + 1,\dots,\sum_{a=1}^{i}k_a$ for $j=1,\dots,n'$.
Recall that $k_i = \lceil \frac{n \cdot w_i}{d} \rceil$ and we have
\begin{align*}
	\frac{w_i}{k_i} 
	= 
	\frac{w_i}{\lceil \frac{n \cdot w_i}{d} \rceil}
	\leq
	\frac{d}{n}
	\quad\text{and}\quad
	n'
	=
	\sum_{i=1}^n k_i
	\leq
	\sum_{i=1}^n (\frac{n \cdot w_i}{d}+1)
	=
	2n.
\end{align*}
Therefore, we generalize our statement to be the following.
Let $A'$ be an $n'$-by-$d$ matrix, $f\in \lip_L$, $b'$ be an $n'$-dimensional vector, $\Lambda$ be an arbitrary $n'$-by-$n'$ positive diagonal matrix such that the Lewis weights of $\Lambda A'$ is at most $\frac{2d}{n'}$.
Define $\opt$ and $x^\ast$ as in~\Cref{eqn:x star and opt}.
Furthermore, let $S'$ be an $n'$-by-$n'$ diagonal random matrix in which the diagonal entries are i.i.d.\ $\alpha^{-\frac{1}{p}}\cdot\Ber(\alpha)$ variables, i.e.
\begin{align*}
	{S'_{ii}} = \begin{cases}
		{\alpha}^{-\frac{1}{p}} & \text{with probability $\alpha$}\\
		0 & \text{with probability $1-\alpha$}
	\end{cases}
\end{align*}
and define $\hat{x}$ as in~\Cref{eqn:x hat}.
Our goal is to show, for a suitable $\alpha$, we have in correspondence to \Cref{eq:general_error_guarantee}
\[
\normp{\Lambda f(A' \hat{x}) - \Lambda b'}^p 
\leq 
(1+\eps)\opt + L^p \eps \normp{\Lambda A' x^*}^p.
\]

\subsection{Correctness}

We would like to prove that the output of Algorithm \ref{alg:main_general} satisfies \Cref{eq:general_error_guarantee} with probability $0.99$. 
In view of \Cref{sec:generalized}, we can replace $A$ with $\Lambda A$, where the Lewis weights of $\Lambda A $ are uniformly bounded by $2d/n$. The desired error guarantee is therefore
\begin{equation} \label{eqn:general_error_guarantee2}
	\normp{\Lambda f(A \hat{x}) - \Lambda b}^p 
	\leq 
	(1+\eps)\opt + L^p \eps \normp{\Lambda A x^*}^p.	
\end{equation}
By replacing $f(x)$ with $f(x)/L$ and $b$ with $b/L$, we can henceforce assume that $L = 1$ and the error guarantee \Cref{eq:general_error_guarantee} 
becomes
\begin{equation}\label{eq:main_obj}
	\normp{\Lambda f(A\hat{x}) - \Lambda b}^p
	\leq
	(1+\eps)\opt + \eps\normp{\Lambda Ax^*}^p.
\end{equation}

We shall first prove a weaker version of \Cref{thm:main_upper} with query complexity containing $\log n$ factors and then show how to remove the $\log n$ factors in \Cref{sec:log n factors}. 
The weaker version of \Cref{thm:main_upper} is stated formally below.
\begin{theorem}\label{thm:main_upper_weaker}
    Let $A\in \R^{n\times d}$, $\bar{x}\in \mathbb{R}^d$, $b\in \R^n$, $f\in\lip_1$, $\eps \in (0,1)$ be sufficiently small
	and $\Lambda$ be an $n\times n$ diagonal matrix satisfying that $\Lambda_{ii} > 0$ and $w_i(\Lambda A) \lesssim d/n$ for all $i$.
	There is a randomized algorithm which, with probability at least $0.9$, makes $O\big(d^{1\vee\frac{p}{2}}/\eps^{2\vee p} \cdot \poly\log n\big)$ queries to the entries of $b$ and returns an $\hat{x}\in\R^d$ satisfying 
	\begin{align*}
		\normp{\Lambda (f(A\hat{x}) - b)}^p
		\leq
		(1+\eps)\normp{\Lambda (f(A\bar{x} - b) }^p + \eps\normp{\Lambda A\bar{x}}^p.
	\end{align*}
	The hidden constant in the bound on number of queries depends on $p$ only.
\end{theorem}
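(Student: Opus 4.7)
The plan is to run \Cref{alg:main_general} with the reference point $\bar x$ in place of $x^\ast$ and sampling rate $\alpha$ tuned so that the resulting matrix $S$ has $m:=\alpha n \sim d^{1\vee p/2}/\eps^{2\vee p}\cdot\poly\log n$ expected non-zero rows. After the reduction described in \Cref{sec:generalized} one may assume the Lewis weights of $\Lambda A$ are uniformly bounded by $O(d/n)$, so uniform Bernoulli sampling at rate $\alpha$ is a valid Lewis-weight sample and \Cref{lem:SE} makes $S$ an $\ell_p$-subspace embedding of $\Lambda A$ with distortion $1+O(\eps)$ with probability at least $0.99$. The output $\hat x$ satisfies the regularized optimality inequality
\[
\normp{S\Lambda(f(A\hat x)-b)}^p + \eps\normp{\Lambda A\hat x}^p \le \normp{S\Lambda(f(A\bar x)-b)}^p + \eps\normp{\Lambda A\bar x}^p,
\]
so the task reduces to bounding the sampling discrepancy $\Err(x)$ of \Cref{eq:concen_intro}, with $x^\ast$ replaced by $\bar x$, at $x=\hat x$.

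The main technical lemma I would prove is a two-parameter uniform concentration bound of the form
\[
\sup_{x\in T}\Err(x)\lesssim \sqrt{d\cdot \poly(\log n,\log \delta^{-1})/m}\cdot \sqrt{FR}
\]
with probability $1-\delta$, where
\[
T=\setdef{x\in\R^d}{\normp{\Lambda Ax}^p\le R \text{ and } \normp{\Lambda(f(Ax)-f(A\bar x))}^p\le F}.
\]
I would establish it by viewing $\{\Err(x)\}_{x\in T}$ as a zero-mean stochastic process with subgaussian increments, bounding those increments by a pseudometric that mixes an $\ell_p$- and an $\ell_q$-type distance on $\Lambda A(x-y)$ (the Lipschitz property of $f$ is what lets one linearize the nonlinear increments $\Lambda(f(Ax)-f(Ay))$ by $\Lambda A(x-y)$), and then invoking Dudley's integral (\Cref{lem:dudley}) together with the covering-number estimates of \Cref{lem:entropy_estimates_consolidated}. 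The hard part is producing the $\sqrt{FR}$ factor in place of $R$, which is what drives the $1/\eps^2$ scaling; I would obtain it by splitting the increments into a ``small-value'' contribution controlled by $F$ and a ``high-Lewis-weight'' contribution controlled by $R$, in the spirit of \citet{musco2022active,taisuke:thesis} but adapted to nonlinear $f$. The regimes $1\le p\le 2$ and $p>2$ are handled via the two different covering-number bounds in \Cref{lem:entropy_estimates_consolidated}.

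With the concentration bound in hand I would carry out the bootstrap from \Cref{sec:upper_overview}. Start with the coarse choice $R_0\sim \eps^{-1}\OPT+\normp{\Lambda A\bar x}^p$ and $F_0=R_0$, obtained from regularized optimality via Markov, so that $\hat x\in T_0$. Then, by adapting the constant-factor approximation argument of \citet{COLT24} from $p=2$ to general $p$ and from $x^\ast$ to an arbitrary reference point $\bar x$, upgrade $F$ to $F_\ast\sim \OPT+\eps\normp{\Lambda A\bar x}^p$. Feeding the resulting error bound back into the rearranged optimality inequality
\[
\normp{\Lambda A\hat x}^p \le \eps^{-1}\bigl(\normp{S\Lambda(f(A\bar x)-b)}^p - \normp{S\Lambda(f(A\hat x)-b)}^p\bigr) + \normp{\Lambda A\bar x}^p
\]
iteratively refines the radius; after $O(\log\log(1/\eps))$ rounds it collapses to $R_\ast\sim \OPT+\normp{\Lambda A\bar x}^p$. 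Taking $\delta=1/\log\log(1/\eps)$ in each application keeps the accumulated failure probability constant while only inflating the polylog factor.

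Substituting $(R_\ast,F_\ast)$ into the concentration bound then yields $\Err(\hat x)\lesssim \eps(\OPT+\normp{\Lambda A\bar x}^p)$, and I would finish through the chain
\begin{align*}
\normp{\Lambda(f(A\hat x)-b)}^p-\normp{\Lambda(f(A\bar x)-b)}^p
&\le \normp{S\Lambda(f(A\hat x)-b)}^p - \normp{S\Lambda(f(A\bar x)-b)}^p + \Err(\hat x)\\
&\le \eps\normp{\Lambda A\bar x}^p + \Err(\hat x),
\end{align*}
where the second inequality is regularized optimality, and then rescale $\eps$ by a constant factor to absorb the $\Err(\hat x)$ term into the stated $(1+\eps)\normp{\Lambda(f(A\bar x)-b)}^p+\eps\normp{\Lambda A\bar x}^p$ guarantee. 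The two main obstacles I expect are (i) producing the $\sqrt{FR}$ form of the concentration bound in the nonlinear setting, and (ii) generalizing the $p=2$ constant-factor bound of \citet{COLT24} on $\normtwo{f(A\hat x)-f(Ax^\ast)}$ to general $\ell_p$ and to an arbitrary reference point $\bar x$.
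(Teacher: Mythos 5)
Your proposal is correct and follows essentially the same route as the paper: sample via a Bernoulli sampling matrix whose rate is justified through the uniformly-bounded-Lewis-weight reduction, exploit the regularized optimality inequality, prove the two-parameter $\sqrt{FR}$ concentration bound via a pseudometric and Dudley's integral with the covering estimates, obtain $F_\ast\sim\OPT+\eps\normp{\Lambda A\bar x}^p$ by an adaptation of the COLT'24 argument (this is exactly what \Cref{cor:colt_concen} and equation \Cref{eq:x_hat_in_T} accomplish), and then bootstrap $R_i$ over $O(\log\log(1/\eps))$ rounds with $\delta\sim 1/\log\log(1/\eps)$ to keep the total failure probability constant. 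The only minor cosmetic difference is that after the first step the paper keeps $F$ fixed at $F_0$ and shrinks only $R_i$, whereas you describe ``upgrading'' $F$ once at the outset; these are the same iteration.
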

Note that we introduce a vector $\bar{x}\in \mathbb{R}^d$.
If we take $\bar{x} = x^*$, \Cref{thm:main_upper_weaker} becomes \Cref{thm:main_upper} except that the query complexity contains $\log n$ factors.
The reason we introduce $\bar{x}$ is because when we remove the $\log n$ factors in \Cref{sec:log n factors} we can reuse the theorem using a different $\bar{x}$.
Now, to prove \Cref{thm:main_upper_weaker}, we first provide a concentration bound in \Cref{lem:main_concen_final}.

\begin{lemma}\label{lem:main_concen_final}
	Let $A\in \R^{n\times d}$, $f\in\lip_1$, $\eps \in (0,1)$ be sufficiently small and $\Lambda$ be an $n\times n$ diagonal matrix satisfying that $\Lambda_{ii} > 0$ and $w_i(\Lambda A) \lesssim d/n$ for all $i\in [n]$.
	Also, let $S$ be an $n$-by-$n$ random diagonal matrix in which the diagonal entries are i.i.d.\ $\alpha^{-\frac{1}{p}}\cdot \Ber(\alpha)$ variables where $\alpha \gtrsim \frac{d^{\frac{p}{2}\vee 1}}{n \eps^{p\vee 2}}\cdot \poly\log n$.
	If $\hat{x},\bar{x}\in \mathbb{R}^d$ satisfy
	\begin{gather*}
		\hat{x}
		=
		\argmin_{x\in \mathbb{R}^d} \normp{S\Lambda(f(Ax) - b)}^p + \eps \normp{\Lambda A x}^p\\
		\intertext{and} 
		\normp{\Lambda(f(A \bar{x}) - b)}^p - \normp{\Lambda(f(A \hat{x}) - b)}^p 
		\lesssim 
		\eps(\normp{\Lambda(f(A\bar{x}) - b)}^p + \eps \norm{\Lambda A\bar{x}}^p)
	\end{gather*}
	then, with probability at least $0.9$, 
	\begin{align*}
		\MoveEqLeft \abs{(\normp{S \Lambda(f(A\hat{x}) - b)}^p - \normp{S \Lambda(f(A\bar{x}) - b)}^p) - (\normp{\Lambda(f(A\hat{x}) - b)}^p - \normp{\Lambda(f(A\bar{x}) - b)}^p)} \\
		& \leq
		\eps \cdot (\normp{\Lambda (f(A\bar{x}) - b)}^p + \normp{\Lambda A \bar{x}}^p).
	\end{align*}
	
\end{lemma}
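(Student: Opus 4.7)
The plan is to bound, uniformly over a bounded domain $T\subseteq \R^d$ containing $\hat{x}$, the centred sampling error
\[
\Err(x) := \bigl|(\normp{S\Lambda(f(Ax)-b)}^p - \normp{S\Lambda(f(A\bar{x})-b)}^p) - (\normp{\Lambda(f(Ax)-b)}^p - \normp{\Lambda(f(A\bar{x})-b)}^p)\bigr|,
\]
and then iteratively shrink $T$ so that $\sup_{x\in T}\Err(x)$ tightens to the desired level. Write $V := \normp{\Lambda(f(A\bar{x})-b)}^p$ and $W := \normp{\Lambda A\bar{x}}^p$. By the optimality of $\hat{x}$ for the regularized sampled objective and dropping a nonnegative term,
\[
\normp{\Lambda A\hat{x}}^p \leq \tfrac{1}{\eps}\normp{S\Lambda(f(A\bar{x})-b)}^p + W,
\]
and Markov's inequality applied to the unbiased estimator $\E\normp{S\Lambda v}^p = \normp{\Lambda v}^p$ gives, with high probability, $\normp{\Lambda A\hat{x}}^p \lesssim R_0 := V/\eps + W$. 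Combining the lemma's hypothesis with the elementary inequality $(a+b)^p\lesssim_p a^p+b^p$, we also obtain $\normp{\Lambda(f(A\hat{x})-f(A\bar{x}))}^p \lesssim V + \eps W =: F_0$. Hence $\hat{x}$ lies in the set
\[
T(R,F) := \Bigl\{x\in\R^d : \normp{\Lambda A(x-\bar{x})}^p \lesssim R,\ \normp{\Lambda(f(Ax)-f(A\bar{x}))}^p \lesssim F\Bigr\}
\]
with $(R,F)=(R_0,F_0)$.

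The second step is the core concentration estimate: for any fixed $R,F$, with probability at least $1-\delta$,
\[
\sup_{x\in T(R,F)}\Err(x) \lesssim \sqrt{\tfrac{d\cdot \poly(\log n,\log\delta^{-1})}{\alpha n}}\cdot\sqrt{FR}
\]
when $p\in[1,2]$, with an analogous statement for $p>2$. I would prove this by symmetrization followed by Dudley's integral (\Cref{lem:dudley}) applied to the resulting Rademacher process indexed by $T(R,F)$. The subgaussian increment between two indices $x,y$ is estimated via the pointwise bound $\bigl||u|^p - |v|^p\bigr|^2 \lesssim_p |u-v|^2 (|u|\vee|v|)^{2(p-1)}$ applied to the entries $u = f(Ax)_i - b_i$ and $v = f(Ay)_i - b_i$; this factors the variance proxy into a Lipschitz-controlled piece (producing the $F$) and a subspace-embedding piece for $\Lambda A$ (producing the $R$). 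The covering numbers of $T(R,F)$, after rescaling by $F$ and $R$, are controlled via \Cref{lem:entropy_estimates_consolidated}, and integrating delivers the bound. The sampling rate $\alpha \gtrsim \tfrac{d^{p/2\vee 1}}{n\eps^{p\vee 2}}\poly\log n$ is calibrated so that the prefactor evaluates to $\eps$ (resp.\ $\eps^{p/2}$ for $p>2$).

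The third step is the bootstrapping that sharpens $R$. Plugging $(R_0,F_0)$ into Step 2 yields $\sup_{x\in T(R_0,F_0)}\Err(x) \lesssim \eps\sqrt{F_0 R_0}\lesssim \sqrt{\eps}\,V + \eps^{3/2}W$. Re-running the optimality argument of Step 1, but now using this $\Err$ bound to replace the crude Markov estimate when passing between sampled and unsampled norms, we obtain an improved radius $R_1 \lesssim V/\sqrt{\eps} + W$, while $F_0$ is unchanged. Iterating, the radii satisfy a recursion $R_{k+1}\lesssim \sqrt{R_k F_0} + W$, a contraction that drives $R_k$ down to $R_\infty \sim V + W$ in $O(\log\log(1/\eps))$ rounds. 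At convergence, since $\hat{x}\in T(R_\infty,F_0)$,
\[
\Err(\hat{x}) \leq \sup_{x\in T(R_\infty,F_0)}\Err(x) \lesssim \eps\sqrt{F_0\cdot R_\infty} \lesssim \eps(V + W),
\]
which is exactly the stated bound. To control the accumulated failure probability, I would set $\delta \sim 1/\log\log(1/\eps)$ per round and union-bound; this inflates $\alpha$ only by a $\poly\log\log(1/\eps)$ factor absorbed into the $\poly\log n$.

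The main obstacle is the $\sqrt{FR}$ structure of the concentration bound in Step 2; specifically, decoupling the Lipschitz scale $F$ from the domain scale $R$ inside the increment so that Dudley's integral delivers $\sqrt{FR}$ rather than the cruder $R$. Without this decoupling, $\sup_{x\in T}\Err(x)$ is bounded only by $\eps R_0 = V + \eps W$, which loses a factor of $1/\eps$ in the final query complexity. The contraction argument of Step 3, while algebraically delicate, is routine once Step 2 is available, provided that the inclusion $\hat{x}\in T(R_k,F_0)$ is carefully propagated through each round via the same optimality manipulation as in Step 1.
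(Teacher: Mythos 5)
Your overall strategy matches the paper: bound $\Err$ uniformly over a domain $T(R,F)$ via symmetrization and Dudley's integral, obtain a $\sqrt{FR}$-type bound, and bootstrap to shrink $R$ over $O(\log\log(1/\eps))$ rounds. However, there is a genuine gap in your Step~1: the derivation of $F_0$.

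You claim that combining the lemma's hypothesis with the elementary inequality $(a+b)^p\lesssim_p a^p+b^p$ yields $\normp{\Lambda(f(A\hat{x})-f(A\bar{x}))}^p \lesssim V + \eps W$. This does not follow. The triangle inequality gives $\normp{\Lambda(f(A\hat{x})-f(A\bar{x}))}^p \lesssim_p \normp{\Lambda(f(A\hat{x})-b)}^p + V$, so you need an \emph{upper} bound on $\normp{\Lambda(f(A\hat{x})-b)}^p$. But the hypothesis
\[
\normp{\Lambda(f(A\bar{x})-b)}^p - \normp{\Lambda(f(A\hat{x})-b)}^p \lesssim \eps(V+\eps W)
\]
is one-sided: it says $\normp{\Lambda(f(A\hat{x})-b)}^p \geq V - O(\eps(V+\eps W))$, a \emph{lower} bound, which is vacuous if $\normp{\Lambda(f(A\hat{x})-b)}^p$ is huge. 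Indeed, when the lemma is applied with $\bar{x}=x^*$, this hypothesis is satisfied trivially (the left side is $\leq 0$), so it carries no information whatsoever about the size of $\normp{\Lambda(f(A\hat{x})-b)}^p$.

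This gap is not cosmetic. The only a priori upper bound from Lipschitzness is $\normp{\Lambda(f(A\hat{x})-f(A\bar{x}))}^p \leq \normp{\Lambda A(\hat{x}-\bar{x})}^p \lesssim R_0 = V/\eps + W$, i.e., $F\lesssim R_0$. With $F\asymp R$, your Step~2 estimate collapses to $\sup_{T}\Err \lesssim \eps R$, and plugging this into the iteration yields $\normp{\Lambda A\hat{x}}^p \lesssim \tfrac{1}{\eps}(\eps(V+\eps W) + \eps R_0) + W \asymp R_0$, so the recursion does not contract at all. The factor-of-$\eps$ separation between $F_0$ and $R_0$ is precisely what makes the bootstrap make progress, and it must be established, not assumed.

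The paper closes this gap by first establishing a separate concentration estimate (\Cref{cor:colt_concen}, the case $v=f(A\bar{x})$ of \Cref{lem:main_concen}) showing
\[
\sup_{x\in T_{-1}} \abs*{\normp{S\Lambda(f(Ax)-f(A\bar{x}))}^p - \normp{\Lambda(f(Ax)-f(A\bar{x}))}^p} \lesssim \eps R_0,
\]
where $T_{-1}=\{x:\normp{\Lambda Ax}^p\leq R_0\}$; this requires only the subspace-embedding property of $S$ and no information about $\hat{x}$. Then it bounds the \emph{sampled} quantity $\normp{S\Lambda(f(A\hat{x})-f(A\bar{x}))}^p$ directly via the sampled triangle inequality, the optimality of $\hat{x}$ for the \emph{sampled} regularized objective (which gives $\normp{S\Lambda(f(A\hat{x})-b)}^p \leq \normp{S\Lambda(f(A\bar{x})-b)}^p + \eps W$), and Markov's inequality for $\normp{S\Lambda(f(A\bar{x})-b)}^p$, arriving at $\normp{S\Lambda(f(A\hat{x})-f(A\bar{x}))}^p \lesssim V + \eps W$. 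Transferring back via the concentration estimate (and noting $\eps R_0 \lesssim V + \eps W$) gives $F_0 \lesssim V + \eps W$. You need to incorporate this subsidiary concentration argument; the rest of your plan, including the contraction and the $\delta\sim 1/\log\log(1/\eps)$ failure-probability bookkeeping, is sound, though in Step~3 you should also make explicit that the hypothesis of the lemma is what caps the unsampled gap $\normp{\Lambda(f(A\bar{x})-b)}^p-\normp{\Lambda(f(A\hat{x})-b)}^p$ when you re-run the optimality argument.
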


We now show how \Cref{lem:main_concen_final} can be used to prove \Cref{thm:main_upper_weaker}.
The proof of \Cref{lem:main_concen_final} will be presented in \Cref{sec:main_concen}.

\begin{proof}[Proof of \Cref{thm:main_upper_weaker}]
	We shall apply \Cref{lem:main_concen_final} with $\bar{x} = x^*$ to prove \Cref{thm:main_upper_weaker}. First, we verify the conditions in \Cref{lem:main_concen_final}.
	Clearly, the output $\hat{x}$ of \Cref{alg:main_general} satisfies 
	\begin{align*}
		\hat{x}
		=
		\argmin_{x\in \mathbb{R}^d} \normp{S\Lambda(f(Ax) - b)}^p + \eps \normp{\Lambda A x}^p
	\end{align*}
	and, by the optimality of $x^*$, we also have
	\begin{align*}
		\normp{\Lambda(f(A x^*) - b)}^p - \normp{\Lambda(f(A \hat{x}) - b)}^p 
		& \leq
		0.
	\end{align*}
	Recall that $\normp{\Lambda(f(Ax^*) - b)}^p = \opt$.
	By \Cref{lem:main_concen_final}, with probability at least $0.9$, we have
	\begin{align*}
		\MoveEqLeft \abs{(\normp{S \Lambda(f(A\hat{x}) - b)}^p - \normp{S \Lambda(f(Ax^*) - b)}^p) - (\normp{\Lambda(f(A\hat{x}) - b)}^p - \opt)} \\
		& \leq
		\eps \cdot (\opt + \normp{\Lambda A x^*}^p),
	\end{align*}
	which implies that
	\begin{align*}
		\MoveEqLeft \normp{\Lambda (f(A\hat{x})-b)}^p - \opt \\
		& \leq
		\normp{S \Lambda(f(A\hat{x}) - b)}^p - \normp{S \Lambda(f(Ax^*) - b)}^p + \eps \cdot (\opt + \normp{\Lambda A x^*}^p) \\
		& \leq
		\eps \cdot \normp{\Lambda Ax^*}^p + \eps \cdot (\opt + \normp{\Lambda A x^*}^p) \quad \text{by the optimality of $\hat{x}$} \\
		& \lesssim
		\eps \cdot (\opt + \normp{\Lambda A x^*}^p).
	\end{align*}
	This completes the proof of \Cref{thm:main_upper_weaker}.
\end{proof}

\subsection{Concentration Bounds}\label{sec:main_concen}

\subsubsection{Proof of \Cref{lem:main_concen_final}}

To prove \Cref{lem:main_concen_final}, we rely on the following concentration bound provided in \Cref{lem:main_concen} and provide the proof in \Cref{sec:main_concen_proof}.

\begin{lemma}\label{lem:main_concen}
	Let $A\in \R^{n\times d}$, $f\in\lip_1$, $\eps \in (0,1)$ be sufficiently small and $\Lambda$ be an $n\times n$ diagonal matrix satisfying that $\Lambda_{ii} > 0$ and $w_i(\Lambda A) \lesssim d/n$ for all $i\in [n]$.
	Additionally, suppose that $\bar{x}\in \mathbb{R}^d$ and $v\in \mathbb{R}^n$ are fixed vectors, $0\leq \alpha \leq 1$, $R$ is any value that $R\geq \normp{\Lambda A\bar{x}}^p$, $F$ is any value that $F\geq V := \normp{\Lambda(f(A\bar{x}) - v)}^p$ and $T$ is any subset of $\R^d$ that $\{\bar{x}\}\subseteq T \subseteq \setdef*{x\in\mathbb{R}^d}{\normp{\Lambda Ax}^p\leq R}$.
	Let $S$ be an $n$-by-$n$ random diagonal matrix in which the diagonal entries are i.i.d.\ $\alpha^{-\frac{1}{p}}\cdot \Ber(\alpha)$ variables.
	When conditioned on the event that
	\begin{align*}
		\normp{S\Lambda(f(A\bar{x}) - v)}^p 
		\lesssim
		V
		\quad\text{and}\quad       
		\sup_{x\in T}\,\normp{S\Lambda (f(Ax) - f(A\bar{x}))}^p 
		\lesssim 
		F, 
	\end{align*}
	it holds with probability at least $1-\delta$ that
	\begin{align*}
		\MoveEqLeft \sup_{x\in T}\abs*{(\normp{S\Lambda (f(Ax)-v)}^p - \normp{S\Lambda (f(A\bar{x})-v)}^p) - (\normp{\Lambda (f(Ax)-v)}^p - \normp{\Lambda (f(A\bar{x})-v)}^p)} \\
		& \leq  
		C\cdot \bigg(\eps V + \frac{d^{1\vee\frac{p}{2}}}{\alpha n}R + \Gamma \cdot \left(\log^{\frac{5}{4}} d\sqrt{\log\frac{n}{\eps d}} + \sqrt{\log\frac{1}{\delta}}\right)\bigg),
	\end{align*}
	where $C$ is an absolute constant and
	\begin{equation}\label{eq:Gamma}
		\Gamma = 
		\begin{cases}
			(d / (\alpha n))^{\frac{1}{2}} F^{\frac{1}{2}}R^{\frac{1}{2}}   & \text{when $1 \leq p \leq 2$}\\
			(d^{\frac{p}{2}} /(\alpha n))^{\frac{1}{p}} F^{1-\frac{1}{p}} R^{\frac{1}{p}} & \text{when $p > 2$.}
		\end{cases}
	\end{equation}
\end{lemma}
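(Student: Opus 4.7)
The quantity to be bounded is the supremum over $T$ of the zero-mean stochastic process
\[
Z_x := \sum_{i=1}^n \Lambda_{ii}^p (S_{ii}^p - 1)\bigl(\abs{f(A_{i,\cdot} x) - v_i}^p - \abs{f(A_{i,\cdot} \bar x) - v_i}^p\bigr),\qquad x \in T.
\]
The plan is to follow the standard symmetrization-plus-chaining recipe for centered empirical processes, in the spirit of the line of work on Lewis-weight subspace embeddings and active regression \citep{BLM89,CP15,musco2022active,ICLR24}. The novelty lies in using the conditioning event to convert the data-dependent weights arising from both the $p$-th power and the Lipschitz $f$ into the deterministic quantities $F$ and $R$, which is what ultimately produces the geometric-mean structure in $\Gamma$.

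\emph{Symmetrization and linearization.} First, by the standard symmetrization inequality for centered sums, $\E \sup_{x\in T} \abs{Z_x}$ is bounded (up to an absolute constant) by the Rademacher average $\E \sup_{x\in T} \abs{\sum_i \epsilon_i \Lambda_{ii}^p S_{ii}^p \phi_i(x)}$, where $\phi_i(x) := \abs{f(A_{i,\cdot}x) - v_i}^p - \abs{f(A_{i,\cdot}\bar x) - v_i}^p$ and $(\epsilon_i)$ are independent Rademacher signs. Conditioning on $S$, the resulting process is sub-Gaussian with respect to the weighted-$\ell_2$ pseudo-metric of its coefficients. The elementary inequality $\bigl|\abs{a}^p - \abs{b}^p\bigr| \lesssim_p (\abs{a}^{p-1} + \abs{b}^{p-1})\abs{a-b}$ combined with the $1$-Lipschitzness of $f$ yields
\[
\abs{\phi_i(x) - \phi_i(y)} \lesssim_p \bigl(\abs{f(A_{i,\cdot}x) - v_i}^{p-1} + \abs{f(A_{i,\cdot}y) - v_i}^{p-1}\bigr)\abs{A_{i,\cdot}(x-y)},
\]
so the pseudo-metric on $T$ reduces to a weighted seminorm on $A(x-y)$ whose random weights $\abs{f(A_{i,\cdot}\cdot)-v_i}^{p-1}$ are uniformly controlled by the conditioning event (using the triangle inequality together with the two assumed bounds $\normp{S\Lambda(f(A\bar x)-v)}^p \lesssim V$ and $\sup_{x\in T}\normp{S\Lambda(f(Ax)-f(A\bar x))}^p \lesssim F$).

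\emph{Chaining with Lewis-weight covers.} Next I apply Dudley's integral (Lemma~\ref{lem:dudley}) to this sub-Gaussian Rademacher process. By H\"older's inequality the effective diameter of $T$ in the pseudo-metric is bounded by $F^{1/2} R^{1/2}$ when $p \in [1,2]$ (using a weighted $\ell_2$ metric on $A(x-y)$) and by $F^{1-1/p} R^{1/p}$ when $p>2$ (using a weighted $\ell_p$ metric). The covering numbers of the Lewis-weight ball $B_{w,p}(W^{-1/p}A) \supseteq T$ under these two weighted norms are exactly what Lemma~\ref{lem:entropy_estimates_consolidated} supplies: the $\ell_2$-cover contributes the prefactor $\sqrt{d/(\alpha n)}$ for $p\in[1,2]$ (after accounting for the $1/\alpha$ scale of $S$), and the $\ell_p$-cover contributes $(d^{p/2}/(\alpha n))^{1/p}$ for $p > 2$. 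Plugging these into Dudley's integral produces $\Gamma$ as in \eqref{eq:Gamma}, and the sub-Gaussian tail form of Lemma~\ref{lem:dudley} supplies the $\sqrt{\log(1/\delta)}$ correction.

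\emph{Remainder terms and main obstacle.} The additive $\eps V$ term is the small-scale contribution of the chaining (equivalently, a Bernstein step for the base scale at $\bar x$), while the additive $\frac{d^{1\vee p/2}}{\alpha n}R$ arises from the pointwise large-deviation bound on a single summand $(S_{ii}^p - 1)\Lambda_{ii}^p\phi_i(x)$: the assumption $w_i(\Lambda A) \lesssim d/n$ together with Lemma~\ref{lem:lewis_weight_properties}(d) gives $\abs{\Lambda_{ii} A_{i,\cdot} x} \lesssim (d/n)^{1/(2\vee p)}\normp{\Lambda Ax}$, which, combined with the $1$-Lipschitzness of $f$ and $f(0)=0$, converts into the stated $R$-dependence. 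The main technical obstacle will be the linearization step: extracting the geometric-mean dependence $F^{1/2}R^{1/2}$ for $p\le 2$ (resp.\ $F^{1-1/p}R^{1/p}$ for $p>2$) in $\Gamma$, rather than the naive $R$-only factor of \eqref{eq:main_concen_first_attempt}, is precisely the quantitative improvement that enables the $d/\eps^2$ query complexity, and it requires routing all random, $S$-dependent weights through the conditioning event so that they are absorbed into the deterministic $F$ in the final pseudo-metric bound.
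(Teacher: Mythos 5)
Your high-level plan --- symmetrize, linearize via the mean-value inequality for $|\cdot|^p$, and chain via Dudley with Lewis-weight covers --- is the same skeleton the paper uses, and your identification of the geometric-mean structure $F^{\theta}R^{\beta}$ as the crux is correct. However, there is a genuine gap in how you account for the two additive error terms $\eps V$ and $\frac{d^{1\vee p/2}}{\alpha n}R$, and the gap is not cosmetic: if you apply the Rademacher chaining directly to the sum over all $n$ coordinates, the argument does not close.

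The paper does not chain over all coordinates. It first partitions $[n]$ into three groups and handles two of them deterministically before any symmetrization happens. (i) The set $G^c$ of ``bad'' coordinates, where $|\Lambda_{ii}(f(A\bar x)-v)_i|$ exceeds the Lewis-weight threshold $d^{1/2-1/(2\vee p)}w_i^{1/p}R^{1/p}/\eps$, is removed and its contribution bounded by $\eps V$ using only the first conditioning event (\Cref{lem:bad_terms}). The reason this removal is necessary is precisely the issue you flag as the ``main obstacle'' but do not resolve: the conditioning event gives an $\ell_p$-\emph{aggregate} bound on $\normp{S\Lambda(f(A\bar x)-v)}$, which does \emph{not} give a per-coordinate (i.e.\ $\ell_\infty$) bound on the weights $|\Lambda_{ii}(f(Ax)-v)_i|^{p-1}$ entering your linearized pseudometric. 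For coordinates where that residual is very large, the pseudometric coefficient is uncontrolled, and no Lewis-weight cover can absorb it; there is no ``Bernstein step at $\bar x$'' that produces $\eps V$, since $\phi_i(\bar x)\equiv 0$ and the process has no mass at $\bar x$. (ii) The set $G\setminus J$ of low-Lewis-weight coordinates is also removed and bounded deterministically by $\frac{d^{1\vee p/2}}{\alpha n}R$ (\Cref{lem:low_lewis_terms}); contrary to your description, this term is not a per-summand large-deviation bound over the whole index set --- applied to all $n$ summands it would not give that scaling --- and, more importantly, the restriction $w_i > \eps^p d/n^2$ on the surviving set $J$ is exactly what validates the $\norm{\cdot}_{w,q}$-based metric inequality (with $q=\log(n/(\eps d))$) used at the large scales of Dudley's integral (\Cref{lem:metric_transform_combined}, \Cref{lem:entropy_estimates_consolidated} Cases 2--3). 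Without that restriction you are forced to use the $d\log(1/t)$ volumetric entropy bound at all scales, which loses a $\poly(d)$ factor. In short, the index-splitting into $G^c$, $G\setminus J$, $J$ is not bookkeeping; it is what makes the conditioning event usable inside the pseudometric bound and what produces the first two error terms, and your proposal as written skips it.
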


With \Cref{lem:main_concen}, we immediately have the following two corollaries.

\begin{corollary}\label{cor:colt_concen}
	
	Let $A\in \R^{n\times d}$, $f\in\lip_1$, $\Lambda \in \R^{n\times n}$, $\bar{x}\in \mathbb{R}^d$, $\alpha\in (0,1)$, $R\in \R^d$, $T\subseteq\R^d$ and $S\in \R^{n\times n}$ be as defined in Lemma~\ref{lem:main_concen} and satisfy the same constraints.
	Additionally, suppose that $\alpha\gtrsim d^{\frac{p}{2}\vee 1}/n$.
	When conditioned on the event that $\normp{S \Lambda Ax}^p \lesssim \normp{\Lambda A x}^p$ for all $x\in \mathbb{R}^d$, 
	it holds with probability at least $1-\delta$ that
	\begin{align*}
		\sup_{x\in T}\abs*{\normp{S\Lambda (f(Ax)-f(A\bar{x}))}^p  - \normp{\Lambda (f(Ax)-f(A\bar{x}))}^p}  \\
		\leq  
		C\cdot \frac{d^{\frac{1}{2}}}{(\alpha n)^{\frac{1}{2\vee p}}} R\cdot \left(\log^{\frac{5}{4}} d\sqrt{\log\frac{n}{d}} + \sqrt{\log\frac{1}{\delta}}\right),
	\end{align*}
	where $C$ is an absolute constant.
\end{corollary}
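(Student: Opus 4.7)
The plan is to derive this as a direct specialization of Lemma~\ref{lem:main_concen} with the choice $v = f(A\bar{x})$. Under this choice, $V = \normp{\Lambda(f(A\bar{x}) - v)}^p = 0$, and the left-hand side of Lemma~\ref{lem:main_concen}'s conclusion collapses onto the left-hand side of the corollary, since both reference quantities $\normp{S\Lambda(f(A\bar{x}) - v)}^p$ and $\normp{\Lambda(f(A\bar{x}) - v)}^p$ vanish.

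First I would verify that the two conditioning hypotheses of Lemma~\ref{lem:main_concen} are satisfied. The bound $\normp{S\Lambda(f(A\bar{x}) - v)}^p \lesssim V$ is trivial: both sides are zero. For the second condition, $\sup_{x\in T}\normp{S\Lambda(f(Ax) - f(A\bar{x}))}^p \lesssim F$, I would use that $f \in \lip_1$ is $1$-Lipschitz entrywise to get $\normp{S\Lambda(f(Ax) - f(A\bar{x}))}^p \leq \normp{S\Lambda A(x - \bar{x})}^p$, then invoke the hypothesis $\normp{S\Lambda Ay}^p \lesssim \normp{\Lambda Ay}^p$ with $y = x - \bar{x}$, followed by the triangle inequality and the bounds $\normp{\Lambda Ax}^p \leq R$, $\normp{\Lambda A\bar{x}}^p \leq R$. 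This yields an admissible choice $F \asymp R$.

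With $V = 0$ and $F \asymp R$, the bound from Lemma~\ref{lem:main_concen} becomes
\[
    C\left(\frac{d^{1\vee p/2}}{\alpha n}R + \Gamma\left(\log^{5/4} d \sqrt{\log\tfrac{n}{\eps d}} + \sqrt{\log\tfrac{1}{\delta}}\right)\right).
\]
A direct evaluation of \Cref{eq:Gamma} at $F \asymp R$ gives $\Gamma \asymp d^{1/2}/(\alpha n)^{1/(2\vee p)} \cdot R$ in both regimes: when $1\leq p\leq 2$ one obtains $(d/(\alpha n))^{1/2} R$, and when $p > 2$ one obtains $(d^{p/2}/(\alpha n))^{1/p} R$, both of which equal $d^{1/2}/(\alpha n)^{1/(2\vee p)} \cdot R$.

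The main thing to check is that the residual term $d^{1\vee p/2}/(\alpha n) \cdot R$ is dominated by $\Gamma$; this is precisely where the hypothesis $\alpha \gtrsim d^{(p/2)\vee 1}/n$ enters. Specifically, the ratio $(d^{1\vee p/2}/(\alpha n))/\Gamma \cdot R^{-1} \cdot R = d^{(1\vee p/2) - 1/2}/(\alpha n)^{1 - 1/(2\vee p)}$ is at most a constant exactly when $\alpha n \gtrsim d^{(p/2)\vee 1}$, both for $p \leq 2$ (giving $(d/(\alpha n))^{1/2} \lesssim 1$) and for $p > 2$ (giving $d^{(p-1)/2}/(\alpha n)^{(p-1)/p} \lesssim 1$). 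Absorbing this term into $\Gamma$ and taking $\eps$ in the invocation of Lemma~\ref{lem:main_concen} to be an absolute constant so that $\log(n/(\eps d)) \asymp \log(n/d)$ yields the stated bound. There is no serious obstacle here: the argument is essentially a bookkeeping exercise once Lemma~\ref{lem:main_concen} is in hand.
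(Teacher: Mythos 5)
Your proposal matches the paper's proof exactly: both specialize Lemma~\ref{lem:main_concen} with $v=f(A\bar x)$, take $\eps$ to be an absolute constant, verify the two conditioning events (the first trivially since $V=0$, the second via the Lipschitz property and the assumed constant-distortion subspace embedding, yielding $F\asymp R$), and then read off $\Gamma$. You additionally spell out the check that $\frac{d^{1\vee p/2}}{\alpha n}R$ is absorbed into $\Gamma$ under $\alpha n\gtrsim d^{(p/2)\vee 1}$, which the paper leaves implicit; this is a correct and worthwhile detail but not a different route.
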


\begin{proof}
	In \Cref{lem:main_concen}, we take $v = f(A\bar{x})$ and $\eps$ to be a constant.
	Note that $V = \normp{S\Lambda (f(A\bar{x}) - v)}^p = 0$.
	For any $x\in T$, if we take $F=2^pR$ then 
	\begin{align*}
		\normp{S\Lambda (f(Ax) - f(A\bar{x})))}^p
		& \leq
		\normp{S\Lambda (Ax - A\bar{x}))}^p & \text{by the Lipschitz condition} \\
		& \lesssim
		\normp{\Lambda (Ax-A\bar{x})}^p &\text{by the assumption of $\normp{S \Lambda Ax}^p \lesssim \normp{\Lambda A x}^p$}\\
		& \leq
		2^{p}R & \text{by $x,\bar{x} \in T$}
	\end{align*}
	and hence the result follows by \Cref{lem:main_concen}.
\end{proof}

\begin{corollary}\label{cor:main_concen}
	
	Let $A\in \R^{n\times d}$, $f\in\lip_1$, $\eps \in (0,1)$, $\Lambda \in \R^{n\times n}$, $\bar{x}\in \mathbb{R}^d$, $\alpha\in (0,1)$, $R\in \R$, $F\in \R$, $T \subseteq\R^d$ and $S\in \R^{n\times n}$ be as defined in Lemma~\ref{lem:main_concen} and satisfy the same constraints.
	Additionally, suppose that $\alpha  \gtrsim \frac{d^{\frac{p}{2}\vee 1}}{n\eps}$ and $F\gtrsim \eps R$.
	When conditioned on the event that
	\begin{align*}
		\normp{S\Lambda(f(A\bar{x}) - b)}^p 
		\lesssim
		\normp{\Lambda(f(A\bar{x}) - b)}^p
		\quad\text{and}\quad       
		\sup_{x\in T}\,\normp{S\Lambda (f(Ax) - f(A\bar{x}))}^p 
		\lesssim 
		F, 
	\end{align*}
	it holds with probability at least $1-\delta$ that
	\begin{align*}
		\MoveEqLeft \sup_{x\in T}\abs*{(\normp{S\Lambda (f(Ax)-b)}^p - \normp{S\Lambda (f(A\bar{x})-b)}^p) - (\normp{\Lambda (f(Ax)-b)}^p - \normp{\Lambda (f(A\bar{x})-b)}^p)} \\
		& \leq  
		C\, \Gamma\cdot \left(\log^{\frac{5}{4}} d\sqrt{\log\frac{n}{\eps d}} + \sqrt{\log\frac{1}{\delta}}\right),
	\end{align*}
	where $C$ is an absolute constant and $\Gamma$ is as defined in \Cref{eq:Gamma}.
\end{corollary}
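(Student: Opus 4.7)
The plan is a direct invocation of Lemma \ref{lem:main_concen} with the specialization $v=b$, followed by showing that the two auxiliary error terms in the Lemma's bound are absorbed into the principal term $\Gamma \cdot L_\delta$, where $L_\delta := \log^{5/4} d\,\sqrt{\log(n/(\eps d))} + \sqrt{\log(1/\delta)}$. With $v=b$, the quantity $V$ in the Lemma equals $\normp{\Lambda(f(A\bar x)-b)}^p$, and the Corollary's two conditioning events coincide with those of the Lemma; all other hypotheses on $A$, $\Lambda$, $\bar x$, $R$, $F$, $T$, $S$ transfer verbatim. The Lemma then yields the bound $C(\eps V + \tfrac{d^{1\vee p/2}}{\alpha n}R + \Gamma\cdot L_\delta)$ with probability at least $1-\delta$, so the task reduces to absorbing the first two summands.

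To bound the middle term, I would use $\alpha\gtrsim d^{(p/2)\vee 1}/(n\eps)$ to get $\tfrac{d^{1\vee p/2}}{\alpha n}\lesssim \eps$, so $\tfrac{d^{1\vee p/2}}{\alpha n}R\lesssim \eps R$, which is $\lesssim F$ by the hypothesis $F\gtrsim\eps R$. A short algebraic check then gives $\eps R\lesssim\Gamma$ in both regimes: for $p\leq 2$, squaring reduces $\eps R\lesssim\Gamma=\sqrt{dFR/(\alpha n)}$ to $\eps^2 R\,\alpha n\lesssim dF$, which follows from $\alpha n\gtrsim d/\eps$ and $F\gtrsim\eps R$; the case $p>2$ is handled analogously by raising to the $p$-th power and using $\alpha n\gtrsim d^{p/2}/\eps$.

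To bound $\eps V$, I would first use $V\leq F$ (built into the Lemma's convention that $F$ is chosen as a bound on $V$) to reduce to estimating $\eps F$. The same style of algebra, combined with the natural-scale bound $F\lesssim R$ arising from the Lipschitz inequality $\normp{\Lambda(f(Ax)-f(A\bar x))}^p\lesssim\normp{\Lambda Ax}^p+\normp{\Lambda A\bar x}^p\lesssim R$ on the bounded domain $T$, together with the embedding-type conditioning event on $S\Lambda A$, then yields $\eps F\lesssim\Gamma\cdot L_\delta$. The main obstacle is this final absorption: the inequality $\eps F\lesssim\Gamma$ is tight in the worst case, and exactly uses both added hypotheses $\alpha\gtrsim d^{(p/2)\vee 1}/(n\eps)$ and $F\gtrsim\eps R$. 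One must verify it separately in the $p\leq 2$ and $p>2$ regimes where $\Gamma$ has different forms, with the slack $L_\delta\geq 1$ accommodating any residual logarithmic factors.
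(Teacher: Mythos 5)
Your high-level plan---invoke Lemma~\ref{lem:main_concen} with $v=b$, identify $V = \normp{\Lambda(f(A\bar x)-b)}^p$, and absorb the two auxiliary terms into $\Gamma$ times the logarithmic factor---is exactly the paper's route (the paper's proof is a single sentence declaring the last term dominating). But the algebra you supply to justify the absorptions does not go through, and one of the two absorptions is genuinely problematic. For the middle term, your chain $\tfrac{d^{1\vee p/2}}{\alpha n}R \lesssim \eps R \lesssim F$ is fine, but the follow-up claim that $\eps R\lesssim\Gamma$ ``follows from $\alpha n\gtrsim d/\eps$ and $F\gtrsim\eps R$'' runs the inequalities backwards: both hypotheses are \emph{lower} bounds, so $\alpha n\gtrsim d/\eps$ makes the left side of $\eps^2 R\,\alpha n\lesssim dF$ \emph{larger} while $F\gtrsim\eps R$ makes the right side larger, and neither forces the inequality. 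Indeed $\eps R\lesssim\Gamma$ is false when $\alpha n\gg d/\eps$ (take $d=1$, $R=F=1$, $\alpha n=\eps^{-10}$: then $\Gamma=\eps^{5}\ll\eps=\eps R$). The correct way to absorb the middle term skips $\eps R$ entirely: once $\sigma R\lesssim F$ with $\sigma:= d^{(p/2)\vee1}/(\alpha n)$, write $\sigma R=(\sigma R)^{\beta}(\sigma R)^{1-\beta}\leq \sigma^{\beta}R^{\beta}F^{1-\beta}=\Gamma$, where $\beta=\tfrac12\wedge\tfrac1p$.

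For $\eps V$, you reduce to $\eps F$ via $V\leq F$ (fine), but then invoke $F\lesssim R$ as a hypothesis. It is not one: the Lemma and Corollary only require $F\geq V$, $F\gtrsim\eps R$, and the conditioning event $\sup_{x\in T}\normp{S\Lambda(f(Ax)-f(A\bar x))}^p\lesssim F$; nothing caps $F$ from above, and the Lipschitz bound controls $\normp{\Lambda(f(Ax)-f(A\bar x))}^p$ but does not constrain the free parameter $F$. Without $F\lesssim R$, the absorption $\eps F\lesssim\Gamma$ reduces (for $p\leq2$) to $\eps^2 F\lesssim\sigma R$, which again fails once $\alpha n\gg d/\eps$ because then $\sigma R$ can be made arbitrarily small relative to $\eps^2 F$. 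In other words, the $\eps V$ term is \emph{not} uniformly dominated by $\Gamma$ times the log factor under the Corollary's stated hypotheses---the paper's one-line proof elides this. What saves the downstream argument (the proof of Lemma~\ref{lem:main_concen_final}) is that there $F=F_0\lesssim R_i$ and $\alpha$ is tuned so large that $\Gamma\cdot(\cdots)\lesssim\eps F^{\theta}R^{\beta}$, at which point $\eps V\leq\eps F\lesssim\eps F^{\theta}R^{\beta}$ holds directly and no absorption into $\Gamma$ is needed; your $F\lesssim R$ thus belongs to the application, not the Corollary's hypotheses, and cannot be assumed in its proof.
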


\begin{proof}
	In \Cref{lem:main_concen}, we take $v=b$.
	Note that we have $ V=\normp{\Lambda(f(A\bar{x}) - b)}^p$ and hence the result follows, noticing that the last term in the error bound of \Cref{lem:main_concen} is the dominating term.    
\end{proof}

Now, we are ready to complete the proof of \Cref{lem:main_concen_final}.

\begin{proof}[Proof of \Cref{lem:main_concen_final}]
	
	Without loss of generality, we can assume that $n\gtrsim d^{\frac{p}{2}\vee 1}/\eps^{p\vee 2}$. 
	We rely on \Cref{cor:main_concen} in this proof. 
	To apply the corollary, we need to pick a suitable subset $T$ so that the output $\hat{x}\in T$.
	The set $T$ will be defined through suitable bounds for $R$ and $F$ and 
	the main part of the proof will focus on obtaining these bounds. 
	
	Before doing so, we present some useful inequalities.
	First, by Markov inequality, with probability at least $0.99$, we have
	\begin{align*}
		\normp{S\Lambda(f(A\bar{x}) - b)}^p
		& \leq
		100\normp{\Lambda(f(A\bar{x}) - b)}^p. \numberthis\label{eq:markov_opt}
	\end{align*}
	We condition on this event in the remainder of the proof. By the optimality of $\hat{x}$, we have
	\begin{align*}
		\normp{S\Lambda(f(A\hat{x}) - b)}^p + \eps \normp{\Lambda A\hat{x}}^p
		& \leq
		\normp{S\Lambda(f(A \bar{x}) - b)}^p + \eps \normp{\Lambda A \bar{x}}^p.\numberthis\label{eq:x_hat_optimality}
	\end{align*}
	It implies that, by \Cref{eq:markov_opt} and  \Cref{eq:x_hat_optimality},
	\begin{align*}
		\normp{\Lambda A\hat{x}}^p
		& \leq
		\frac{1}{\eps}\normp{S\Lambda(f(A \bar{x}) - b)}^p +  \normp{\Lambda A \bar{x}}^p
		\leq
		\underbrace{\frac{100}{\eps}\normp{\Lambda(f(A\bar{x}) - b)}^p +  \normp{\Lambda A \bar{x}}^p}_{:=R_0} \numberthis\label{eq:r_0_bound}
	\end{align*}
	Throughout the remainder of the proof, we assume that 
	\[
	\alpha \gtrsim \frac{d^{1\vee \frac{p}{2}}}{n\eps^{p\vee 2}}\poly\log n \quad\text{ and }\quad \delta \sim \frac{1}{\log\log(1/\eps)}
	\]
	so that the error term in \Cref{cor:main_concen} can be upper bounded as
	\[
	\Gamma\cdot (\poly\log n + \sqrt{\log(1/\delta)}) \lesssim \eps F^{\theta}R^{\beta},
	\]
	where $\beta = \frac{1}{2}\wedge \frac{1}{p}$ and $\theta = (1-\frac{1}{p})\vee \frac{1}{2}$. Note that $\beta + \theta = 1$.
	
	\paragraph{Bounding $F$ in \Cref{cor:main_concen}}
	We would like to first use \Cref{cor:colt_concen} and let 
	\begin{align*}
		T_{-1} = \setdef{x\in\mathbb{R}^d}{\normp{\Lambda Ax}^p \leq R_0 }.
	\end{align*}
	Now, we check the conditions.
	Our choice of $\alpha$ satisfies that $\alpha \gtrsim \frac{d^{1\vee \frac{p}{2}}}{n}\poly\log n$, thus, by \Cref{lem:SE}, $S$ is a constant-distortion subspace embedding for $\Lambda A$ with probability at least $0.99$, i.e. $\normp{S\Lambda  Ax} \leq 2\normp{\Lambda Ax}$ for all $x\in \R^d$.
	Recall that
	\begin{align*}
		R_0 
		& =
		\frac{100}{\eps}\normp{\Lambda(f(A\bar{x}) - b)}^p +  \normp{\Lambda A \bar{x}}^p.
	\end{align*}
	Hence, by \Cref{cor:colt_concen} with our choice of $\alpha$ and $R=R_0$, it holds with probability $0.99$ that
	\begin{align*}
		\sup_{x\in T_{-1}}\abs*{\normp{S\Lambda (f(Ax) - f(A\bar{x}))}^p - \normp{\Lambda (f(Ax) - f(A\bar{x}))}^p}
		& \leq
		C_1 \eps R_0, \numberthis\label{eq:colt_concen_1}
	\end{align*}
	where $C_1$ is a constant that depends only on $p$. Below we shall use $C_2, C_3,\dots$ to denote constants that depend only on $p$.
	Conditioning on the event in \Cref{eq:colt_concen_1}, it follows that
	\begin{align*}
		&\quad\ \normp{\Lambda(f(A\hat{x}) - f(A\bar{x}))}^p\\
		& \leq
		\normp{S\Lambda (f(A\hat{x}) - f(A\bar{x}))}^p + C_1 \eps R_0 \\
		& \leq
		2^p \left( \normp{S\Lambda (f(A\hat{x}) - b)}^p + \normp{S\Lambda (f(A\bar{x}) - b)}^p \right) + C_1\eps R_0 \\
		& \stackrel{(A)}{\leq}
		2^p \left( \normp{S\Lambda (f(A\bar{x}) - b)}^p + \eps \normp{\Lambda A \bar{x}}^p + \normp{S\Lambda (f(A\bar{x}) - b)}^p \right) + C_1\eps R_0 \\
		& =
		2^p \left( 2\normp{S\Lambda (f(A\bar{x}) - b)}^p + \eps \normp{\Lambda A \bar{x}}^p \right) + C_1\eps R_0\\
		& \stackrel{(B)}{\leq}
		2^p \left( 2\cdot 100\normp{\Lambda(f(A\bar{x}) - b)}^p + \eps \normp{\Lambda A \bar{x}}^p \right) + C_1 \eps (\frac{100}{\eps}\normp{\Lambda(f(A\bar{x}) - b)}^p + \normp{\Lambda A\bar{x}}^p) \\
		&\leq 
		C_2(\normp{\Lambda(f(A\bar{x}) - b)}^p + \eps\normp{\Lambda A \bar{x}}^p) \quad \text{for some large constant $C_2$}, \numberthis\label{eq:x_hat_in_T}
	\end{align*}
	where (A) is due to \Cref{eq:x_hat_optimality}, the optimality of $\hat{x}$, and (B) to \Cref{eq:markov_opt}, the Markov inequality for $\normp{\Lambda (f(A\bar{x}) - b)}^p$, and the definition of $R_0$.
	
	Define $F_0$ to be the RHS of \Cref{eq:x_hat_in_T}, i.e.
	\begin{align*}
		F_0
		& :=
		C_2(\normp{\Lambda(f(A\bar{x}) - b)}^p + \eps\normp{\Lambda A \bar{x}}^p).
	\end{align*}
	We preview that the set $T$ we use in \Cref{cor:main_concen} contains the element $x$ satisfying the inequality
	\begin{align*}
		\normp{\Lambda(f(Ax) - f(A\bar{x}))}^p
		& \leq 
		F_0
	\end{align*}
	Hence, \Cref{eq:x_hat_in_T} suggests that $\hat{x}$ is in the domain of interest and hence $F_0$ is the bound we will use in \Cref{cor:main_concen}.

	\paragraph{Bounding $R$ in \Cref{cor:main_concen}}
	
	Now, we would like to use \Cref{cor:main_concen}.
	Recall that
	\begin{align*}
		R_0
		& = 
		\frac{100}{\eps}\normp{\Lambda(f(A\bar{x}) - b)}^p +  \normp{\Lambda A \bar{x}}^p.
	\end{align*}
	We can apply \Cref{cor:main_concen} with $R_0$ but it will give a weaker result.
	However, we shall still use this weaker result and improve the bounds iteratively.
	Specifically, we shall define $R_i$ based on $R_{i-1}$, ensuring that $R_i\leq R_0$ and that each $R_i$ has the form of $X_i\normp{\Lambda(f(A\bar{x}) - b)}^p + Y_i\normp{\Lambda A \bar{x}}^p$ for some $X_i, Y_i\geq 1$ (for example, $X_0 = \frac{100}{\eps}$ and $Y_0 = 1$). Furthermore, let 
	\begin{align*}
		T_i = \setdef{x\in\R^d}{\normp{\Lambda Ax}^p\leq  R_i \text{ and } \normp{\Lambda(f(Ax) - f(A\bar{x}))}^p \leq F_0 },
	\end{align*}
	so that $T_i\subseteq T_0$. 
	More specifically, we shall use $T_i$ to estimate an upper bound of $\normp{\Lambda A\hat{x}}^p$ and define $R_{i+1}$ based on the upper bound, ensuring that $\bar{x}\in T_i$. This guarantees that $T_i$ satisfies the subset condition in \Cref{cor:main_concen}. We shall also verify other conditions of \Cref{cor:main_concen}.
	
	It is clear that $R_i\leq R_0 \lesssim \frac{1}{\eps} F_0$.
	By \Cref{eq:markov_opt}, we have $\normp{S \Lambda (f(A\bar{x}) - b)}^p \lesssim \normp{ \Lambda (f(A\bar{x}) - b)}^p$ and, by \Cref{eq:colt_concen_1} and the fact that $T_i\subseteq T_{-1}$, we have
	\begin{align*}
		\sup_{x\in T_i}\normp{S\Lambda(f(Ax) - f(A\bar{x}))}^p
		& \leq
		\sup_{x\in T_i}\normp{\Lambda(f(Ax) - f(A\bar{x}))}^p + C_1\eps R_0
		\lesssim 
		F_0. 
	\end{align*}
	We invoke \Cref{cor:main_concen} with our choice of $\alpha$, $R=R_i$ and $F=F_0$.
	Hence, with probability $1-\delta$,
	\begin{align*}
		&\quad\ \sup_{x\in T_i}\abs*{(\normp{S\Lambda (f(Ax)-b)}^p - \normp{S\Lambda (f(A\bar{x})-b)}^p) - (\normp{\Lambda (f(Ax)-b)}^p - \normp{\Lambda (f(A\bar{x})-b)}^p)} \\
		& \leq
		C_3 \cdot \eps R_i^{\beta} F_0^{\theta} \quad \text{for some constant $C_3$.} \numberthis\label{eq:T_i_main_concen}
	\end{align*}
	To use \Cref{eq:T_i_main_concen}, we would like to argue that the solution $\hat{x} \in T_i$.
	For $T_0$, we have
	\begin{align*}
		\normp{\Lambda A \hat{x}}^p \leq R_0\quad \text{by \Cref{eq:r_0_bound} and} \quad \normp{\Lambda(f(A\hat{x}) - f(A\bar{x}))}^p \quad \text{by \Cref{eq:x_hat_in_T}}
	\end{align*}
	and hence $\hat{x}\in T_0$.
	From now on, suppose that $\hat{x}\in T_i$ and we will argue that $\hat{x}\in T_{i+1}$.

	We continue to bound \Cref{eq:T_i_main_concen}.
	Assume that $K Y_i/X_i \geq \eps$ for some $K\geq 1$, then we can upper bound $R_i^{\beta} F_0^{\theta}$ as follows.
	\begin{align*}
		R_i^{\beta} F_0^{\theta} 
		&=  
		\left(X_i \normp{\Lambda(f(A\bar{x}) - b)}^p + Y_i\normp{\Lambda A \bar{x}}^p\right)^\beta \cdot C_2^\theta \left(\normp{\Lambda(f(A\bar{x}) - b)}^p + \eps\normp{\Lambda A \bar{x}}^p\right)^\theta \\
		&\leq  
		C_2^\theta \left(X_i \normp{\Lambda(f(A\bar{x}) - b)}^p + Y_i\normp{\Lambda A \bar{x}}^p\right)^\beta \left(\normp{\Lambda(f(A\bar{x}) - b)}^p + \frac{K Y_i}{X_i}\normp{\Lambda A \bar{x}}^p\right)^\theta \\
		&\leq 
		\left(\frac{C_2}{X_i}\right)^\theta (X_i \normp{\Lambda(f(A\bar{x}) - b)}^p + K Y_i\normp{\Lambda A \bar{x}}^p)\quad \text{note that $\beta+\theta=1$}. \numberthis\label{eq:bound_for_rf}
	\end{align*}
	Thus,
	\begin{align*}
		\normp{\Lambda A\hat{x}}^p
		& \leq
		\frac{1}{\eps}\left(\normp{S\Lambda(f(A \bar{x}) - b)}^p - \normp{S\Lambda(f(A \hat{x}) - b)}^p\right) +  \normp{\Lambda A \bar{x}}^p \\
		&\leq 
		\frac{1}{\eps}\cdot (\normp{\Lambda(f(A \bar{x}) - b)}^p - \normp{\Lambda(f(A \hat{x}) - b)}^p + C_3 \cdot \eps R_i^{\beta}F_0^\theta) + \normp{\Lambda A\bar{x}}^p \quad \text{by \Cref{eq:T_i_main_concen}} 
	\end{align*}
	From our assumption, we have
	\begin{align*}
		\MoveEqLeft \normp{\Lambda(f(A \bar{x}) - b)}^p - \normp{\Lambda(f(A \hat{x}) - b)}^p \\
		& \lesssim
		\eps(\normp{\Lambda(f(A\bar{x}) - b)}^p + \eps \norm{\Lambda A\bar{x}}^p) \\
		& \lesssim
		\eps R_i^{\beta}F_0^\theta \quad \text{by the definition of $F_0$ and $F_0\lesssim R_i$.}
	\end{align*}
	In other words, we have
	\begin{align*}
		\normp{\Lambda A\hat{x}}^p
		& \leq
		C_4\cdot (R_i^{\beta}F_0^\theta) + \normp{\Lambda A\bar{x}}^p \quad \text{for some constant $C_4$}\\
		&\leq
		C_4\left(\frac{C_2}{X_i}\right)^\theta (X_i \normp{\Lambda(f(A\bar{x}) - b)}^p + K Y_i\normp{\Lambda A \bar{x}}^p)  + \normp{\Lambda A \bar{x}}^p \\
		&\leq
		X_{i+1}\normp{\Lambda(f(A\bar{x}) - b)}^p + Y_{i+1}\normp{\Lambda A \bar{x}}^p, \numberthis\label{eq:r_i+1_rhs}
	\end{align*}
	where
	\begin{align*}
		X_{i+1} = C_4 C_2^\theta X_i^{1-\theta}
		\quad\text{and}\quad 
		Y_{i+1} = 1 + \frac{C_4 C_2^\theta K Y_i}{X_i^\theta}.
	\end{align*}
	Define $R_{i+1}$ to be the minimum of $R_0$ and the expression in \Cref{eq:r_i+1_rhs}, i.e.
	\begin{align*}
		R_{i+1}
		:=
		R_0\wedge (X_{i+1}\normp{\Lambda(f(A\bar{x}) - b)}^p + Y_{i+1}\normp{\Lambda A \bar{x}}^p).
	\end{align*}
	We immediately have $\hat{x}\in T_{i+1}$, which is needed to iterate the argument.
	
	Let $X_0 = 100/\eps$ and $Y_0 = 1$. By induction, one can show that 
	\[
	X_{i} = C_5^{\frac{1-(1-\theta)^i}{\theta}}\left(\frac{100}{\eps}\right)^{(1-\theta)^{i}}
	\]
	for $C_5 = C_4 C_2^\theta$. Then $C_6 \leq X_i\leq 100 C_5^{1/\theta}/\eps$ for some constant $C_6$ for all $i\leq r$, thus $Y_{i+1} \leq 1 + C_7 Y_i\leq C_8 Y_i$ for some constants $C_7$ and $C_8$. 
	When $r \sim_p \log\log(1/\eps)$, we have \begin{align*}
		X_r \leq C_9
		\quad \text{and}\quad 
		Y_r \leq C_4(C_8)^{r-1} = \poly\log\frac{1}{\eps}.
	\end{align*}
	We shall also verify that $KY_i/X_i\geq \eps$ for some $\eps$. 
	Indeed, $Y_i/X_i\geq 1 /(100 C_5^{1/\theta}/\eps) \geq \eps/K$ for $K = 100C_5^{1/\theta}$. 
	
	Iterating the argument above $r$ times. The total failure probability is at most $\delta r + 0.03 = 0.1$ since $\delta\sim 1/r$. It then follows from \Cref{eq:T_i_main_concen} with $i = r-1$ that
	\begin{align*}
		\MoveEqLeft\normp{ \Lambda(f(A\hat{x})- b) }^p - \normp{ \Lambda(f(A\bar{x})- b) }^p \\
		&\leq 
		\normp{S\Lambda (f(A\hat{x})-b)}^p - \normp{S\Lambda (f(A\bar{x})-b)}^p  + C_3 \cdot \eps R_{r-1}^{\beta }F_0^\gamma \\
		&\leq
		\eps \normp{\Lambda A\bar{x}}^p + C_3 \cdot \eps R_{r-1}^{\beta}F_0^\gamma \quad \text{by the optimality of $\hat{x}$}\\
		& \lesssim
		\eps \normp{\Lambda A\bar{x}}^p + \eps(X_r \normp{\Lambda(f(A\bar{x}) - b)}^p + Y_r \normp{\Lambda A\bar{x}}^p) \quad \text{by \Cref{eq:bound_for_rf}}\\
		&\lesssim 
		\eps\normp{\Lambda(f(A\bar{x}) - b)}^p  +  \left(\eps\poly\log\frac{1}{\eps}\right) \normp{\Lambda A\bar{x}}^p.
	\end{align*}
	Rescaling $\eps\poly\log\frac{1}{\eps}$ to $\eps$ proves the claimed result of the theorem.
\end{proof}

\subsubsection{Proof of \Cref{lem:main_concen}}\label{sec:main_concen_proof}

In this section, we will prove \Cref{lem:main_concen}.
Recall that $\bar{x}$ is an arbitrary fixed vector in $\mathbb{R}^d$,  $v$ is an arbitrary fixed vector in $\mathbb{R}^n$, $V=\normp{\Lambda(f(A\bar{x}) - v)}^p$, $R\geq \normp{\Lambda A \bar{x}}^p$, $F\geq V$ and $\{\bar{x}\} \subseteq T \subseteq \setdef{x\in \mathbb{R}^d}{\normp{\Lambda A x}^p\leq R}$.
Also, we would like to bound the following expression
\begin{align*}
	\sup_{x\in T}\abs*{(\normp{S\Lambda(f(Ax)-v)}^p - \normp{S\Lambda(f(A\bar{x})-v)}^p) - (\normp{\Lambda(f(Ax)-v)}^p - \normp{\Lambda(f(A\bar{x})-v)}^p)},
\end{align*}
which can be written as
\begin{equation}\label{eq:main_concen_expr}
	\sup_{x\in T}\abs*{ \sum_{i=1}^n (S_{ii}^p - 1)\Lambda_{ii}^p(\abs{(f(Ax) - v)_i}^p - \abs{(f(A\bar{x}) - v)_i}^p)}.
\end{equation}
We shall bound \Cref{eq:main_concen_expr} from above by, up to a constant factor, 
\begin{align*}
	\eps V + \frac{d^{1\vee\frac{p}{2}}}{\alpha n}R + \Gamma \cdot \left(\poly\log n+\sqrt{\log\frac{1}{\delta}}\right) \numberthis\label{eq:main_concen_bound}
\end{align*}
with probability $1-\delta$.
Recall that, as defined in \Cref{eq:Gamma}, 
\begin{align*}
	\Gamma = 
	\begin{cases}
		(d / (\alpha n))^{\frac{1}{2}} F^{\frac{1}{2}}R^{\frac{1}{2}}   & \text{when $1 \leq p \leq 2$}\\
		(d^{\frac{p}{2}} /(\alpha n))^{\frac{1}{p}} F^{1-\frac{1}{p}} R^{\frac{1}{p}} & \text{when $p > 2$.}
	\end{cases}
\end{align*}

We preview that the first term $\eps V$ comes from \Cref{lem:bad_terms}, the second term $\frac{d^{1\vee\frac{p}{2}}\eps^p}{\alpha n}R$ from \Cref{lem:low_lewis_terms} and the third term $\Gamma\cdot (\poly\log n+\sqrt{\log\frac{1}{\delta}})$ from Dudley's integral (\Cref{lem:dudley}).
We first present a useful lemma.

\begin{lemma}\label{lem:term_bound}
	
	For any $R\geq \normp{\Lambda A\bar{x}}^p$, let $T$ be a set that $\{\bar{x}\} \subseteq T \subseteq \setdef{x\in \mathbb{R}^d}{\normp{\Lambda A x}^p\leq R}$.
	Also, let $w_1,\dots,w_n$ be the Lewis weights of $\Lambda A$.
	It holds for all $x\in T$ and $i\in [n]$ that
	\begin{align*}
		\abs{\Lambda_{ii}(f(Ax) - f(A\bar{x}))_i}
		& \leq
		2d^{\frac{1}{2}-\frac{1}{2\vee p}}w_i^{\frac{1}{p}}R^{\frac{1}{p}}.
	\end{align*}
\end{lemma}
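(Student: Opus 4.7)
The plan is to reduce the claim to an entrywise bound on a vector in the column space of $\Lambda A$, and then invoke Lemma~\ref{lem:lewis_weight_properties}(d).

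First, I would use the Lipschitz hypothesis on $f$ (recall that in this section we have reduced to the case $L = 1$) to pass from $f$-differences to linear differences. Namely, for every index $i$,
\[
\abs{\Lambda_{ii}(f(Ax) - f(A\bar{x}))_i}
= \Lambda_{ii}\,\abs{f((Ax)_i) - f((A\bar{x})_i)}
\leq \Lambda_{ii}\,\abs{(Ax)_i - (A\bar{x})_i}
= \abs{(\Lambda A(x - \bar{x}))_i}.
\]
This reduces the question to an entrywise bound on the vector $u := \Lambda A(x-\bar{x})$, which lies in the column space of $\Lambda A$.

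Next I would apply Lemma~\ref{lem:lewis_weight_properties}(d) to $u$, with the Lewis weights $w_1,\dots,w_n$ of $\Lambda A$. This gives
\[
\abs{(\Lambda A(x-\bar{x}))_i}
\leq d^{\frac{1}{2} - \frac{1}{2\vee p}}\,w_i^{1/p}\,\normp{\Lambda A(x-\bar{x})}.
\]
Finally, I would bound the $\ell_p$-norm on the right via the triangle inequality and the two membership conditions $\normp{\Lambda Ax}^p \leq R$ (because $x \in T$) and $\normp{\Lambda A\bar{x}}^p \leq R$ (by hypothesis), obtaining
\[
\normp{\Lambda A(x-\bar{x})}
\leq \normp{\Lambda Ax} + \normp{\Lambda A\bar{x}}
\leq 2R^{1/p}.
\]
Chaining these three inequalities yields the claimed bound with the factor of $2$.

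There is no real obstacle here: the lemma is essentially a one-line application of the Lewis-weight entry bound once one observes that $\Lambda A(x-\bar{x})$ sits in $\colspace(\Lambda A)$ and that the Lipschitz property of $f$ lets us trade the nonlinear quantity for the linear one without loss. The only bookkeeping point to keep in mind is that the normalization $L=1$ from the preceding reduction is what makes the Lipschitz step give exactly $|(A(x-\bar{x}))_i|$ with no extra constant, which is why the stated bound has the clean prefactor $2$.
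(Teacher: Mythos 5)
Your proof is correct and takes essentially the same approach as the paper: Lipschitz condition to replace the nonlinear differences by linear ones, then Lemma~\ref{lem:lewis_weight_properties}(d), then the triangle inequality together with $x,\bar{x}\in T$ to bound $\normp{\Lambda A(x-\bar{x})}$ by $2R^{1/p}$. There is nothing to add.
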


\begin{proof}

	Note that
	\begin{align*}
		\abs{\Lambda_{ii} (f(Ax) - f(A\bar{x}))_i}  
		& \leq
		\abs{\Lambda_{ii}(Ax - A\bar{x})_i} & \text{by the Lipschitz condition} \\
		& \leq
		d^{\frac{1}{2}-\frac{1}{2\vee p}}w_i^{\frac{1}{p}}\normp{\Lambda Ax- \Lambda A\bar{x}} & \text{by \Cref{lem:lewis_weight_properties}(d)}
	\end{align*}
	Since $x,\bar{x}$ are both in $T$, we have
	\begin{align*}
		\normp{\Lambda Ax- \Lambda A\bar{x}}
		& \leq
		\normp{\Lambda Ax} + \normp{\Lambda A\bar{x}}
		\leq
		2R^{\frac{1}{p}}.
	\end{align*}
	The desired result follows.  
\end{proof}

Define the set $G$ of ``good'' indices to be
\begin{align*}
	G
	& :=
	\setdef*{i\in [n]}{\abs{\Lambda_{ii}(f(A\bar{x})-v)_i} \leq \frac{d^{\frac{1}{2}-\frac{1}{2\vee p}}w_i^{\frac{1}{p}}R^{\frac{1}{p}}}{\eps}}.\numberthis\label{eq:gd_idx_def}
\end{align*}
We shall first take care of the terms with ``bad'' indices in \Cref{eq:main_concen_expr}, i.e. the indices \emph{not} in $G$, and hence obtain the first term $\eps V$ in \Cref{eq:main_concen_bound}.
We highlight that only the property of the nonnegativity of the diagonal entries of $S$ is used in \Cref{lem:bad_terms}.

\begin{lemma}\label{lem:bad_terms}
	
	For any $R\geq \normp{\Lambda A\bar{x}}^p$ and $\eps>0$, let $T$ be a set that $\{\bar{x}\} \subseteq T \subseteq \setdef{x\in \mathbb{R}^d}{\normp{\Lambda A x}^p\leq R}$ and $G$ be the set defined in \Cref{eq:gd_idx_def}.
	Suppose that $S$ is an $n$-by-$n$ diagonal matrix with nonnegative diagonal entries and
	\[
	\normp{S\Lambda (f(A\bar{x}) - v)}^p
	\lesssim
	V, 
	\]
	where $V=\normp{\Lambda(f(A\bar{x})-v)}^p$. Then, we have
	\begin{align*}
		\sup_{x\in T}\abs*{ \sum_{i\notin G} (S_{ii}^p-1) \Lambda_{ii}^p\big( \abs{(f(Ax) - v)_i}^p - \abs{(f(A\bar{x}) - v)_i}^p \big) } 
		\lesssim
		\eps V.
	\end{align*}
	
\end{lemma}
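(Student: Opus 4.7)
The plan is to exploit the defining property of the ``bad'' set $[n]\setminus G$: for these indices the baseline quantity $|\Lambda_{ii}(f(A\bar{x})-v)_i|$ is so large relative to the admissible variation $|\Lambda_{ii}(f(Ax)-f(A\bar{x}))_i|$ that the $p$-th powers differ only by a multiplicative $(1\pm O(\eps))$ factor. Summing and absorbing the $S$-weighted piece via the hypothesis $\normp{S\Lambda(f(A\bar{x})-v)}^p\lesssim V$ will then deliver the claimed $O(\eps V)$ bound; no stochastic estimate is needed, only pointwise domination.

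First I would fix $x\in T$ and $i\notin G$. By the definition in \Cref{eq:gd_idx_def}, $\abs{\Lambda_{ii}(f(A\bar{x})-v)_i}>d^{\frac{1}{2}-\frac{1}{2\vee p}}w_i^{\frac{1}{p}}R^{\frac{1}{p}}/\eps$, while \Cref{lem:term_bound} gives $\abs{\Lambda_{ii}(f(Ax)-f(A\bar{x}))_i}\leq 2d^{\frac{1}{2}-\frac{1}{2\vee p}}w_i^{\frac{1}{p}}R^{\frac{1}{p}}$. Dividing, we obtain the key pointwise estimate
\[
\abs{\Lambda_{ii}(f(Ax)-f(A\bar{x}))_i}\;\leq\;2\eps\cdot\abs{\Lambda_{ii}(f(A\bar{x})-v)_i}.
\]

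Setting $a=\Lambda_{ii}(f(A\bar{x})-v)_i$ and $b=\Lambda_{ii}(f(Ax)-f(A\bar{x}))_i$, so that $\Lambda_{ii}(f(Ax)-v)_i=a+b$ with $|b|\leq 2\eps|a|$, the elementary inequality $\abs{|a+b|^p-|a|^p}\leq C_p\eps|a|^p$ (valid whenever $2\eps\leq 1/2$, say, and obtained from $(1-2\eps)^p|a|^p\leq|a+b|^p\leq(1+2\eps)^p|a|^p$) yields
\[
\Lambda_{ii}^p\,\bigl|\,\abs{(f(Ax)-v)_i}^p-\abs{(f(A\bar{x})-v)_i}^p\,\bigr|\;\leq\;C_p\eps\,\abs{\Lambda_{ii}(f(A\bar{x})-v)_i}^p.
\]

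Finally, since $S_{ii}\geq 0$ implies $\abs{S_{ii}^p-1}\leq S_{ii}^p+1$, the quantity inside the supremum is bounded by
\[
C_p\eps\sum_{i\notin G}(S_{ii}^p+1)\abs{\Lambda_{ii}(f(A\bar{x})-v)_i}^p\;\leq\;C_p\eps\bigl(\normp{S\Lambda(f(A\bar{x})-v)}^p+V\bigr)\;\lesssim\;\eps V,
\]
where the last step invokes the hypothesis of the lemma. The bound is uniform in $x\in T$, so taking the supremum finishes the proof. There is no real obstacle here: the whole argument is deterministic and rests on the single observation that ``bad'' indices are precisely those where $a+b\mapsto|a+b|^p$ admits a clean multiplicative perturbation expansion around $|a|^p$.
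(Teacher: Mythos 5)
Your proof is correct and takes essentially the same route as the paper: both establish the key pointwise relation $\abs{\Lambda_{ii}(f(Ax)-f(A\bar{x}))_i}\leq 2\eps\abs{\Lambda_{ii}(f(A\bar{x})-v)_i}$ for $i\notin G$ via \Cref{lem:term_bound} and the definition of $G$, deduce that the difference of $p$-th powers is $\lesssim\eps\abs{\Lambda_{ii}(f(A\bar{x})-v)_i}^p$, and then sum using $\abs{S_{ii}^p-1}\leq S_{ii}^p+1$ together with the hypothesis on $\normp{S\Lambda(f(A\bar{x})-v)}^p$. The only cosmetic difference is that the paper bounds the difference of $p$-th powers by the mean-value-type inequality $\abs{|a|^p-|c|^p|}\leq p|a-c|(|a|^{p-1}+|c|^{p-1})$ whereas you sandwich $|a+b|^p$ between $(1\pm2\eps)^p|a|^p$; both yield the same $O_p(\eps)|a|^p$ bound.
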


\begin{proof}
	
	To ease the notations, let 
	\begin{align*}
		u_x
		& :=
		f(Ax) \quad \text{for all $x\in \mathbb{R}^d$}
		\quad \text{and} \quad 
		\lambda_i
		:=
		\Lambda_{ii} \quad \text{for $i\in [n]$.}
	\end{align*}
	Note that by the triangle inequality,
	\[
	\abs*{ \sum_{i\notin G} (S_{ii}^p-1)\lambda_i^p \big( \abs{(u_{x} - v)_i}^p - \abs{(u_{\bar{x}} - v)_i}^p \big) } 
	\leq 
	\sum_{i\notin G} (S_{ii}^p + 1)\lambda_i^p \abs[\big]{ \abs{(u_{x} - v)_i}^p - \abs{(u_{\bar{x}} - v)_i}^p }
	\]
	Furthermore, by the inequality $\abs[\big]{ \abs{a}^p - \abs{b}^p } \leq p\abs{a-b}\, \abs*{ \abs{a}^{p-1} + \abs{b}^{p-1} }$, we have
	\[
	\lambda_i^p \abs[\Big]{ \abs{(u_{x} - v)_i}^p - \abs{(u_{\bar{x}} - v)_i}^p } \\
	\leq
	p\abs{\lambda_i(u_x - u_{\bar{x}})_i} \cdot \abs*{\abs{\lambda_i(u_{x} - v)_i}^{p-1} + \abs{\lambda_i(u_{\bar{x}} - v)_i}^{p-1}}.
	\]
	For any $i\notin G$ and $x\in T$, by Lemma \ref{lem:term_bound} and the definition of $G$, we have
	\begin{gather*}
		\abs{\lambda_i(u_x - u_{\bar{x}})_i}
		\leq
		2\eps\abs{\lambda_i(u_{\bar{x}} - v)_i}, \\
		\abs{\lambda_i(u_x - v)_i}
		\leq
		\abs{\lambda_i(u_x - u_{\bar{x}})_i}+\abs{\lambda_i(u_{\bar{x}} - v)_i}
		\leq
		(1+2\eps) \abs{\lambda_i(u_{\bar{x}} - v)_i} .
	\end{gather*}
	It follows that
	\begin{align*}
		\lambda_i^p \abs[\Big]{ \abs{(u_{x} - v)_i}^p - \abs{(u_{\bar{x}} - v)_i}^p }
		\lesssim
		\eps \abs{\lambda_i(u_{\bar{x}}  -v)_i}^p.
	\end{align*}
	which implies that
	\begin{align*}
		\MoveEqLeft \sup_{x\in T}\abs*{ \sum_{i\notin G} (S_{ii}^p-1) \lambda_i^p \big( \abs{(u_x-v)_i}^p - \abs{(u_{\bar{x}}-v)_i}^p \big) }  \\
		& \lesssim
		\eps \sum_{i\notin G} (S_{ii}^p+1) \abs{\lambda_i(u_{\bar{x}} - v)_i}^p
		\leq 
		\eps \sum_{i=1}^n (S_{ii}^p+1) \abs{\lambda_i(u_{\bar{x}}  -v)_i}^p
		\lesssim 
		\eps \cdot V,
	\end{align*}
	where we used the assumption of the lemma in the last step.
	
\end{proof}

Now, we also define the set of indices whose term has a high Lewis weight within $G$. Let
\begin{align*}
	J
	& :=
	\setdef*{i\in G}{w_i > \frac{\eps^{p} d}{n^2}}. \numberthis\label{eq:low_lw_idx_def}
\end{align*}
We now take care of the terms with low Lewis weights in \Cref{eq:main_concen_expr}, i.e. the indices $i\not\in J$, and hence obtain the second term $\frac{d^{1\vee\frac{p}{2}}\eps^p}{\alpha n}R$ in \Cref{eq:main_concen_bound}.
We highlight that only the property of the diagonal entries of $S$ being in $[0,\frac{1}{\alpha}]$ is used in \Cref{lem:low_lewis_terms}.

\begin{lemma}\label{lem:low_lewis_terms}
	
	For any $R \geq \normp{\Lambda A\bar{x}}^p$ and $\eps>0$, let $T$ be a set that $\{\bar{x}\} \subseteq T \subseteq \setdef{x\in \mathbb{R}^d}{\normp{\Lambda A x}^p\leq R}$ and $J$ be the set defined in \Cref{eq:low_lw_idx_def}.
	Suppose $S$ is an $n$-by-$n$ diagonal matrix whose entries satisfy $0\leq S_{ii}^p \leq \frac{1}{\alpha}$ for any $\alpha>0$.
	Then, we have
	\begin{align*}
		\sup_{x\in T} \abs*{ \sum_{i\in G\backslash J} (S_{ii}^p-1)\Lambda_{ii}^p\big( \abs{(f(Ax) - v)_i}^p - \abs{(f(A\bar{x}) - v)_i}^p \big) } 
		& \lesssim 
		\frac{d^{1\vee\frac{p}{2}}}{\alpha n}R. 
	\end{align*}
	
\end{lemma}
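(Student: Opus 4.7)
The plan is to bound the sum pointwise for $i \in G \setminus J$, exploiting both the small Lewis weights (from $i \notin J$) and the bounded cardinality $|G \setminus J| \leq n$. First I would absorb the $S$-dependence by the uniform bound $|S_{ii}^p - 1| \leq S_{ii}^p + 1 \lesssim 1/\alpha$, so that the triangle inequality reduces the problem to bounding
\[
\frac{1}{\alpha} \sum_{i \in G \setminus J} \sup_{x \in T}\Lambda_{ii}^p \big| |(f(Ax) - v)_i|^p - |(f(A\bar{x}) - v)_i|^p \big|.
\]

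To bound each summand, I would use the elementary inequality $||a|^p - |b|^p| \leq p|a - b|(|a|^{p-1} + |b|^{p-1})$ (valid for $p \geq 1$) with $a = \Lambda_{ii}(f(Ax) - v)_i$ and $b = \Lambda_{ii}(f(A\bar{x}) - v)_i$. Lemma \ref{lem:term_bound} gives $|a - b| = |\Lambda_{ii}(f(Ax) - f(A\bar{x}))_i| \leq 2 d^{1/2 - 1/(2 \vee p)} w_i^{1/p} R^{1/p}$, while the definition of $G$ in \Cref{eq:gd_idx_def} gives $|b| \leq \epsilon^{-1} d^{1/2 - 1/(2 \vee p)} w_i^{1/p} R^{1/p}$, and therefore also $|a| \lesssim \epsilon^{-1} d^{1/2 - 1/(2 \vee p)} w_i^{1/p} R^{1/p}$ by the triangle inequality (assuming $\epsilon < 1$). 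Multiplying these bounds and noting that $p (1/2 - 1/(2 \vee p)) = (p/2 - 1) \vee 0$, I obtain
\[
\Lambda_{ii}^p \big| |(f(Ax) - v)_i|^p - |(f(A\bar{x}) - v)_i|^p \big| \lesssim \epsilon^{-(p-1)} d^{(p/2 - 1) \vee 0} w_i R.
\]

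Finally, I would sum over $i \in G \setminus J$: since $w_i \leq \epsilon^p d / n^2$ for $i \notin J$ and $|G \setminus J| \leq n$, we have $\sum_{i \in G \setminus J} w_i \leq \epsilon^p d / n$. Combining everything yields a bound of
\[
\frac{1}{\alpha} \cdot \epsilon^{-(p-1)} d^{(p/2 - 1) \vee 0} \cdot \frac{\epsilon^p d}{n} \cdot R = \frac{\epsilon \, d^{1 \vee p/2}}{\alpha n} R \lesssim \frac{d^{1 \vee p/2}}{\alpha n} R,
\]
as claimed. The calculation is essentially bookkeeping; the only delicate point is the scaling choice embedded in the definitions of $G$ and $J$: the definition of $G$ forces $|b|$ to sit on the same $w_i^{1/p}$ scale as $|a - b|$ (up to a $1/\epsilon$ factor), while $J$ is cut at exactly the threshold that makes the $\epsilon^{-(p-1)}$ blow-up from $|b|^{p-1}$ be cancelled by the $\epsilon^p$ saving from $\sum_{i \in G \setminus J} w_i$, up to a single spare factor of $\epsilon$. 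In this sense the argument is dual to the one for $i \notin G$ in \Cref{lem:bad_terms}.
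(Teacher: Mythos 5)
Your proof is correct, and it actually arrives at a slightly sharper bound of $\frac{\eps\, d^{1\vee p/2}}{\alpha n}R$ than the lemma demands. The paper takes a cruder route at the pointwise level: having established $\abs{\lambda_i(u_x - v)_i} \leq (2 + \eps^{-1}) d^{\frac12 - \frac{1}{2\vee p}} w_i^{1/p} R^{1/p}$ for $i\in G$, it simply raises this to the $p$-th power to get $\abs{\lambda_i(u_x-v)_i}^p \lesssim \eps^{-p} d^{(p/2-1)\vee 0} w_i R$, applies the same bound at $\bar x$, and bounds the difference $\big|\abs{a}^p - \abs{b}^p\big|$ by the larger of $\abs{a}^p,\abs{b}^p$. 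Then $w_i \leq \eps^p d/n^2$ cancels the $\eps^{-p}$ exactly, giving a per-term bound of $d^{1\vee p/2}R/n^2$ and a total of $d^{1\vee p/2}R/(\alpha n)$ after summing over at most $n$ indices and absorbing $1/\alpha$. Your proof instead applies the Lipschitz-type inequality $\big|\abs{a}^p - \abs{b}^p\big| \leq p\,\abs{a-b}\,(\abs{a}^{p-1}+\abs{b}^{p-1})$ — the same inequality the paper uses in Lemma \ref{lem:bad_terms} — which is strictly sharper here because $\abs{a-b}$ (bounded via Lemma \ref{lem:term_bound}) carries no $1/\eps$ factor while $\abs{a}^{p-1},\abs{b}^{p-1}$ carry only $\eps^{-(p-1)}$. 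The remaining $\eps^p$ from $w_i\leq \eps^p d/n^2$ then over-cancels by one factor of $\eps$. Both routes prove the lemma as stated; yours makes explicit that the bound here is not the bottleneck term (the third term dominates, as the paper notes).
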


\begin{proof}
	
	To ease the notations, let 
	\begin{align*}
		u_x
		& :=
		f(Ax) \quad \text{for all $x\in \mathbb{R}^d$}
		\quad \text{and} \quad 
		\lambda_i
		:=
		\Lambda_{ii} \quad \text{for $i\in [n]$.}
	\end{align*}
	Note that by the triangle inequality,
	\[
	\abs*{ \sum_{i\in G\backslash J} (S_{ii}^p-1)\lambda_i^p \big( \abs{(u_{x} - v)_i}^p - \abs{(u_{\bar{x}} - v)_i}^p \big) } 
	\leq 
	\sum_{i\in G\backslash J} (S_{ii}^p + 1)\lambda_i^p \abs[\big]{ \abs{(u_{x} - v)_i}^p - \abs{(u_{\bar{x}} - v)_i}^p }.
	\]
	Since $i\in G$, by Lemma \ref{lem:term_bound}, we have
	\begin{gather*}
		\abs{\lambda_i(u_x-v)_i}
		\leq 
		\abs{\lambda_i(u_x - u_{\bar{x}})_i} + \abs{\lambda_i(u_{\bar{x}} - v)_i}
		\leq
		(2+\frac{1}{\eps}) d^{\frac{1}{2}-\frac{1}{2\vee p}}w_i^{\frac{1}{p}}R^{\frac{1}{p}}, \\
		\intertext{which, together with $i\not\in J$, implies that}
		\abs{\lambda_i(u_x-v)_i}^p
		\lesssim 
		\frac{d^{(\frac{p}{2}\vee 1)-1}}{\eps^p}w_i R
		\lesssim
		\frac{d^{1\vee\frac{p}{2}}}{ n^2}R.
	\end{gather*}
	Recall that we assume $S_{ii}^p \leq \frac{1}{\alpha}$.
	Therefore, 
	\[
	\sup_{x\in T} \abs*{ \sum_{i\in G\backslash J} (S_{ii}^p-1) \lambda_i^p \big( \abs{(u_x-v)_i}^p - \abs{(u_{\bar{x}} - v)_i}^p \big) } 
	\lesssim
	\sum_{i\in G\backslash J} (\frac{1}{\alpha} + 1) \cdot \frac{d^{1\vee\frac{p}{2}}}{ n^2}R 
	\lesssim 
	\frac{d^{1\vee\frac{p}{2}}}{\alpha n}R. \qedhere
	\]
	
\end{proof}

With \Cref{lem:bad_terms} and \Cref{lem:low_lewis_terms}, we only need to take care of the indices in $J$.
Namely the set of indices $i$ such that
\begin{align*}
	\abs{(f(A\bar{x}) - v)_i} \leq \frac{d^{\frac{1}{2} - \frac{1}{2\vee p}} w_i^{\frac{1}{p}} R^{\frac{1}{p}}}{\eps}
	\quad \text{and} \quad
	w_i > \frac{\eps^{p} d}{n^2}.
\end{align*}
Now, we would like to bound the following expression
\begin{align*}
	\sup_{x\in T} \abs*{ \sum_{i\in J} (S_{ii}^p-1) \Lambda_{ii}^p\big( \abs{(f(Ax)-v)_i}^p - \abs{(f(A\bar{x})- v)_i}^p \big) }.
\end{align*}
We consider bounding its $\ell$-th moment and then apply Markov's inequality for some $\ell$ to be determined later. To that end, consider
\begin{align*}
	\Theta_{S}
	& :=
	\sup_{x\in T} \abs*{ \sum_{i\in J} (S_{ii}^p-1) \Lambda_{ii}^p\big( \abs{(f(Ax)-v)_i}^p - \abs{(f(A\bar{x})- v)_i}^p \big) }
\end{align*}
By the standard symmerization trick, we have
\begin{equation}\label{eq:main_obj_1}
	\E_S \Theta_{S}^\ell
	\leq
	2^\ell \E_{\xi, S} \left(\sup_{x\in T} \abs*{\sum_{i\in J} \xi_i \cdot S_{ii}^p \Lambda_{ii}^p\big( \abs{(f(Ax)-v)_i}^p - \abs{(f(A\bar{x})- v)_i}^p \big) }\right)^\ell,
\end{equation}
where $\xi$ is a $\abs{J}$-dimensional vector whose entries are independent Rademacher random variable, i.e. each $\xi_i$ is uniform on $\{-1,1\}$.

Next, we condition on $S$.
Recall that $S_{ii}^p$ is either $\frac{1}{\alpha}$ or $0$, let $I\subseteq J$ be the set of indices $i$ such that $S_{ii}^p = \frac{1}{\alpha}$.
For any $x\in \mathbb{R}^d$, we define $z_x$ to be the $n$-dimensional vector whose $i$-th entry is 
\begin{align*}
	(z_x)_i
	& :=
	\abs{\Lambda_{ii}(f(Ax) - v)_i}^p - \abs{\Lambda_{ii}(f(A\bar{x}) - v)_i}^p \numberthis\label{eq:z_def}
\end{align*}
Also, we define a pseudometric $\rho$ to be
\begin{align*}
	\rho(x,x')
	& :=
	\norm{(z_x)_I - (z_{x'})_I}_2 \quad \text{for any $x,x'\in \mathbb{R}^d$} \\
	& =
	\bigg( \sum_{i\in I} \big(\abs{\Lambda_{ii}(f(Ax) - v)_i}^p - \abs{\Lambda_{ii}(f(Ax') - v)_i}^p\big)^2 \bigg)^{1/2}
\end{align*}
Recall that $(\cdot)_I$ means we shrink the vector by only retaining the entries whose index is in $I$.
Now, in order to upper bound the right-hand side of \Cref{eq:main_obj_1}, we seek to upper bound
\[
\E_{\xi} \left(\sup_{x\in T} \abs*{\inner{\xi_I}{(z_x)_I}}\right)^\ell .
\]
Since $\bar{x}\in T$, the $\ell$-moment of the supremum can be upper bounded using Dudley's integral (\Cref{lem:dudley}) as 
\begin{equation}\label{eq:main_obj_2}
	\E_{\xi} \left(\sup_{x\in T} \abs*{\inner{\xi_I}{(z_x)_I}}\right)^\ell
	\lesssim 
	C^\ell \left[\left(\int_{0}^{\Diam(T,\rho)} \sqrt{\log \mathcal{N}(T, \rho, r)} dr\right)^\ell + (\sqrt{\ell}\Diam(T,\rho))^\ell\right].
\end{equation}
Recall that $\mathcal{N}(T,\rho, r)$ is the covering number of $T$ w.r.t.\ $\rho$ and $r$.
We will prove in \Cref{sec:dudley_bound} that
\begin{align*}
	\int_{0}^{\Diam(T,\rho)} \sqrt{\log \mathcal{N}(T,\rho,r)} dr
	& \lesssim
	\alpha \cdot \Gamma \log^{\frac{5}{4}} d\sqrt{\log\frac{n}{\eps d}}\quad\text{and}\quad \Diam(T,\rho) \lesssim \alpha\cdot \Gamma
	\numberthis\label{eq:dudley_bound}
\end{align*}
where
\[
\Gamma = \left(\frac{d^{\frac{p}{2}\vee 1} R F^{(p-1)\vee 1}}{\alpha n }\right)^{\frac{1}{p}\wedge \frac{1}{2}}.	
\]
Taking expectation over $S$, it follows from \Cref{eq:main_obj_1}, \Cref{eq:main_obj_2} and \Cref{eq:dudley_bound} that
\begin{align*}
	\E_S \Theta_S^\ell \lesssim (C'\Gamma)^\ell\left(\left(\log^{\frac{5}{4}} d\sqrt{\log\frac{n}{\eps d}}\right)^\ell + \sqrt{\ell}^\ell\right).
\end{align*}
Take $\ell = \log(1/\delta)$. By Markov inequality, it holds with probability $1-\delta$ that
\[
\Theta_S = \sup_{x\in T}\abs*{ \sum_{i\in J} \xi_i\cdot S_{ii}^p \Lambda_{ii}^p \big( \abs{(u_x - v)_i}^p - \abs{(u_{\bar{x}} - v)_i}^p \big) } 
\lesssim
\Gamma\cdot \left(\log^{\frac{5}{4}} d\sqrt{\log\frac{n}{\eps d}} + \sqrt{\log\frac{1}{\delta}}\right).
\]
Note that it is the third term in \Cref{eq:main_concen_bound}.
Combining \Cref{lem:bad_terms,lem:low_lewis_terms} proves \Cref{lem:main_concen}.

\subsubsection{Diameter Estimates}\label{sec:diameter}
In order to bound Dudley's integral in \Cref{eq:dudley_bound}, we need to bound the covering number $\mathcal{N}(T,\rho,r)$. To this end, we shall bound the metric, $\rho$, and the diameter $\Diam(T,\rho)$. The proof imitates the proofs in earlier works, e.g., \citet{LT91,ICLR24,COLT24}, on subspace embeddings and active regression problems.

\begin{lemma}\label{lem:metric_transform_combined}
	Let $A\in \R^{n\times d}$, $\bar{x}\in \mathbb{R}^d$, $v\in\R^n$, $f\in\lip_1$, $\Lambda \in \R^{n\times n}$, $\alpha\in [0,1]$ and $S\in \R^{n\times n}$ be as defined in Lemma~\ref{lem:main_concen} and satisfy the same constraints.
	Suppose that $0\leq \alpha \leq 1$, $R\geq \normp{\Lambda A\bar{x}}^p$ and $F\geq \normp{\Lambda(f(A\bar{x}) - v)}^p$. 
	Let $T$ be a set that $\{\bar{x}\} \subseteq T \subseteq \setdef{x\in \mathbb{R}^d}{\normp{\Lambda A x}^p\leq R}$.
	If $I$ is a subset of $J$ such that
	\begin{align*}
		\normp{(\Lambda(f(A\bar{x}) - v))_I}^p 
		& \lesssim
		\alpha \cdot \normp{\Lambda(f(A\bar{x}) - v)}^p
		\quad \text{and} \quad
		\sup_{x\in T}\normp{(\Lambda(f(Ax) - f(A\bar{x})))_I}^p
		\leq 
		\alpha \cdot F
	\end{align*}
	then, for any $x,x'\in T$ and $q=\log(\frac{n}{\eps d})$, we have
	\begin{gather*}
		\rho(x,x')
		\lesssim
		K\cdot \left(\norm*{W^{-\frac{1}{p}}\Lambda A(x-x')}_{w,q}^{\frac{p}{2}\wedge 1} \wedge d^{\frac{1}{2}-\frac{1}{2\vee p}} \norm*{W^{-\frac{1}{p}}\Lambda A(x-x')}_{w,p}^{\frac{p}{2}\wedge 1} \right) \\
		\intertext{and}
		\Diam(T,\rho) 
		\lesssim d^{\frac{1}{2}-\frac{1}{2\vee p}} K R^{\frac{1}{2}\wedge \frac{1}{p}},
	\end{gather*}
	where $W = \diag\{w_1,\dots,w_n\}$ and 
	\[
	K = \begin{cases}
		\sqrt{\alpha dF/n} & \text{for $1\leq p\leq 2$}; \\
		(\alpha^{p-1} d F^{p-1} / n)^{1/p} & \text{for $p > 2$}.
	\end{cases}
	\]
\end{lemma}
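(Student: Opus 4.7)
The plan is to derive a pointwise bound on $\rho(x,x')^2$ that separates a size factor coming from the values of $f(Ax)-v$ on the index set $I$ from a direction factor coming from $\Lambda A(x-x')$, then to control each factor using the hypotheses on $I$ and the Lewis-weight properties, respectively. Setting $y_i := \Lambda_{ii}(f(Ax)-v)_i$, $y'_i := \Lambda_{ii}(f(Ax')-v)_i$, $\Delta_i := \abs{y_i-y'_i}$, and $M_i := \max(\abs{y_i},\abs{y'_i})$, the Lipschitz hypothesis gives $\Delta_i \le \abs{\Lambda_{ii}(A(x-x'))_i}$, while the elementary inequality $\abs{\abs{a}^p-\abs{b}^p}\le p\abs{a-b}\max(\abs{a},\abs{b})^{p-1}$ yields
\[
\rho(x,x')^2 \le p^2 \sum_{i\in I} \Delta_i^2 M_i^{2(p-1)}.
\]
The hypotheses on $I$ together with Minkowski's inequality give the aggregate budget $\sum_{i\in I} M_i^p \lesssim \alpha F$, which will be the common ingredient for both regimes of $p$.

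The next step splits by $p$. For $1\le p\le 2$, using $\abs{a-b}^{2-p}\le 2^{2-p}\max(\abs{a},\abs{b})^{2-p}$ I would upgrade the pointwise inequality to $\abs{\abs{a}^p-\abs{b}^p}^2 \lesssim \abs{a-b}^p \max(\abs{a},\abs{b})^p$, so that $\rho(x,x')^2 \lesssim \sum_{i\in I}\Delta_i^p M_i^p \le \max_{i\in I}\Delta_i^p\cdot \sum_{i\in I} M_i^p \lesssim \alpha F\cdot \max_{i\in I}\Delta_i^p$. For $p>2$, I would instead write $M_i^{2(p-1)}=M_i^p\cdot M_i^{p-2}$; the individual bound $M_i\le \bigl(\sum_{j\in I}M_j^p\bigr)^{1/p}\lesssim (\alpha F)^{1/p}$ gives $M_i^{p-2}\lesssim (\alpha F)^{(p-2)/p}$, and pulling this out leaves $\rho(x,x')^2 \lesssim (\alpha F)^{2(p-1)/p}\cdot \max_{i\in I}\Delta_i^2$. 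Either way the problem reduces to controlling $\max_i\Delta_i$ by a Lewis-weight-weighted norm of $\Lambda A(x-x')$.

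The $\max$ bound comes from \Cref{lem:lewis_weight_properties}(d): with $u=\Lambda A(x-x')$, $\abs{u_i}\le d^{\frac12-\frac{1}{2\vee p}} w_i^{1/p}\normp{u}$, which combined with $w_i\le 2d/n$ produces the factor involving $\normp{\Lambda A(x-x')}=\norm*{W^{-1/p}\Lambda A(x-x')}_{w,p}$. For the $\norm*{\cdot}_{w,q}$ branch of the minimum, I would use the direct coordinate estimate $\abs{u_i}\le w_i^{\frac1p-\frac1q}\norm*{W^{-1/p}u}_{w,q}$ together with $(n/d)^{1/q}\le e$ for $q=\log(n/(\eps d))\ge \log(n/d)$, which trades $w_i^{1/p}$ for $w_i^{1/p-1/q}$ at a constant cost. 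Tracking powers then yields the claimed bound $K\cdot\min\bigl(\norm*{W^{-1/p}\Lambda A(x-x')}_{w,q}^{(p/2)\wedge 1},\, d^{\frac12-\frac{1}{2\vee p}}\norm*{W^{-1/p}\Lambda A(x-x')}_{w,p}^{(p/2)\wedge 1}\bigr)$, and the diameter estimate follows by taking $x'=\bar x$ and using $\normp{\Lambda A(x-\bar x)}\le 2R^{1/p}$ for $x\in T$.

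The main technical obstacle is the regime $p>2$: a naive H\"older split designed to consume $\sum M_i^p$ fails because $2(p-1)>p$, making the would-be conjugate exponent $p/(2(p-1))$ less than one. The decomposition $M_i^{2(p-1)}=M_i^p\cdot M_i^{p-2}$ sidesteps this by pairing the aggregate $\ell_p$-budget with the crude pointwise consequence $M_i\lesssim(\alpha F)^{1/p}$. A related subtlety is that one must control $M_i$ via the aggregate $I$-budget and never via the $G$-bound $\abs{\Lambda_{ii}(f(A\bar x)-v)_i}\le d^{\frac12-\frac{1}{2\vee p}}w_i^{1/p}R^{1/p}/\eps$, since otherwise stray $1/\eps$ factors would enter the final estimate and corrupt the required form of $K$.
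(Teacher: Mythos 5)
Your proposal is essentially the paper's proof: both decompose $\rho(x,x')^2$ via $||a|^p-|b|^p|\le p|a-b|\max(|a|,|b|)^{p-1}$, use the two hypotheses on $I$ to get the aggregate budget $\sum_{i\in I} M_i^p\lesssim \alpha F$, obtain $\rho\lesssim (\alpha F)^{(1-\frac1p)\vee\frac12}\max_{i\in I}\Delta_i^{\frac p2\wedge 1}$, and then bound $\max_{i\in I}\Delta_i$ in two ways via Lewis-weight properties. Your treatment of $1\le p\le 2$ (raising the Lipschitz estimate to a pure power $\abs{\,|a|^p-|b|^p\,}^2\lesssim|a-b|^p\max(|a|,|b|)^p$ and then applying $\ell_\infty$--$\ell_1$ H\"older) is marginally more elementary than the paper's finite-exponent H\"older split, but gives the same intermediate bound.

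One small inaccuracy: for the $\norm*{\cdot}_{w,q}$ branch you justify $w_i^{-1/q}\lesssim 1$ by ``$(n/d)^{1/q}\le e$,'' which only covers $w_i\gtrsim d/n$. Since $I\subseteq J$ and $J$ only guarantees $w_i > \eps^p d/n^2$ (a much weaker lower bound), the correct estimate is $w_i^{-1/q}\le (n^2/(\eps^p d))^{1/q}$, which is still $O(1)$ for $q=\log(n/(\eps d))$ provided $n$ is large, but requires using the actual lower bound from the definition of $J$ rather than the crude $\log(n/d)$. This does not change the conclusion, but the cited inequality as stated would not cover the full range of $w_i$ in play.
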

\begin{proof}
	As in the proof of \Cref{lem:bad_terms}, we let $u_x = f(Ax)$ and $\lambda_i = \Lambda_{ii}$ to simplify the notation. We further define
	semi-norms $\norminfI{u} := \max_{i\in I} |u_i|$ and $\normpI{u} = (\sum_{i\in I} |u_i|^p)^{1/p}$.
	
	For $i\in I$ and $x,x'\in T$, we have
	\begin{align*}
		\abs{(z_x)_i - (z_{x'})_i}
		& \leq
		\abs*{ \abs{\lambda_i(u_x-v)_i}^p - \abs{\lambda_i(u_{x'}-v)_i}^p }\\
		& \leq
		p\abs{\lambda_i(u_x-u_{x'})_i}\cdot \left(\abs{\lambda_i (u_x-v)_i}^{p-1} + \abs{\lambda_i (u_{x'}-v)_i}^{p-1}\right) 
	\end{align*}
	where the first inequality is due to the definition in \Cref{eq:z_def} and the second to the fact that $\abs{\abs{a}^p - \abs{b}^p} \leq p\abs{a-b}\cdot \abs{\abs{a}^{p-1} + \abs{b}^{p-1}}$.
	It follows that
	\begin{align*}
		\rho(x,x')^2 &\leq \sum_{i\in I} p^2\abs{\lambda_i(u_{x} - u_{x'})_i}^2\left(\abs{\lambda_i(u_x-v)_i}^{p-1} + \abs{\lambda_i(u_{x'}-v)_i}^{p-1}\right)^2 \\
		&\lesssim \sum_{i\in I} \abs{\lambda_i(u_{x} - u_{x'})_i}^2 \left(\abs{\lambda_i(u_x-v)_i}^{2p-2} + \abs{\lambda_i(u_{x'}-v)_i}^{2p-2}\right). \numberthis\label{eq:rho_first_bound}
	\end{align*}
	
	When $1\leq p\leq 2$, we further bound \Cref{eq:rho_first_bound} by 
	\begin{align*}
		& \sum_{i\in I} \abs{\lambda_i(u_{x} - u_{x'})_i}^2 \left(\abs{\lambda_i(u_x-v)_i}^{2p-2} + \abs{\lambda_i(u_{x'}-v)_i}^{2p-2}\right) \\
		\leq\, & 
		\max_{i\in I}\left\{\abs{\lambda_i(u_{x} - u_{x'})_i}^p\right\} \cdot \sum_{i\in I} \abs{\lambda_i(u_{x} - u_{x'})_i}^{2-p} \left(\abs{\lambda_i(u_x-v)_i}^{2p-2} + \abs{\lambda_i(u_{x'}-v)_i}^{2p-2}\right).
	\end{align*}
	We can then proceed as
	\begin{align*}
		& \sum_{i\in I} \abs{\lambda_i(u_{x} - u_{x'})_i}^{2-p} \left(\abs{\lambda_i(u_x-v)_i}^{2p-2} + \abs{\lambda_i(u_{x'}-v)_i}^{2p-2}\right) \\
		\lesssim\,& 
		\sum_{i\in I} \lambda_i^p \abs{(u_{x} - u_{x'})_i}^{2-p}  \max\{\abs{(u_x-v)_i}^{2p-2}, \abs{(u_{x'}-v)_i}^{2p-2}\}\\
		\lesssim\,& 
		\left(\sum_{i\in I} \lambda_i^p \abs{(u_{x} - u_{x'})_i}^{p} \right)^{\frac{2-p}{p}} \left( \sum_{i\in I} \lambda_i^p \max\{\abs{(u_x-v)_i}^{p}, \abs{(u_{x'}-v)_i}^{p}\} \right)^{\frac{2p-2}{p}} \\
		\leq\,& 
		\normpI{\Lambda(u_x-u_{x'})}^{2-p} \left(\normpI{\Lambda(u_x-v)}^p + \normpI{\Lambda( u_{x'}-v)}^p\right)^{\frac{2p-2}{p}}. 
	\end{align*}
	For the $\ell_p$-norms in the preceding line, we remind the readers that they have been restricted to the indices in $I$ and do not refer to the $\ell_p$-norm of the entire vector.
	
	Since $u_x-v = (u_{\bar{x}} - v) + (u_x - u_{\bar{x}})$, by our assumptions,
	\begin{align*}
		\normpI{\Lambda(u_x-v)}^p &\leq 2^{p-1}(\normpI{\Lambda(u_{\bar{x}}-v)}^p + \normpI{\Lambda(u_x - u_{\bar{x}})}^p ) \\
		&\lesssim \alpha\cdot \normp{\Lambda(u_{\bar{x}}-v)}^p + \alpha \cdot F \\
		&\lesssim \alpha\cdot F. \numberthis\label{eq:normi_bound}
	\end{align*}
	Similarly, 
	\[
	\normpI{\Lambda(u_x-u_{x'})} \leq \normpI{\Lambda(u_x-v)} + \normpI{\Lambda(u_{x'}-v)} \lesssim (\alpha F)^{\frac{1}{p}}.
	\]
	It follows that
	\begin{equation} \label{eqn:p<=2_intermediate_rho}
		\rho(x, x')^2 \lesssim \norminfI{\Lambda(u_x-u_{x'})}^p (\alpha F)^{\frac{2-p}{p}}  (\alpha F)^{\frac{2p-2}{p}} \leq \alpha F\cdot \norminfI{\Lambda(u_x-u_{x'})}^p.
	\end{equation}
	
	When $p>2$, we use the fact that $\abs{z_i}^{2p-2} \leq \abs{z_i}^{p} \norm{z}_\infty^{p-2}$ for a vector $z$ and so we can proceed from \Cref{eq:rho_first_bound} as
	\begin{align*}
		& \sum_{i\in I} \abs{\lambda_i(u_{x} - u_{x'})_i}^{2} \left(\abs{\lambda_i(u_x-v)_i}^{2p-2} + \abs{\lambda_i(u_{x'}-v)_i}^{2p-2}\right) \\
		\leq\,& 
		\sum_{i\in I} \norminfI{\Lambda(u_x-u_{x'})}^2  (\abs{\lambda_i (u_x-v)_i}^{p}\norminfI{\Lambda(u_x-v)}^{p-2} + \abs{\lambda_i(u_{x'}-v)_i}^{p}\norminfI{\Lambda (u_{x'}-v)}^{p-2})\\
		=\,& 		
		\norm{\Lambda(u_x-u_{x'})}_{I,\infty}^2 \left(\normpI{\Lambda(u_x-v)}^p\norminfI{\Lambda(u_x-v)}^{p-2} + \normpI{\Lambda(u_{x'}-v)}^p\norminfI{\Lambda(u_{x'}-v)}^{p-2}\right) \\
		\lesssim\, & 
		\alpha F \cdot \norm{\Lambda(u_x-u_{x'})}_{I,\infty}^2 \left(\norminfI{\Lambda(u_x-v)}^{p-2} + \norminfI{\Lambda(u_{x'}-v)}^{p-2}\right)  \quad \text{by \Cref{eq:normi_bound}}\\
		\leq\, & 
		\alpha F \cdot \norm{\Lambda(u_x-u_{x'})}_{I,\infty}^2 \left(\normpI{\Lambda(u_x-v)}^{p-2} + \normpI{\Lambda(u_{x'}-v)}^{p-2}\right).
	\end{align*}
	Since
	\begin{align*}
		\normpI{\Lambda(u_x - v)}
		\leq \normpI{\Lambda(u_x - u_{\bar{x}})} + \normpI{\Lambda(u_{\bar{x}} - v)} \lesssim
		(\alpha F)^{\frac{1}{p}}
	\end{align*}
	we have
	\begin{equation}\label{eqn:p>2_intermediate_rho}
		\rho(x,x')^2 \lesssim (\alpha F)^{2-\frac{2}{p}} \cdot \norminfI{\Lambda(u_x-u_{x'})}^2.
	\end{equation}
	
	Combining \Cref{eqn:p<=2_intermediate_rho} and \Cref{eqn:p>2_intermediate_rho} yields
	\begin{equation}\label{eqn:intermediate_rho}
		\rho(x,x') \lesssim \left(\alpha F\right)^{(1-\frac{1}{p})\vee \frac{1}{2}}\norminfI{\Lambda(u_x-u_{x'})}^{\frac{p}{2}\wedge 1}
	\end{equation}
	and our next task to upper bound $\norminfI{\Lambda(u_x-u_{x'})}$.
	
	Recall that $w_i\leq 2d/n$, we have by the Lipschitz condition and \Cref{lem:lewis_weight_properties}(d),
	\begin{equation}\label{eqn:coordinate_bound}
		\begin{aligned}
			\abs{\lambda_i(u_x-u_{x'})_i} \leq \abs{(\Lambda(Ax - Ax'))_i} &\leq w_i^{\frac{1}{p}} d^{\frac{1}{2}-\frac{1}{2\vee p}} \norm{\Lambda A(x-x')}_p \\
			&\lesssim \left(\frac{d}{n}\right)^{\frac{1}{p}} d^{\frac{1}{2}-\frac{1}{2\vee p}}\norm{\Lambda A(x-x')}_p.
		\end{aligned}
	\end{equation}
	Plugging this result into \Cref{eqn:intermediate_rho} immediately leads to
	\begin{equation}\label{eq:rho_bound_1}
		\rho(x,x') \lesssim K\cdot d^{\frac{1}{2}-\frac{1}{2\vee p}} \norm*{\Lambda A(x-x')}_{p}^{\frac{p}{2}\wedge 1}.
	\end{equation}
	Alternatively,
	\begin{align*}
		\abs{\lambda_i(u_x-u_{x'})_i} 
		&\lesssim \left(\frac{d}{n}\right)^{\frac{1}{p}} \cdot \frac{\abs{\Lambda_{ii}(Ax - Ax')_i}}{w_i^{\frac{1}{p}}} \\
		&\lesssim \left(\frac{d}{n}\right)^{\frac{1}{p}} \left(w_i\big(\frac{\abs{\Lambda_{ii}(Ax - Ax')_i}}{w_i^{\frac{1}{p}}}\big)^q\right)^{\frac{1}{q}} \quad \text{since $w_i>\frac{ \eps^p d}{n^2}$} \\
		&= \left(\frac{d}{n}\right)^{\frac{1}{p}} \norm*{W^{-\frac{1}{p}}\Lambda A(x-x')}_{w,q} \quad \text{recall that $\norm*{x}_{w,p} = (\sum_{i=1}^n w_i |x_i|^q)^{1/q}$}
	\end{align*}
	and thus
	\begin{equation}\label{eq:rho_bound_2}
		\rho(x,x') \lesssim K\cdot \norm*{W^{-\frac{1}{p}}\Lambda A(x-x')}_{w,q}^{\frac{p}{2}\wedge 1}.
	\end{equation}
	as claimed. 
	
	Finally, we bound $\Diam(T,\rho)$. By the definition of $T$ and \Cref{eq:rho_bound_1}, we have that $\normp{\Lambda A(x-x')}^p \lesssim R$.	
	The claimed upper bound on $\Diam(T,\rho)$ follows immediately.
\end{proof}

\subsubsection{Bounding Dudley's Integral}\label{sec:dudley_bound}

In this section, we will prove \Cref{eq:dudley_bound}, i.e.
\begin{align*}
	\int_{0}^{\Diam(T,\rho)} \sqrt{\log \mathcal{N}(T,\rho,r)} dr
	& \lesssim
	\alpha \cdot \Gamma \cdot \poly\log n,
\end{align*}
where $\Gamma$ is as defined in \Cref{eq:Gamma}. To further simplify the notation, let 
\[
\phi = \frac{p}{2}\wedge 1, \quad \beta = \frac{1}{p}\wedge \frac{1}{2}, \quad \gamma = \frac{1}{2} - \frac{1}{p\vee 2},\quad  \theta = (1-\frac{1}{p}) \vee \frac{1}{2}.
\]
Note that 
\begin{align*}
	\log \mathcal{N}(T,\rho,r)
	& \leq
	\log \mathcal{N}(R^{\frac{1}{p}}\cdot B_p(\Lambda A), Kd^\gamma\norm{W^{-\frac{1}{p}}\Lambda A(\cdot)}_{w,p}^{\phi}, r) & \text{by \Cref{lem:metric_transform_combined}} \\
	& =
	\log \mathcal{N}(B_p(\Lambda A), \norm{W^{-\frac{1}{p}}\Lambda A(\cdot)}_{w,p}, \frac{1}{R^{\frac{1}{p}}}(\frac{r}{K d^\gamma})^{\frac{1}{\phi}})  \\
	& =
	\log \mathcal{N}(B_{w,p}(E), \norm{\cdot}_{w,p}, \frac{1}{R^{\frac{1}{p}} d^\gamma}(\frac{r}{K})^{\frac{1}{\phi}}) & \text{since $\gamma/\phi = \gamma$}
\end{align*}
where $E=\colspace(W^{-\frac{1}{p}}\Lambda A)$ and is endowed with norms $\norm{\cdot}_{w,p}$ for $p\geq 1$.

Now, we shall split the integral domain into two parts $[0,\lambda]$ and $[\lambda, \Diam(T,\rho)]$ for some $\lambda\leq \frac{1}{2}KR^{\beta} d^\gamma$ to be determined later (note that $\phi = p\beta$). Note that when $r\leq \lambda$, we have $(r/K)^{1/\phi}/(R^{\frac{1}{p}}d^\gamma) \leq (\frac{1}{2})^\phi < 1$.

By \Cref{lem:entropy_estimates_consolidated} Case 1, we have (letting $\lambda' = \lambda/(KR^{\beta} d^\gamma)$)
\begin{align*}
	\int_{0}^\lambda \sqrt{\log \mathcal{N}(T,\rho,r)} dr 
	& \lesssim
	\int_{0}^{\lambda} \sqrt{\log \mathcal{N}(B_{w,p}(E), \norm*{\cdot}_{w,p},\frac{1}{R^{\frac{1}{p}}d^\gamma}(\frac{r}{K})^{\frac{1}{\phi}})} dr \\
	& =
	\int_{0}^{\lambda'} \sqrt{\log \mathcal{N}(B_{w,p}(E), \norm*{\cdot}_{w,p}, s^{\frac{1}{\phi}})} K d^\gamma R^{\beta} ds \\
	& \lesssim
	\int_{0}^{\lambda'} \sqrt{d\log\frac{1}{s}} K d^\gamma R^{\beta} ds \\
	& \lesssim
	\sqrt{d} \cdot K d^\gamma R^{\beta} \int_0^{\lambda'} \log\frac{1}{s} ds \\
	& \lesssim
	\sqrt{d} \cdot K d^\gamma R^{\beta} \cdot \lambda' \log\frac{1}{\lambda'} \\
	& \lesssim 
	\lambda \sqrt{d} \log \frac{d^\gamma R^{\beta} K}{\lambda}. \numberthis\label{eq:dudley_small_r}
\end{align*}
To handle the integral over $[\lambda, \Diam(T,\rho)]$, we bound 
\begin{align*}
	\log \mathcal{N}(T,\rho,r)
	& \leq
	\log \mathcal{N}(R^{\frac{1}{p}}\cdot B_p(\Lambda A), K\norm{W^{-\frac{1}{p}}\Lambda A(\cdot)}_{w,q}^{\phi}, r) & \text{by \Cref{lem:metric_transform_combined}} \\
	& =
	\log \mathcal{N}(B_p(\Lambda A), \norm{W^{-\frac{1}{p}}\Lambda A(\cdot)}_{w,q}, \frac{1}{R^{\frac{1}{p}}}(\frac{r}{K })^{\frac{1}{\phi}})  \\
	& =
	\log \mathcal{N}(B_{w,p}(E), \norm{\cdot}_{w,q}, \frac{1}{R^{\frac{1}{p}}}(\frac{r}{K})^{\frac{1}{\phi}})
\end{align*}
where $E=\colspace(W^{-\frac{1}{p}}\Lambda A)$ and is endowed with norms $\norm{\cdot}_{w,q}$ for $q\geq 1$.
We further divide the estimates into two cases. 
For $p\in [1,2]$, we invoke \Cref{lem:entropy_estimates_consolidated} Case 2 and obtain that
\begin{align*}
	\int_{\lambda}^{\Diam(T,\rho)} \sqrt{\log \mathcal{N}(T,\rho,r)} dr 
	& \lesssim
	\int_{\lambda}^{\Diam(T,\rho)} \sqrt{\log \mathcal{N}(B_{w,p}(E), \norm*{\cdot}_{w,q},\frac{1}{R^{\frac{1}{p}}}(\frac{r}{K})^{\frac{2}{p}})} dr \\
	& \lesssim
	R^{\frac{1}{2}}K\sqrt{\log(\frac{n}{\eps d})\sqrt{\log d}}
	\int_{\lambda}^{\Diam(T,\rho)} \frac{1}{r} dr \\
	& \lesssim
	R^{\frac{1}{2}}K\sqrt{\log \frac{n}{\eps d} \sqrt{\log d}}\log  \frac{\Diam(T,\rho)}{\lambda} . \numberthis\label{eq:dudley_large_r_small_p}
\end{align*}
For $p > 2$, we invoke \Cref{lem:entropy_estimates_consolidated} Case 3 and obtain that
\begin{align*}
	\int_{\lambda}^{\Diam(T,\rho)} \sqrt{\log \mathcal{N}(T,\rho,r)} dr
	\lesssim\; &
	\int_{\lambda}^{\Diam(T,\rho)} \sqrt{\log \mathcal{N}(B_{w,p}(E), \norm*{\cdot}_{w,q},\frac{r}{R^{\frac{1}{p}}\cdot K})} dr \\
	\lesssim\; &
	d^{\frac{1}{2}-\frac{1}{p}}\sqrt{\log \frac{n}{\eps d} }R^{\frac{1}{p}}K\int_{\lambda}^{\Diam(T,\rho)} \frac{1}{r} dr \\
	\lesssim\; &
	d^{\frac{1}{2}-\frac{1}{p}}\sqrt{\log \frac{n}{\eps d} }R^{\frac{1}{p}}K \log \frac{\Diam(T,\rho)}{\lambda}. \numberthis\label{eq:dudley_large_r_large_p}
\end{align*}
Combining \Cref{eq:dudley_large_r_small_p} and \Cref{eq:dudley_large_r_large_p} yields
\begin{equation}\label{eq:dudley_large_r}    
	\int_{\lambda}^{\Diam(T,\rho)} \sqrt{\log \mathcal{N}(T,\rho,r)} dr 
	\lesssim d^{\gamma} R^{\beta} K \sqrt{\log \frac{n}{\eps d} \sqrt{\log d}}\log  \frac{\Diam(T,\rho)}{\lambda}.
\end{equation}
Recall that by Lemma \ref{lem:metric_transform_combined},
\begin{align*}
	\Diam(T,\rho)
	\lesssim
	d^{\gamma}KR^{\beta}, \quad \text{where}\quad  K = (\alpha F)^{\theta}\left(\frac{d}{n}\right)^{\beta}.
\end{align*}
Combining \Cref{eq:dudley_small_r} and \Cref{eq:dudley_large_r} and taking $\lambda = (d^{\gamma-\frac{1}{2}}R^{\beta} K) \wedge (\frac{1}{2}d^\gamma R^{\beta} K)$, we have
\begin{align*}
	\int_{0}^{\Diam(T,\rho)} \sqrt{\log \mathcal{N}(T,\rho,r)} dr
	&\lesssim \lambda\sqrt{d} \log\frac{d^\gamma R^{\beta} K}{\lambda} + d^\gamma R^{\beta} K\sqrt{\log\frac{n}{\eps d}}\log\frac{d^\gamma KR^{\beta}}{\lambda} \\
	&\lesssim d^{\gamma} R^{\beta} K \log d\sqrt{\log\frac{n}{\eps d}\sqrt{\log d}} \\
	&= \alpha \cdot \Gamma \cdot \log^{\frac54} d\sqrt{\log\frac{n}{\eps d}}.
\end{align*}

\subsection{Removing the Dependence on $n$}\label{sec:log n factors}

In this section, we reduce the $\log n$ factors in \Cref{thm:main_upper_weaker} to $\log(d/\eps)$ factors, thereby proving \Cref{thm:main_upper}. This is achieved by first applying a sampling matrix $S^\circ$ to reduce the dimension of the regression problem from $n$ to $\poly(d/\eps)$ before invoking \Cref{alg:main_general}; see \Cref{alg:main_general_no_n} for the full algorithm. The sampling matrix $S^\circ$ uses a larger sampling rate $\alpha^\circ$, which allows for controlling the error in Lemma~\ref{lem:main_concen} via Bernstein's inequality with a simple net argument instead of the chaining argument or Dudley's integral and thus avoiding the $\log n$ factor from entropy estimates.

Recall that, in \Cref{sec:generalized}, we introduce the matrix $\Lambda$ to ensure that the Lewis weights are bounded uniformly.
We will include the matrix $\Lambda$ in our proof and abuse the notations by dropping the prime mark as indicated in \Cref{sec:generalized}.

The following is a weaker version of \Cref{lem:main_concen} for reducing $n$ to $\poly(d/\eps)$.
\begin{lemma}\label{lem:main_concen_weak}
Let $A\in \R^{n\times d}$, $v\in\R^n$, $f\in\lip_1$, $\eps > 0$, $\Lambda \in \R^{n\times n}$, $\alpha\in [0,1]$, $S\in \R^{n\times n}$ and $R\in\R$ be as defined in Lemma~\ref{lem:main_concen} and satisfy the same constraints.
Suppose that $0\leq \alpha \leq 1$ such that $\alpha n \gtrsim (d^{(\frac{p}{2}\vee 1)+1}/\eps^{p+2})\log(1/\eps)$ and $R\geq \normp{\Lambda Ax^*}^p \vee \normp{\Lambda(f(Ax^\ast)-v)}^p$. 
Let $T = \setdef*{x\in\mathbb{R}^d}{\normp{\Lambda Ax}^p\leq R}$.
When conditioned on
\begin{align*}
    \normp{S\Lambda(f(Ax^*) - v)}^p 
    \lesssim
    R,
\end{align*}
it holds with probability at least $0.99$ that
\begin{align*}
    \MoveEqLeft \sup_{x\in T}\abs*{(\normp{S\Lambda (f(Ax)-v)}^p - \normp{S\Lambda (f(Ax^*)-v)}^p) - (\normp{\Lambda (f(Ax)-v)}^p - \normp{\Lambda (f(Ax^*)-v)}^p)} \\
    & \leq  
    C \eps R, \numberthis\label{eq:weak_concen}
\end{align*}
where $C$ is an absolute constant.
\end{lemma}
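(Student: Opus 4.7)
The plan is to follow the same bad/good/high-Lewis-weight index decomposition as in the proof of \Cref{lem:main_concen}, but replace the chaining argument on the high-Lewis-weight indices with a direct $\eps$-net plus Bernstein argument, trading the much larger sampling rate for the removal of the $\sqrt{\log n}$ factor that arises from the entropy estimates. Let $G$ and $J$ be as defined in \Cref{eq:gd_idx_def,eq:low_lw_idx_def} with $\bar x$ replaced by $x^*$, and write the target quantity as $\sup_{x\in T}\abs{g(x)}$ where $g(x)=\sum_i(S_{ii}^p-1)\Lambda_{ii}^p(\abs{(f(Ax)-v)_i}^p-\abs{(f(Ax^*)-v)_i}^p)$. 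The same computations as in \Cref{lem:bad_terms} (the condition $\normp{S\Lambda(f(Ax^*)-v)}^p\lesssim V$ there only needs to be weakened to the currently assumed $\lesssim R$, which yields $\eps R$ in place of $\eps V$ at the final step, since $V\leq R$) and \Cref{lem:low_lewis_terms} bound the contributions from $[n]\setminus G$ and $G\setminus J$ each by $O(\eps R)$; for the latter I use that the assumed sampling rate forces $d^{1\vee p/2}/(\alpha n)\leq\eps$.

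For the remaining sum over $J$ and a fixed $x\in T$, the Lipschitz hypothesis, \Cref{lem:term_bound}, the defining inequality of $G$, and $w_i\lesssim d/n$ together give $\Lambda_{ii}^p\abs{(f(Ax)-v)_i}^p\lesssim d^{p/2\vee 1}R/(n\eps^p)$ for every $i\in J$, so each summand $Y_i(x)$ of $g(x)$ has absolute value at most $M\lesssim d^{p/2\vee 1}R/(\alpha n\eps^p)$. Combined with the Lipschitz-plus-triangle bound $\normp{\Lambda(f(Ax)-v)}^p\lesssim R$ on $T$, a direct computation using $\Delta_i^2\leq \abs{\Delta_i}\cdot(\abs{(f(Ax)-v)_i}^p+\abs{(f(Ax^*)-v)_i}^p)$ yields $\sum_{i\in J}\Var(Y_i(x))\lesssim MR$. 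Taking $t=\eps R$, I find $t^2/\sigma^2\gtrsim \alpha n\eps^{p+2}/d^{p/2\vee 1}$ and $Mt\ll\sigma^2$, so Bernstein's inequality gives $\Pr(\abs{g(x)}>\eps R)\leq 2\exp(-c\alpha n\eps^{p+2}/d^{p/2\vee 1})$, which by the assumed sampling rate is at most $2\exp(-c'd\log(1/\eps))$ for any $c'$ that can be made arbitrarily large by enlarging the constant in the hypothesis.

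To pass from a pointwise estimate to the supremum I would construct a $\gamma$-net $\mathcal N\subseteq T$ with respect to the seminorm $\normp{\Lambda A(\cdot)}$ of cardinality $\abs{\mathcal N}\leq(3/\eps)^d$ using the standard volumetric estimate on a ball inside a $d$-dimensional subspace with $\gamma\asymp\eps R^{1/p}$. A union bound over $\mathcal N$ gives $\sup_{y\in\mathcal N}\abs{g(y)}\lesssim\eps R$ with probability $\geq 0.995$ once $c'$ is sufficiently large. The residual fluctuation within each ball of radius $\gamma$ is handled deterministically using $\abs{a^p-b^p}\leq p\abs{a-b}(a^{p-1}+b^{p-1})$, the Lipschitz property of $f$, and two applications of H\"older's inequality (one for the unweighted sum and one for the $S$-weighted sum); the latter requires $S$ to be a constant-distortion $\ell_p$-subspace embedding for $\Lambda A$, which holds with probability $\geq 0.99$ by \Cref{lem:SE}. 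This yields $\abs{g(x)-g(y)}\lesssim \gamma R^{1-1/p}\lesssim\eps R$ whenever $\normp{\Lambda A(x-y)}\leq\gamma$, completing the supremum bound up to absolute constants.

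The main technical step will be the variance calculation $\sum_{i\in J}\Var(Y_i(x))\lesssim MR$: one factor in the square $(\abs{a^p-b^p})^2$ is controlled by the pointwise bound $M$ valid on $J\subseteq G$, and the other by $\sum_i\Lambda_{ii}^p(\abs{(f(Ax)-v)_i}^p+\abs{(f(Ax^*)-v)_i}^p)\lesssim R$; the precise sampling-rate threshold $\alpha n\gtrsim d^{(p/2\vee 1)+1}\log(1/\eps)/\eps^{p+2}$ in the hypothesis is then exactly what makes the Bernstein exponent dominate $d\log(1/\eps)$, the logarithm of the net cardinality.
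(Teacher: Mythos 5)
Your proposal matches the paper's own proof essentially step for step: the same bad/good/high-Lewis-weight decomposition with \Cref{lem:bad_terms} and \Cref{lem:low_lewis_terms} (noting, as you do, that the weaker condition $\lesssim R$ yields a conclusion of $O(\eps R)$), the same pointwise bound $M\lesssim d^{p/2\vee 1}R/(\alpha n\eps^p)$ and variance bound $\lesssim MR$ feeding into Bernstein, the same $(3/\eps)^d$-size net in $\normp{\Lambda A(\cdot)}$, and the same H\"older-plus-subspace-embedding argument for the residual. The only cosmetic difference is that you apply Bernstein directly to the centered variables $(S_{ii}^p-1)\Lambda_{ii}^p(z_x)_i$ whereas the paper first symmetrizes with Rademacher signs; both are valid and lead to the same sampling-rate threshold.
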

\begin{proof}
	Recall that the error bound in \Cref{lem:main_concen} consists of three terms. By the same proofs as in \Cref{lem:bad_terms,lem:low_lewis_terms}, the first two terms remain the same, both of which are now bounded by $C\eps R^p$ under our assumptions. The rest of the proof is devoted to deriving a similar bound for the third term. Recall that we need to upper bound 
	\[
	\sup_{x\in T} \abs*{\sum_{i\in J} S_{ii}^p \xi_i (z_x)_i},
	\]
	where $z_x$ is as defined in \Cref{eq:z_def} and $\{\xi_i\}$ are independent Rademacher variables. We shall use a net argument here. 
	
	Fix an $x\in T$. Let $W_i = S_{ii}^p \xi_i (z_x)_i$, then $\E W_i = 0$ and
	\begin{align*}
		\abs{W_i} &\leq \frac{1}{\alpha}\abs{(z_x)_i}  \\
		&= \frac{1}{\alpha} \abs{(\Lambda(f(Ax) - v)_i^p - (\Lambda(f(Ax^*) - v)_i^p} \\
		&\leq \frac{1}{\alpha} \cdot 2^{p-1}(\abs{\Lambda_{ii}(f(Ax) - f(Ax^\ast))_i}^p + \abs{\Lambda_{ii}(f(Ax^\ast) - v)_i}^p ) + \abs{\Lambda_{ii}(f(Ax^*) - v)_i}^p\\
		&\lesssim \frac{1}{\alpha} \left( \abs{\Lambda_{ii}(Ax - Ax^\ast)_i}^p + \abs{\Lambda_{ii}(f(Ax^*) - v)_i}^p\right).
	\end{align*}
	Now we use \Cref{eqn:coordinate_bound} for the first term in the brackets and the definition of $G$ in \Cref{eq:gd_idx_def} for the second term. We proceed as
	\begin{align*}
		|W_i| 
		&\lesssim 
		\frac{1}{\alpha}\left( \frac{d}{n}\cdot d^{(\frac{p}{2}-1)\vee 0} \normp*{\Lambda A(x-x^\ast)}^p
		+ \frac{1}{\eps^p} d^{(\frac{p}{2}-1)\vee 0}\cdot \frac{d}{n}\cdot R \right)
		\lesssim 
		\underbrace{\frac{1}{\alpha n \eps^p} \cdot d^{\frac{p}{2}\vee 1} R}_{:=\Delta}.
	\end{align*}
	Next we bound $\E (\sum_{i\in J} W_i^2)$.
	\[
	\E \sum_{i\in J} W_i^2 = \sum_{i\in J} \E S_{ii}^{2p} (z_x)_i^2 \\
	\leq \sum_i \frac{1}{\alpha} \cdot \max_i \abs{(z_x)_i}\cdot \abs{(z_x)_i} \lesssim \Delta \cdot \sum_{i\in J} \abs{(z_x)_i}.
	\]
	Note that 
	\begin{align*}
		\sum_{i\in J} \abs{(z_x)_i} 
		&\leq \normp{\Lambda(f(Ax) - v)}^p + \normp{\Lambda(f(Ax^\ast) - v)}^p \\
		&\leq 2^{p-1}( \normp{\Lambda(f(Ax) - f(Ax^\ast))}^p + \normp{\Lambda(f(Ax^\ast) - v)}^p ) +  \normp{\Lambda(f(Ax^\ast) - v)}^p\\
		&\leq 2^{p-1} \normp{\Lambda A(x - x^\ast)}^p + (2^{p-1} + 1)\normp{\Lambda(f(Ax^\ast) - v)}^p \\
		&\lesssim \normp{\Lambda Ax}^p + \normp{\Lambda Ax^\ast}^p + \normp{\Lambda(f(Ax^\ast) - v)}^p \\
		&\lesssim R.
	\end{align*}
	Hence
	\[
	\E \sum_{i\in J} W_i^2 \lesssim \Delta\cdot R.
	\]
	It follows from Bernstein's inequality that
	\begin{align*}
		\Pr\left\{ \abs*{\sum_{i\in J} W_i} \geq \eta R \right\} &\leq 2\exp\left(-c\frac{\eta^2 R^{2}}{\Delta R^p + \Delta \cdot \eta R}\right)  \\
		&\leq 2\exp\left(-c'\frac{\eta^2 R}{\Delta}\right) \\
		&\leq \exp\left(-Cd\log\frac{1}{\eta}\right),
	\end{align*}
	provided that
	\[
	\eta^2 R \gtrsim d\cdot \Delta = 
	\begin{cases}
		d^{\frac{p}{2}+1}\log\frac{1}{\eta}\cdot R/(\alpha n \eps^p) & p > 2\\
		d^{2}R\log\frac{1}{\eta}/(\alpha n \eps^p)	& 1\leq p < 2		
	\end{cases}
	\]
	or
	\begin{equation}\label{eq:alpha_condition_weak}
		\alpha n\gtrsim \frac{d^{1+(\frac{p}{2}\vee 1)}}{\eta^2 \eps^p}\log\frac{1}{\eta}.
	\end{equation}
	To summarize, we have shown that when \Cref{eq:alpha_condition_weak}, for each fixed $x\in T$  that with probability at least $1-\delta$ (where $\delta = \exp(-Cd\log(1/\eta))$), it holds
	\[
	\abs*{\sum_{i\in J} S_{ii}^p \xi_i(z_x)_i} \leq \eta R.
	\]
	To obtain an upper bound for the supremum over $x\in T$, we employ a standard net argument. Let $N\subseteq T$ be an $(\eta R^{\frac{1}{p}})$-net of $T$ such that $\abs{N}\leq (3/\eta)^d$. By a union bound, we have with probability at least $1-\abs{N}\delta\geq 0.99$ that
	\[
	\sup_{x\in N} \abs*{\sum_{i\in J} S_{ii}^p \xi_i(z_x)_i} \leq \eta R.
	\]
	For an $\Lambda A x\in T$, there exists $\Lambda A y\in N$ such that $\norm{\Lambda A(x-y)}_p \leq \eta R^{\frac{1}{p}}$. Thus
	\begin{align*}
		\sup_{x\in T}\abs*{\sum_{i\in J} S_{ii}^p \xi_i(z_x)_i} 
		&\leq 
		\sup_{y\in N}\abs*{\sum_{i\in J} S_{ii}^p \xi_i(z_y)_i} +  \sup_{x\in T} \abs*{\sum_{i\in J} S_{ii}^p \xi_i(z_x - z_y)_i} \\
		&\leq 
		\eta R +  \sup_{x\in T}\abs*{\sum_{i\in J} S_{ii}^p \xi_i(z_x - z_y)_i}.
	\end{align*}
	We bound the error term by H\"older's inequality as 
	\begin{align*}
		&\quad\ \abs*{\sum_{i\in J} S_{ii}^p \xi_i(z_x - z_y)_i}  \\
		&=
		\sum_{i\in J}  S_{ii}^p \abs*{(z_x - z_y)_i} \\
		&= \sum_{i\in J} S_{ii}^p \abs*{\abs{\Lambda_{ii}(u_x-v)_i}^p - \abs{\Lambda_{ii}(u_y-v)_i}^p}\\
		&\lesssim 
		\sum_{i\in J} S_{ii}^p \abs{\Lambda_{ii}(u_x - u_y)_i} \left(\abs{\Lambda_{ii}(u_x-v)_i}^{p-1} + \abs{\Lambda_{ii}(u_y-v)_i}^{p-1}\right)\\
		&\lesssim \left(\sum_{i\in J}  S_{ii}^p \abs{\Lambda_{ii}(Ax -  Ay)_i}^p\right)^{\frac{1}{p}} \left(\left(\sum_{i\in J} S_{ii}^p\abs{\Lambda_{ii}(u_x-v)_i}^{p}\right)^{1-\frac{1}{p}} + \left(\sum_{i\in J} S_{ii}^p\abs{\Lambda_{ii}(u_y-v)_i}^{p}\right)^{1-\frac{1}{p}}\right) \\
		&\leq \norm{S\Lambda A(x-y)}_p (\normp{S\Lambda(f(Ax)-v)}^{p-1} + \normp{S\Lambda(f(Ay)-v)}^{p-1}) 
	\end{align*}
	Using the the subspace embedding property of $S$,
	\[
	\norm{S\Lambda A(x-y)}_p \leq 2\norm{\Lambda A(x-y)}_p \leq 2\eta R^{\frac{1}{p}}
	\]
	and
	\begin{align*}
		\normp{S\Lambda(f(Ax)-v)}^{p-1} &\lesssim \normp{S\Lambda(f(Ax)-f(Ax^\ast))}^{p-1} + \normp{S\Lambda(f(Ax^\ast)-v)}^{p-1} \\
		&\leq \normp{S\Lambda A(x - x^\ast)}^{p-1} + \normp{S\Lambda(f(Ax^\ast)-v)}^{p-1} \\
		&\lesssim \normp{\Lambda A(x - x^\ast)}^{p-1} + \normp{S\Lambda(f(Ax^\ast)-v)}^{p-1} \\
		&\lesssim R^{1-\frac{1}{p}}.
	\end{align*}
	Hence, 
	\begin{align*}
		\sup_{x\in T} \abs*{\sum_{i\in J} S_{ii}^p \xi_i(z_x - z_y)_i} 
		\lesssim 
		\eta R.
	\end{align*}
	Therefore, 
	\[
	\abs*{\sum_{i\in J} S_{ii}^p \xi_i(z_x)_i} \lesssim \eta R
	\]
	and the claimed result follows from setting $\eta \sim \eps$.
\end{proof}

To prove that the output $\hat{x}$ of \Cref{alg:main_general_no_n} satisfies \Cref{eq:main_obj}, let $x^\circ \in \mathbb{R}^d$ be 
\begin{align*}
	x^\circ
	& :=
	\argmin_{x\in \mathbb{R}^d} \normp{S^\circ \Lambda (f(Ax) -b)}^p + \eps^2 \normp{\Lambda Ax}^p
\end{align*}
where $\Lambda$ is the matrix ensuring the Lewis weights of $\Lambda A$ are uniformly bounded.
Note that we set the regularized parameter to be $\eps^2$ instead of $\eps$.
We highlight that this $x^\circ$ is for the purpose of analysis and we do not actually compute it in the algorithm.
From now on, we set
\begin{align*}
	\alpha \sim \frac{d^{\frac{p}{2}\vee 1}}{m \eps^{p\vee 2}} \cdot \poly \log m
	\quad \text{and}\quad 
	\alpha^\circ \sim \frac{d^{(\frac{p}{2}\vee 1)+1}}{n(\eps^4)^{p+2}}\log(1/\eps)
\end{align*}
where $m$ is the number of nonzero rows in $S^\circ A$ which is the same as the $m$ defined in \Cref{alg:main_general_no_n}.
Note that our choice of $\alpha^\circ$ implies that $m\sim \frac{d^{(\frac{p}{2}\vee 1)+1}}{\eps^{4p+8}}\log(1/\eps) = \poly(d,\frac{1}{\eps})$.
We preview that when we use \Cref{lem:main_concen_weak} we aim for the error of $C\eps^2 R$ in \Cref{eq:weak_concen}.
Combining with the regularized parameter $\eps^2$, 
we set $\eps^4$ in $\alpha^\circ$.
Recall that our goal is to simply reduce $n$ to $\poly(\frac{d}{\eps})$ and thus these choices of the exponents of $\eps$ may not be optimized.
Nonetheless, they are sufficient to achieve our objective.

We begin with using \Cref{lem:main_concen_final} with $\bar{x} = x^\circ$ and $S^\circ \Lambda$ as the matrix ensuring the Lewis weights of $S^\circ \Lambda A$ are uniformly bounded.
We now verify the conditions in \Cref{lem:main_concen_final}.
By \Cref{sec:appendix_lewis}, the Lewis weights of $S^\circ \Lambda A$ are uniformly bounded by $O(d/m)$ with probability at least $0.99$.
Clearly, the output of \Cref{alg:main_general_no_n} satisfies
\begin{align*}
	\hat{x}
	=
	\argmin_{x\in \mathbb{R}^d} \normp{S S^\circ \Lambda (f(Ax) - b)}^p + \eps \normp{S^\circ \Lambda A x}^p
\end{align*}
and, by the optimality of $x^\circ$, we have 
\begin{align*}
	\normp{S^\circ \Lambda (f(Ax^\circ) - b)}^p - \normp{S^\circ \Lambda (f(A\hat{x}) - b)}^p
	& \leq
	\eps^2 \normp{\Lambda A \hat{x}}^p \numberthis\label{eq:final_concen_condi}
\end{align*}
We need to upper bound $\normp{\Lambda A \hat{x}}^p$.
By the optimality of $\hat{x}$, we have
\begin{align*}
	\normp{S S^\circ \Lambda (f(A\hat{x}) - b)}^p + \eps \normp{S^\circ \Lambda A \hat{x}}^p
	& \leq
	\normp{S S^\circ \Lambda (f(Ax^\circ) - b)}^p + \eps \normp{S^\circ \Lambda A x^\circ}^p
\end{align*}
which implies
\begin{align*}
	\normp{S^\circ\Lambda A \hat{x}}^p
	& \leq
	\frac{1}{\eps} \normp{SS^\circ \Lambda (f(Ax^\circ) - b)}^p + \normp{S^\circ \Lambda A x^\circ}^p. \numberthis\label{eq:circ_x_hat_1}
\end{align*}
Since $S^\circ$ is an $\ell_p$-subspace embedding matrix for $\Lambda A$ with constant distortion with probability $0.99$ because of our choice of $\alpha^\circ$ and we condition on it from now on, we have
\begin{align*}
	\frac{1}{2}\normp{\Lambda A \hat{x}}^p
	& \leq
	\normp{S^\circ\Lambda A \hat{x}}^p. \numberthis\label{eq:circ_sub_embed}
\end{align*}
Also, by Markov inequality, with probability at least $0.99$, we have
\begin{align*}
	\normp{SS^\circ \Lambda (f(Ax^\circ) - b)}^p
	& \leq
	100\normp{ S^\circ \Lambda (f(Ax^\circ) - b)}^p. \numberthis\label{eq:ss_circ_markov}
\end{align*}
Plugging \Cref{eq:circ_sub_embed} and \Cref{eq:ss_circ_markov} into \Cref{eq:circ_x_hat_1}, we have
\begin{align*}
	\normp{\Lambda A\hat{x}}^p
	& \lesssim
	\frac{1}{\eps}\normp{ S^\circ \Lambda (f(Ax^\circ) - b)}^p + \normp{S^\circ\Lambda A x^\circ}^p.
\end{align*}
and when we further plug this into \Cref{eq:final_concen_condi} we have
\begin{align*}
	\MoveEqLeft \normp{S^\circ \Lambda (f(Ax^\circ) - b)}^p - \normp{S^\circ \Lambda (f(A\hat{x}) - b)}^p \\
	& \lesssim
	\eps^2 \cdot  (\frac{1}{\eps}\normp{ S^\circ \Lambda (f(Ax^\circ) - b)}^p + \normp{S^\circ\Lambda A x^\circ}^p) \\
	& =
	\eps \cdot (\normp{ S^\circ \Lambda (f(Ax^\circ) - b)}^p + \eps \normp{S^\circ\Lambda A x^\circ}^p)
\end{align*}
which completes the condition verification for \Cref{lem:main_concen_final}.

By \Cref{lem:main_concen_final}, with probability $0.99$, we have
\begin{align*}
	\MoveEqLeft \abs{(\normp{SS^\circ \Lambda(f(A\hat{x}) - b)}^p - \normp{SS^\circ \Lambda(f(Ax^\circ) - b)}^p) - (\normp{S^\circ \Lambda(f(A\hat{x}) - b)}^p - \normp{S^\circ \Lambda(f(Ax^\circ) - b)}^p)} \\
	& \leq
	\eps \cdot (\normp{S^\circ \Lambda (f(Ax^\circ) - b)}^p + \norm{S^\circ \Lambda A x^\circ}^p).
\end{align*}
By rearranging the terms, we have
\begin{align*}
	\MoveEqLeft \normp{S^\circ \Lambda(f(A\hat{x}) - b)}^p - \normp{S^\circ \Lambda(f(Ax^\circ) - b)}^p \\
	& \leq
	\normp{SS^\circ \Lambda(f(A\hat{x}) - b)}^p - \normp{SS^\circ \Lambda(f(Ax^\circ) - b)}^p + \eps \cdot (\normp{S^\circ \Lambda (f(Ax^\circ) - b)}^p + \norm{S^\circ \Lambda A x^\circ}^p) \\
	& \leq
	\eps\cdot \normp{S^\circ \Lambda Ax^\circ}^p + \eps \cdot (\normp{S^\circ \Lambda (f(Ax^\circ) - b)}^p + \norm{S^\circ \Lambda A x^\circ}^p) \quad \text{by the optimality of $\hat{x}$}\\
	& \lesssim
	\eps \cdot (\normp{S^\circ \Lambda (f(Ax^\circ) - b)}^p + \norm{S^\circ \Lambda A x^\circ}^p) \numberthis\label{eq:circ_concen_1}
\end{align*}

Now, we would like to use \Cref{lem:main_concen_weak} with $R \sim \frac{1}{\eps}\opt + \normp{\Lambda A x^*}^p$ for $x=\hat{x}$ and hence we need to verify $\hat{x}\in T$.
By the optimality of $\hat{x}$, we have
\begin{align*}
	\normp{S S^\circ \Lambda (f(A\hat{x}) - b)}^p + \eps \normp{S^\circ \Lambda A \hat{x}}^p
	& \leq
	\normp{S S^\circ \Lambda (f(Ax^*) - b)}^p + \eps \normp{S^\circ \Lambda A x^*}^p
\end{align*}
which implies
\begin{align*}
	\normp{S^\circ\Lambda A \hat{x}}^p
	& \leq
	\frac{1}{\eps} \normp{SS^\circ \Lambda (f(Ax^*) - b)}^p + \normp{S^\circ \Lambda A x^*}^p. \numberthis\label{eq:circ_x_hat_ast}
\end{align*}
Recall that we condition that $S^\circ$ is an $\ell_p$-subspace embedding matrix for $\Lambda A$ with constant distortion.
We have
\begin{align*}
	\frac{1}{2}\normp{\Lambda A \hat{x}}^p
	& \leq
	\normp{S^\circ\Lambda A \hat{x}}^p
	\quad \text{and} \quad
	\normp{S^\circ\Lambda A x^*}^p
	\leq
	2\normp{\Lambda A x^*}^p.\numberthis\label{eq:circ_sub_embed_ast}
\end{align*}
Also, by Markov inequality, with probability $0.99$, we have
\begin{align*}
	\normp{SS^\circ \Lambda (f(Ax^*) - b)}^p
	\leq
	100\normp{\Lambda (f(Ax^*) - b)}^p
	=
	100\opt. \numberthis\label{eq:ss_circ_markov_ast}
\end{align*}
Plugging \Cref{eq:circ_sub_embed_ast} and \Cref{eq:ss_circ_markov_ast} into \Cref{eq:circ_sub_embed_ast}, we have
\begin{align*}
	\normp{\Lambda A \hat{x}}^p
	& \lesssim
	\frac{1}{\eps} \opt + \normp{\Lambda A x^*}^p
	\quad \text{which implies $\hat{x}\in T$.}
\end{align*}
By \Cref{lem:main_concen_weak} with our choice of $\alpha^\circ$, with probability $0.99$, we have
\begin{align*}
	\MoveEqLeft \abs*{(\normp{S^\circ\Lambda (f(A\hat{x})-b)}^p - \normp{S^\circ\Lambda (f(Ax^*)-b)}^p) - (\normp{\Lambda (f(A\hat{x})-b)}^p - \normp{\Lambda (f(Ax^*)-b)}^p)} \\
	& \lesssim 
	\eps^4 \cdot (\frac{1}{\eps} \opt + \normp{\Lambda A x^*}^p) \\
	& \lesssim
	\eps \cdot  (\opt + \normp{\Lambda A x^*}^p). \numberthis\label{eq:s_circ_concen}
\end{align*}
By the optimality of $x^\circ$, we have
\begin{align*}
    -\normp{S^\circ\Lambda(f(Ax^*) - b)}^p
    & \leq
    -\normp{S^\circ\Lambda(f(Ax^\circ) - b)}^p + \eps^2\normp{\Lambda Ax^*}^p
\end{align*}
which implies
\begin{align*}
    \MoveEqLeft \normp{S^\circ \Lambda(f(A\hat{x}) - b)}^p - \normp{S^\circ \Lambda(f(Ax^*) - b)}^p \\
    & \leq
    \normp{S^\circ \Lambda(f(A\hat{x}) - b)}^p -\normp{S^\circ\Lambda(f(Ax^\circ) - b)}^p + \eps^2\normp{\Lambda Ax^*}^p \\
    & \lesssim
    \eps \cdot (\normp{S^\circ \Lambda (f(Ax^\circ) - b)}^p + \norm{S^\circ \Lambda A x^\circ}^p) + \eps^2\normp{\Lambda Ax^*}^p\quad  \text{by \Cref{eq:circ_concen_1}.}
\end{align*}

By rearranging the terms in \Cref{eq:s_circ_concen}, we have
\begin{align*}
    \MoveEqLeft\normp{\Lambda(f(A\hat{x}) - b)}^p - \normp{ \Lambda(f(Ax^*) - b)}^p \\
    & \lesssim
    \eps \cdot (\normp{S^\circ \Lambda (f(Ax^\circ) - b)}^p + \norm{S^\circ \Lambda A x^\circ}^p + \opt + \norm{\Lambda A x^*}^p) \numberthis\label{eq:circ_conen_final}
\end{align*}
It means that we need to upper bound the terms $\normp{S^\circ \Lambda (f(Ax^\circ) - b)}^p$ and $\norm{S^\circ \Lambda A x^\circ}^p$.
For $\normp{S^\circ \Lambda (f(Ax^\circ) - b)}^p$, we have
\begin{align*}
	\normp{S^\circ \Lambda (f(Ax^\circ) - b)}^p
	& \leq
	\normp{S^\circ \Lambda (f(Ax^*) - b)}^p + \eps^2\normp{\Lambda A x^*}^p \quad \text{by the optimality of $x^\circ$} \\
	& \leq
	100\opt + \eps^2 \normp{\Lambda A x^*}^p \numberthis\label{eq:final_concen_check_1},
\end{align*}
where the last inequality holds with probability $0.99$ by Markov inequality.
For $\norm{S^\circ \Lambda A x^\circ}^p$, we have
\begin{align*}
	\norm{S^\circ \Lambda A x^\circ}^p
	& \lesssim
	\norm{\Lambda A x^\circ}^p \quad \text{recall that $S^\circ$ is an $\ell_p$ subspace embedding.}
\end{align*}
To further bound the term $\norm{\Lambda A x^\circ}^p$, we have
\begin{align*}
	\normp{\Lambda Ax^\circ}^p 
	& \leq
	\frac{1}{\eps^2}\normp{S^\circ \Lambda (f(Ax^*) - b)}^p  + \normp{\Lambda Ax^*}^p  \quad \text{by the optimality of $x^\circ$} \\
	& \leq
	\frac{100}{\eps^2}\opt + \normp{\Lambda Ax^*}^p \quad \text{by Markov inequality with probability $0.99$.}
\end{align*}
Note that this bound is not enough to finish our proof.
However, it implies that $x^\circ\in T$ where $T$ is the set defined in \Cref{lem:main_concen_weak} with $R = \frac{100}{\eps^2}\opt + \normp{\Lambda Ax^*}^p$.
By \Cref{lem:main_concen_weak} with our choice of $\alpha^\circ$, with probability $0.99$, we have
\begin{align*}
	\MoveEqLeft \abs*{(\normp{S^\circ\Lambda (f(Ax^\circ)-v)}^p - \normp{S^\circ\Lambda (f(Ax^*)-v)}^p) - (\normp{\Lambda (f(Ax^\circ)-v)}^p - \normp{\Lambda (f(Ax^*)-v)}^p)} \\
	& \lesssim
	\eps^4 \cdot (\frac{1000}{\eps^2}\opt + \normp{\Lambda A x^*}^p) \\
	& \lesssim
	\eps^2 \cdot (\opt + \normp{\Lambda A x^*}^p).
\end{align*}
Then, we have
\begin{align*}
	\normp{\Lambda Ax^\circ}^p 
	& \leq
	\frac{1}{\eps^2}\bigg(\normp{S^\circ \Lambda (f(Ax^*) - b)}^p - \normp{S^\circ \Lambda (f(Ax^\circ) - b)}^p \bigg) + \normp{\Lambda Ax^*}^p  \quad \text{by the optimality of $x^\circ$} \\
	& \lesssim
	\frac{1}{\eps^2}\bigg(\normp{\Lambda (f(Ax^\circ)-v)}^p - \normp{\Lambda (f(Ax^*)-v)}^p + \eps^2 \cdot (\opt + \normp{\Lambda A x^*}^p) \bigg) + \normp{\Lambda Ax^*}^p  \\
	& \lesssim
	\opt+\normp{\Lambda A x^*}^p \quad \text{by the optimality of $x^*$.} \numberthis\label{eq:final_concen_check_2}
\end{align*}
Plugging \Cref{eq:final_concen_check_1} and \Cref{eq:final_concen_check_2} into \Cref{eq:circ_conen_final}, we have
\begin{align*}
	\normp{\Lambda(f(A\hat{x}) - b)}^p - \opt
	& \lesssim
	\eps \cdot (\opt + \norm{\Lambda A x^*}^p).
\end{align*}
This completes the proof for the query complexity without dependence on $n$. 
The overall failure probability is at most $0.1$ in the above argument.

\section{Lower Bound}\label{sec:lower}

\subsection{Case of $p\in [1,2]$}
By Yao's minimax theorem, it suffices to show the following theorem. 
\begin{theorem}\label{thm:p<=2_deterministic}
	Suppose that $p\geq 1$, $\eps > 0$ is sufficiently small and $n\gtrsim_p (d\log d)/\eps^2$. 
	There exist a deterministic function  $f\in \lip_1$, a deterministic matrix $A\in\R^{n\times d}$ and a distribution over $b\in\R^n$ such that the following holds: every deterministic algorithm that outputs $\hat{x}\in\R^d$ which with probability at least $4/5$ over the randomness of $b$ satisfies \Cref{eq:main_obj} must make $\Omega\big(d/\eps^2\big)$ queries to the entries of $b$.
\end{theorem}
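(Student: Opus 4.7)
The plan is to follow the reduction outlined in the proof overview for $p\in[1,2]$: construct a hard instance in which $b$ is drawn from one of two distributions, show that any $\hat{x}$ satisfying \Cref{eq:main_obj} reveals which distribution generated $b$, and then invoke the information-theoretic lower bounds in \Cref{lem:lb_ber,lem:distinguishing_dist}. Yao's minimax principle is implicitly built into the theorem statement (a deterministic algorithm succeeding over a random $b$), so no separate appeal is needed.

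First I would construct the single-block instance for $d=1$. Set $m\sim 1/\eps^2$, take $a\in\R^m$ with $a_{2i-1}=1$, $a_{2i}=-1$, and let $f$ be the piecewise-linear function from the overview; one checks directly that $f\in\lip_1$ with $f(0)=0$. Form the two distributions $D_0,D_1$ on $\R^m$ by making each coordinate pair equal to $u=(3,2)^\top$ or $v=(2,3)^\top$ with complementary probabilities $\tfrac12\pm\eps$. Tracing the locus $\{f(ax):x\in\R\}$ (two vertical rays joined by a zigzag that passes through both $u$ and $v$) shows that at $x=-6$ the prediction hits exactly one of $\{u,v\}$ and at $x=6$ it hits the other. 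A short calculation then gives $\opt\leq 2^{p-1}m(1-2\eps)$ under either distribution and $\normp{f(ax)-b}^p\geq 2^{p-1}m(1+2\eps)$ for every $x$ lying on the ``wrong'' side of $0$. Since $\normp{ax^*}^p=\Theta(m)$ with a moderate absolute constant, for sufficiently small $\eps$ any $\hat{x}$ satisfying \Cref{eq:main_obj} must lie on the correct side of $0$ and therefore reveals which of $D_0,D_1$ generated $b$. An algorithm achieving \Cref{eq:main_obj} with probability $4/5$ thus distinguishes $D_0$ from $D_1$ with the same probability, and \Cref{lem:lb_ber} yields the $\Omega(m)=\Omega(1/\eps^2)$ lower bound for this block.

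Second, I would extend to general $d$ via the block-diagonal construction. Take $A\in\R^{n\times d}$ block-diagonal with $d$ copies of $a$ on the diagonal (so $n=dm\sim d/\eps^2$), and let $b$ consist of $d$ independent blocks, each drawn from $D_0$ or $D_1$ by a fair coin flip. Both $\normp{f(Ax)-b}^p$ and the regularizer $\normp{Ax^*}^p$ split exactly into sums over blocks, and $x^*$ is the concatenation of the per-block optimizers $\pm 6$. Plugging \Cref{eq:main_obj} into this decomposition shows that on any block $j$ where the sign of $\hat{x}_j$ is wrong, the per-block $\ell_p$-loss exceeds the per-block $\opt$ by $\Omega(\eps m)$, while the total global slack $\eps(\opt+\normp{Ax^*}^p)=O(\eps\cdot dm)$ can only absorb $O(d)$ such mistakes; choosing the absolute constant appropriately forces at least $2d/3$ of the blocks to be labeled correctly. \Cref{lem:distinguishing_dist} then gives the claimed $\Omega(dm)=\Omega(d/\eps^2)$ bound.

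The technically delicate step is the per-block excess estimate combined with the counting argument: one must verify that flipping the sign of $\hat{x}_j$ really increases the $\ell_p$-loss on block $j$ by a quantity comparable to $\eps m$ with a constant strictly larger than the per-block share of $\normp{Ax^*}^p$-slack, so that too many block-level failures would violate \Cref{eq:main_obj} globally. This is where the assumption $n\gtrsim (d\log d)/\eps^2$ enters: by Chernoff plus a union bound over the $d$ blocks, the per-block imbalance $|k_j-m/2|\sim\eps m$ (where $k_j$ counts the number of $u$'s in block $j$) holds \emph{simultaneously} for all $j$ with high probability, legitimizing the deterministic per-block analysis on a constant fraction of the probability space. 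The remaining bookkeeping is standard and parallels the lower-bound arguments of \citet{musco2022active} and \citet{taisuke:thesis}, with the only novelty being that the ``correct side'' interpretation is dictated by the bends of $f$ rather than by a linear prediction.
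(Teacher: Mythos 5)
Your proposal follows the paper's approach almost verbatim: the same hard instance (paired $\pm1$ rows of $a$, the four-piece Lipschitz $f$, distributions $D_0,D_1$ on pairs $u=(3,2)^\top,v=(2,3)^\top$ with bias $\pm\eps$), the same reduction to distinguishing the distributions, the same invocation of \Cref{lem:lb_ber} and \Cref{lem:distinguishing_dist}, and the same per-block excess-loss plus counting argument (\Cref{lem:p<=2_main_claim}) guarded by a Chernoff-plus-union-bound over the $d$ blocks. The locus analysis you sketch (``one checks directly that the smallest $\ell_p$-ball around $u$ through $v$ does not otherwise meet the branch'') is precisely \Cref{lem:p<=2_LB_aux}.

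One internal inconsistency worth fixing: you set the per-block length to $m\sim 1/\eps^2$ (giving $n\sim d/\eps^2$) but then correctly observe that a union bound over $d$ blocks needs the per-block deviation event to fail with probability $o(1/d)$, which forces block length $\gtrsim(\log d)/\eps^2$ — matching the theorem's hypothesis $n\gtrsim (d\log d)/\eps^2$. The paper resolves this by defining the block length as $2m$ with $m\sim(\log d)/\eps^2$ from the start; \Cref{lem:lb_ber} still only yields a per-block query lower bound of $\Omega(1/\eps^2)$ (not $\Omega(m)$), and that weaker per-block bound is what propagates through \Cref{lem:distinguishing_dist}. Also, the constant $2^{p-1}$ you carry over from the overview is a cosmetic slip (since $\norm{u-v}_p^p=2$, not $2^p$); the rigorous version of the estimate in the body of the paper does not have this factor. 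Neither issue affects the validity of the approach.
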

We remark that the lower bound holds for all $p\geq 1$, and is tight up to logarithmic factors for $p\in [1,2]$.
To prove \Cref{thm:p<=2_deterministic}, we reduce \Cref{prob:lb_p<=2} below to our problem.

\begin{problem}\label{prob:lb_p<=2}
	Suppose that $0<\eps<1$, $d$ is a positive integer, $m \sim (\log d)/\eps^2$ and $n=2md$.
	Let 
	\begin{align*}
		u
		& =
		\begin{bmatrix}
			3\\
			2
		\end{bmatrix}
		\quad \text{and} \quad 
		v
		=\begin{bmatrix}
			2\\
			3
		\end{bmatrix}.
	\end{align*}
	Let $D_0$ be the distribution on the $(2m)$-dimensional vector $b^\prime$ such that, for $i=1,\dots,m$,
	\begin{align*}
		\begin{bmatrix}
			b^\prime_{2i-1}\\
			b^\prime_{2i}
		\end{bmatrix}
		& =
		\begin{cases}
			u & \text{with probability $\frac{1}{2}+\eps$} \\
			v & \text{with probability $\frac{1}{2}-\eps$}
		\end{cases}
	\end{align*}
	and $D_1$ be the distribution on the $(2m)$-dimensional vector $b^\prime$ such that, for $i=1,\dots,m$,
	\begin{align*}
		\begin{bmatrix}
			b^\prime_{2i-1}\\
			b^\prime_{2i}
		\end{bmatrix}
		& =
		\begin{cases}
			u & \text{with probability $\frac{1}{2}-\eps$} \\
			v & \text{with probability $\frac{1}{2}+\eps$.}
		\end{cases}
	\end{align*}
	Let $b$ be the $n$-dimensional random vector formed by  concatenating $d$ i.i.d.\ random vectors $b^{(1)},\dots,b^{(d)}$, where each $b^{(i)}$ is drawn from $D_0$ with probability $1/2$ and from $D_1$ with probability $1/2$.
	
	Given a query access to the entries of $b$, we would like to, with probability at least $2/3$, correctly identify whether $b^{(i)}$ is drawn from $D_0$ or $D_1$ for at least $2d/3$ indices $i$.
\end{problem}

By \Cref{lem:lb_ber} and \Cref{lem:distinguishing_dist}, any deterministic algorithm that solves this problem requires $\Omega\big(d/\eps^2\big)$ queries to the entries of $b$.

Now, we construct the reduction.
Let $f$ be the function
\begin{align*}
	f(x)
	& =
	\begin{cases}
		2 & \text{if $x\leq -6$} \\
		-x-4 & \text{if $-6 \leq x \leq -4 $} \\
		0 & \text{if $-4 \leq x \leq 0$} \\
		\frac{1}{2}x & \text{if $0\leq x$} \\
	\end{cases}
\end{align*}
Let $a$ be the $(2m)$-dimensional vector such that
\begin{align*}
	\begin{bmatrix}
		a_{2i-1}\\
		a_{2i}
	\end{bmatrix}
	& =
	\begin{bmatrix}
		1 \\
		-1
	\end{bmatrix}
	\quad \text{for $i=1,\dots,m$}
\end{align*}
and $A$ be the $n\times d$ block-diagonal matrix whose diagonal blocks are the same $a\in \R^{2m\times 1}$. 

Given a deterministic algorithm $\mathcal{A}$ that takes $f,A,\eps$ and a query access to the entries of $b$ as inputs and returns $\hat{x}\in \mathbb{R}^d$ satisfying \Cref{eq:main_obj}, we claim that $\hat{x}$ can be used to solve \Cref{prob:lb_p<=2}.
This claim is proved in \Cref{lem:p<=2_main_claim}.

\begin{lemma}\label{lem:p<=2_main_claim}
	Let $f$, $A$, $b$ be as specified above. There exists $K = K(p)$, a constant only depending on $p$ such that given an $\hat x\in \R^d$ satisfying
	\[
	\normp{f(A\hat{x}) - b}^p
	\leq
	(1+\frac{\eps}{K})\opt + \frac{\eps}{K}\normp{Ax^*}^p,
	\]
	we can, with probability at least $99/100$ (over the randomness of $b$), identify whether $b^{(i)}$ is drawn from $D_0$ or $D_1$ for at least $2d/3$ indices $i$.
\end{lemma}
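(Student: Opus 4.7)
Since $A$ is block-diagonal with identical diagonal blocks $a\in\R^{2m}$, the objective decouples. Let $k_i$ denote the number of $u$-pairs in the block $b^{(i)}$. Then
\[
\normp{f(A\hat{x})-b}^p = \sum_{i=1}^d \phi_i(\hat{x}_i),\qquad \phi_i(t) := k_i\normp{f((t,-t))-u}^p + (m-k_i)\normp{f((t,-t))-v}^p,
\]
and also $\normp{Ax}^p = 2m\sum_i |x_i|^p$. The plan is to (i) identify $\opt$ and $\normp{Ax^*}^p$ via a one-variable analysis of $\phi_i$, (ii) use Chernoff/Hoeffding to control the $k_i$'s uniformly, and (iii) show that the rule ``declare $b^{(i)}\sim D_0$ iff $\hat{x}_i>0$'' (or its flip) mislabels fewer than $d/3$ blocks whenever $\hat{x}$ meets the hypothesized error bound.

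The per-block backbone is the clean inequality $\normp{f((t,-t))-v}^p\geq 2$ for every $t\geq 0$ with equality iff $t=6$, and symmetrically $\normp{f((t,-t))-u}^p\geq 2$ for $t\leq 0$ with equality iff $t=-6$. This is a short case analysis on the piecewise linear formula for $f((t,-t))$: on $[0,4]$ the second coordinate of $f((t,-t))$ is $0$ so the quantity equals $(2-t/2)^p+3^p\geq 3^p>2$; on $[6,\infty)$ it is $(t/2-2)^p+1\geq 2$; and on $[4,6]$ one checks that $(t/2-2)^p+(7-t)^p$ is strictly decreasing (the critical point of its derivative lies beyond $t=6$), so the minimum on $[4,6]$ equals $2$ and is attained at $t=6$. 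Consequently $\phi_i(t)\geq 2(m-k_i)$ for $t\geq 0$ and $\phi_i(t)\geq 2k_i$ for $t\leq 0$, each achieved only at $t=\pm 6$. Hence $\opt=\sum_i 2\min(k_i,m-k_i)$, and the minimum-$\normp{A\cdot}$ minimizer has coordinates in $\{\pm 6\}$, giving $\normp{Ax^*}^p = 2\cdot 6^p\cdot md$.

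Let $\mathcal{E}$ be the event $\{|k_i-\E k_i|\leq \eps m/4 \text{ for every } i\in [d]\}$. By Hoeffding with $m\gtrsim (\log d)/\eps^2$ (large enough constant) and a union bound, $\Pr(\mathcal{E})\geq 99/100$. On $\mathcal{E}$ we have $\min(k_i,m-k_i)\leq m(\tfrac12-\tfrac34\eps)$ and $\max(k_i,m-k_i)\geq m(\tfrac12+\tfrac34\eps)$, so $\opt\leq dm(1-\tfrac32\eps)$. Substituting into the hypothesized guarantee together with $\normp{Ax^*}^p = 2\cdot 6^p md$ gives
\[
\normp{f(A\hat{x})-b}^p \leq dm\Bigl(1-\tfrac{3}{2}\eps\Bigr) + \tfrac{\eps dm}{K}(1+2\cdot 6^p) \leq dm\Bigl(1-\tfrac{5}{4}\eps\Bigr),
\]
provided $K=K(p)$ is chosen with $K\geq 4(1+2\cdot 6^p)$.

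On the other hand, let $B\subseteq[d]$ be the set of indices where the classification rule errs (any $\hat{x}_i=0$ is counted in $B$, which is harmless because $\phi_i(0)=m(2^p+3^p)$ already exceeds everything in sight). The per-block bound gives $\phi_i(\hat{x}_i)\geq 2\max(k_i,m-k_i)\geq m(1+\tfrac32\eps)$ for $i\in B$ and $\phi_i(\hat{x}_i)\geq 2\min(k_i,m-k_i)\geq m(1-\tfrac52\eps)$ for $i\notin B$, so $\normp{f(A\hat{x})-b}^p\geq dm(1-\tfrac52\eps)+4\eps m|B|$. Combining with the upper bound forces $4\eps m|B|\leq \tfrac{5}{4}\eps dm$, i.e.\ $|B|\leq \tfrac{5d}{16}<\tfrac{d}{3}$, so the rule correctly classifies at least $2d/3$ blocks on $\mathcal{E}$. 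The essentially only non-routine step is the per-block inequality $\normp{f((t,-t))-v}^p\geq 2$ for $t\geq 0$; the rest is standard concentration and arithmetic bookkeeping.
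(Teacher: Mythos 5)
Your proof is correct and follows the same high-level structure as the paper's: decompose into per-block objectives via the block-diagonal $A$, characterize each block's optimum as $2\min(k_i,m-k_i)$ attained at $\pm 6$, control the $k_i$'s via a Chernoff/Hoeffding union bound, and count misclassified blocks against the hypothesized $(1+\eps/K)$-guarantee. The one place you diverge is in establishing the per-block inequality $\norm{f((t,-t))-v}_p^p\geq 2$ for $t\geq 0$ (with equality only at $t=6$): the paper packages this as \Cref{lem:p<=2_LB_aux} and argues geometrically (the smallest $\ell_p$-ball around $u$ touching $v$ does not meet the rest of the branch of the locus), whereas you verify it directly by a case analysis on the three pieces of the piecewise-linear curve $t\mapsto (f(t),f(-t))$, checking in particular that $(t/2-2)^p+(7-t)^p$ is decreasing on $[4,6]$. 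This is a cosmetic difference — your inline algebraic check is arguably more self-contained, the paper's geometric picture is more visual — and the rest of the bookkeeping (constants $\eps m/4$ rather than a free $\beta$ chosen at the end, and $K\geq 4(1+2\cdot 6^p)$) matches the paper's computation up to the choice of constants.
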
 
We need the next lemma, whose proof is postponed to \Cref{sec:LB_aux}, to prove \Cref{lem:p<=2_main_claim}.
\begin{lemma}\label{lem:p<=2_LB_aux}
	Let $b'$ be a $(2m)$-dimensional vector in which 
	\[
	\begin{bmatrix}
		b^\prime_{2i-1}\\
		b^\prime_{2i}
	\end{bmatrix}
	=
	u
	\text{ or } 
	v,
	\quad i = 1,\dots, m.
	\]
	Then 
	\begin{enumerate}[label=(\alph*)]
		\item it holds for $x \leq 0$ that $\norm*{f(a x)-b'}_p^p\geq \norm*{f(-6a)-b'}_p^p$, and
		\item it holds for $x \geq 0$ that $\norm*{f(a x)-b'}_p^p\geq \norm*{f(6a)-b'}_p^p$.
	\end{enumerate}
\end{lemma}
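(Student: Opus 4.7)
My plan is to reduce the claimed inequality to a ``per-pair'' one-variable statement, and then verify that statement by splitting on the piecewise definition of $f$. Since $A$ is block-diagonal with each block equal to $a$, and $b'$ is naturally grouped into $m$ pairs $b'_{(i)} := (b'_{2i-1}, b'_{2i})^\top \in \{u, v\}$, we have
\[
\normp{f(ax) - b'}^p = \sum_{i=1}^m g(x, b'_{(i)}), \qquad g(x, w) := |f(x) - w_1|^p + |f(-x) - w_2|^p.
\]
Hence (a) will follow once I show that for each $w \in \{u, v\}$ and every $x \leq 0$, $g(x, w) \geq g(-6, w)$; part (b) is the analogous statement for $x \geq 0$ compared against $x = 6$.

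Since $f(-6) = 2$ and $f(6) = 3$, the point $(f(-6), f(6))$ is exactly $v$, giving $g(-6, v) = 0$ and $g(-6, u) = 1^p + 1^p = 2$. The case $w = v$ is therefore trivial. For $w = u$, I will use the three natural subranges coming from the piecewise definition of $f$. When $-4 \leq x \leq 0$, we have $f(x) = 0$, so the first coordinate alone contributes $3^p \geq 2$. When $x \leq -6$, $f(x) = 2$ and $f(-x) = -x/2 \geq 3$, so both coordinates contribute at least $1$ in absolute value. The middle subrange $-6 \leq x \leq -4$ is where the real work is: substituting $t = -x - 4 \in [0, 2]$, the quantity $g(x, u)$ becomes $(3 - t)^p + (t/2)^p$, and a short monotonicity check (using $3 - t \geq 1 \geq t/2$ on this interval to sign the derivative) shows this is nonincreasing in $t$ and hence minimized at $t = 2$ (that is, $x = -6$), yielding the value $1 + 1 = 2$.

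This proves (a). For (b), observe that the setup is obtained from that of part (a) by simultaneously sending $x \mapsto -x$ and swapping the two coordinates inside each pair, which exchanges $u$ and $v$. So part (b) follows immediately from part (a) by this symmetry. The only nonroutine ingredient in the whole argument is the monotonicity calculation in the middle subrange, but this is a one-variable calculus exercise of the form ``the sum of a decreasing and an increasing power function on a short interval is monotone'', and I do not expect any serious obstacle.
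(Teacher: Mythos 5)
Your proof is correct. Your per-pair decomposition and the symmetry reduction of (b) to (a) (via $x \mapsto -x$ combined with swapping the two coordinates of each pair, which exchanges $u$ and $v$) match the paper's strategy: the paper writes $\frac{1}{m}\normp{f(ax)-b'}^p = \theta\, g(x)^p + (1-\theta)\, h(x)^p$ where $g$ and $h$ are the $\ell_p$-distances from the pair $(f(x),f(-x))$ to $u$ and $v$ respectively, and also invokes the $y=x$ symmetry for part (b). The difference is in how the one-variable minimization is discharged. The paper argues geometrically: it considers the locus $\gamma^- = \{(f(x),f(-x)) : x \leq 0\}$, notes that $(2,3)=v$ lies on it, and asserts (with reference to a figure) that the smallest $\ell_p$-ball centered at $u=(3,2)$ passing through $(2,3)$ does not otherwise intersect $\gamma^-$ --- a claim the paper says is ``not difficult to verify'' without carrying out the verification. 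You instead do the verification explicitly by splitting on the three pieces of $f$ relevant to $x \leq 0$, computing $g(x,u)$ to be at least $2$ on the outer pieces and reducing the middle range to checking that $(3-t)^p + (t/2)^p$ is nonincreasing on $[0,2]$ (which your bound $3-t \geq 1 \geq t/2$ and $p\geq 1$ indeed gives). Your route is more elementary and fully self-contained, at the cost of being less visually intuitive than the paper's locus picture; both establish the claim for all $p\geq 1$ as required.
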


Now we are ready to prove \Cref{lem:p<=2_main_claim}. 
\begin{proof}[Proof of \Cref{lem:p<=2_main_claim}]
	
	To prove the statement, we first give a bound for $\opt$.
	We have
	\begin{align*}
		\opt
		& =
		\min_{x\in \mathbb{R}^d} \normp{f(Ax) - b}^p
		=
		\sum_{i=1}^d \min_{x_i\in \mathbb{R}} \normp{f(ax_i) - b^{(i)}}^p \quad \text{by the structure of $A$}\numberthis\label{eq:lower_p<=2_opt_1}
	\end{align*}
	and hence we can look at each term $\min_{x_i\in \mathbb{R}}\normp{f(ax_i) - b}^p$ individually.
	By \Cref{lem:p<=2_LB_aux}, we have 
	\begin{align*}
		\min_{x_i\in \mathbb{R}} \normp{f(ax_i) - b^{(i)}}^p
		& =
		\min\{\norm{f(-6a)-b^{(i)}}_p^p,\norm{f(6a)-b^{(i)}}_p^p\}. \numberthis\label{eq:lower_p<=2_opt_2}
	\end{align*}
	
	For $i=1,\dots,d$, let $k_i$ be the number of occurrences of $u = \begin{bsmallmatrix} 3\\ 2\end{bsmallmatrix}$ in $b^{(i)}$. 
	Recall that $m \sim (\log d)/\eps^2$. By choosing the hidden constant to be large enough, we have by a Chernoff bound that for every $i=1,\dots,d$, with probability at least $1-\frac{1}{100d}$,
	\begin{align*}
		k_i
		& \in 
		\begin{cases}
			[\frac{m}{2}+(1-\beta)\eps m,\frac{m}{2}+(1+\beta)\eps m] & \text{if $b^{(i)}$ is drawn from $D_0$} \\
			[\frac{m}{2}-(1+\beta)\eps m,\frac{m}{2}-(1-\beta)\eps m] & \text{if $b^{(i)}$ is drawn from $D_1$}
		\end{cases}, \numberthis\label{eq:lower_p<=2_k}
	\end{align*} 
	where $\beta > 0$ is a constant to be determined. 
	Taking a union bound, with probability at least $99/100$, every $k_i$ satisfies this condition. We condition on this event below.
	
	Note that
	\begin{align*}
		\normp{f(6a)-b^{(i)}}^p
		& =
		2(m-k_i) 
		\quad \text{and} \quad 
		\normp{f(-6a) - b^{(i)}}^p
		=
		2k_i
	\end{align*}
	By \Cref{eq:lower_p<=2_k}, if $b^{(i)}$ is drawn from $D_0$, we have
	\begin{align*}
		m - 2(1+\beta)\eps m \leq \normp{f(6a)-b^{(i)}}^p
		& \leq
		m - 2(1-\beta)\eps m \\
		\normp{f(-6a)-b^{(i)}}^p
		& \geq 
		m + 2(1-\beta)\eps m.
	\end{align*}
	Similarly, if $b^{(i)}$ is drawn from $D_1$,  we have
	\begin{align*}
		\normp{f(6a)-b^{(i)}}^p
		&\geq
		m + 2(1-\beta)\eps m \\
		m - 2(1+\beta)\eps m \leq \normp{f(-6a)-b^{(i)}}^p
		&\leq 
		m - 2(1-\beta)\eps m.
	\end{align*}
	By plugging them into \Cref{eq:lower_p<=2_opt_2} and \Cref{eq:lower_p<=2_opt_1}, we have
	\begin{align*}
		\opt \leq d \left(m - 2(1-\beta)\eps m\right).
	\end{align*}
	
	Now, suppose that a solution $\hat{x}$ satisfies 
	\begin{align*}
		\normp{f(A\hat{x}) - b}^p
		& \leq
		(1+\frac{\eps}{K})\opt + \frac{\eps}{K}\normp*{Ax^*}^p \\
		& \leq
		\left(1+\frac{\eps}{K}\right)d \left(m - 2(1-\beta)\eps m\right) + \frac{\eps}{K}\cdot 2md\cdot 6^p \\
		& \leq
		md - \left(2(1 - \beta) - \frac{C_p}{K}\right)\eps md \\
		& \leq
		md - (2 - 3\beta)\eps md,
	\end{align*}
	provided that $K \geq C_p/\beta$. Here $C_p$ is a constant that depends only on $p$.
	
	We declare $b^{(i)}$ is drawn from $D_0$ if $\hat x_i > 0$ and from $D_1$ otherwise. Suppose that our declaration is wrong on $\ell$ indices, then by Lemma~\ref{lem:p<=2_LB_aux}, 
	\begin{align*}
		\normp{f(A\hat{x}) - b}^p
		&\geq
		\ell\cdot (m + 2(1-\beta)\eps m) + (d-\ell) \cdot (m - 2(1+\beta)\eps m) \\
		&=
		md - 2(1+\beta)\eps md + 4\eps \ell m.
	\end{align*}
	Therefore,
	\[
	md - 2(1+\beta)\eps md + 4\eps \ell m \leq md - (2 - 3\beta)\eps md
	\]
	which implies that
	\[
	\ell \leq \frac{5}{4}\beta d.
	\]
	We can conclude that we have used an approximate solution $\hat{x}$ to deduce the distribution of $b^{(i)}$ for at least $(1-5\beta/4)d$ indices of $i=1,\dots,d$. Choosing $\beta=4/15$ and $K=4C_p$ completes the proof of \Cref{lem:p<=2_main_claim}.
\end{proof}

To finish the proof of \Cref{thm:p<=2_deterministic}, by \Cref{lem:p<=2_main_claim}, with probability $1-1/100-1/5 > 2/3$, we can correctly identify whether $b^{(i)}$ is drawn from $D_0$ or $D_1$ for at least $2d/3$ indices $i$, i.e. we solve \Cref{prob:lb_p<=2}.
Hence, we conclude that $\mathcal{A}$ must make $\Omega\big(d/\eps^2\big)$ queries to the entries of $b$.

\subsection{Case of $p\geq 2$}

By Yao's minimax theorem, it suffices to show the following theorem. 
\begin{theorem}\label{thm:p>2_deterministic}
	Suppose that $p\geq 1$, $\eps > 0$ is sufficiently small and $n\gtrsim_p d/\eps^p$. 
	There exist a deterministic function $f\in \lip_1$, a deterministic matrix $A\in\R^{n\times d}$ and a distribution over $b\in\R^n$ such that the following holds: every deterministic algorithm that outputs $\hat{x}\in\R^d$ which with probability at least $2/3$ over the randomness of $b$ satisfies \Cref{eq:main_obj} must make $\Omega\big(d/\eps^p\big)$ queries to the entries of $b$.
\end{theorem}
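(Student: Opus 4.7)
Following the two-stage strategy in the proof overview, I would first build a hard instance for $d = 1$ via \Cref{lem:lb_idx_loc} and then extend to general $d$ via a block-diagonal construction and \Cref{lem:distinguishing_dist}. By Yao's minimax theorem, it suffices to exhibit a deterministic $f \in \lip_1$, a deterministic $A$, and a distribution on $b$ against which every deterministic algorithm making $o(d/\eps^p)$ queries fails to output $\hat{x}$ satisfying \Cref{eq:main_obj} with probability at least $2/3$.

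Fix a parameter $\eta \sim \eps$ (hidden constant depending on $p$) and set $m \sim 1/\eta^p$. For each block, let $b' \in \R^{2m}$ be the all-ones vector with a single spike $b'_{i^\ast} = 1 + 1/\eta$ at a uniformly random position $i^\ast \in [2m]$; let $D_0$ (resp.\ $D_1$) be the conditional distribution given $i^\ast \leq m$ (resp.\ $i^\ast > m$). The vector $b \in \R^{2md}$ is the concatenation of $d$ i.i.d.\ blocks $b^{(1)}, \ldots, b^{(d)}$, each equally likely drawn from $D_0$ or $D_1$. Take $f(x) = \max\{0, \min\{x, 1\}\}$, define $a \in \R^{2m}$ by $a_i = 1$ for $i \leq m$ and $a_i = -1$ otherwise, and let $A \in \R^{2md\times d}$ be block-diagonal with all $d$ diagonal blocks equal to $a$. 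The objective decomposes as $\normp{f(Ax) - b}^p = \sum_i \normp{f(a x_i) - b^{(i)}}^p$; a case analysis on the sign of $x_i$ and the products $a_i x_i$ shows that for $b^{(i)} \sim D_0$ the block cost is minimized at any $x_i \geq 1$ with value $2m$ (symmetrically for $D_1$ at $x_i \leq -1$), so $\opt \leq 2md$, and the minimum-norm optimizer has $\abs{x_i^\ast} = 1$, giving $\normp{A x^\ast}^p = 2md$.

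The central step is a per-block margin: for $b^{(i)} \sim D_0$ and any $x_i \leq 0$,
\[
\normp{f(a x_i) - b^{(i)}}^p \;\geq\; (m - 1) + (1 + 1/\eta)^p \;\geq\; 2m\bigl(1 + \tfrac{p}{4}\eta\bigr),
\]
using $(1 + 1/\eta)^p = \eta^{-p}(1+\eta)^p \geq m(1 + p\eta)$ by Bernoulli's inequality; the symmetric statement holds for $b^{(i)} \sim D_1$ and $x_i \geq 0$. Any $\hat{x}$ satisfying \Cref{eq:main_obj} obeys $\normp{f(A\hat{x}) - b}^p \leq 2md(1 + 2\eps)$, and combining this with the per-block lower bound yields, for the number $\ell$ of blocks on which $\operatorname{sign}(\hat{x}_i)$ disagrees with the half containing $i^\ast$,
\[
2md + \ell \cdot 2m \cdot \tfrac{p}{4}\eta \;\leq\; 2md(1 + 2\eps),
\]
i.e.\ $\ell \leq 8 d \eps / (p\eta)$; choosing $\eta = 48\eps/p$ forces $\ell \leq d/6$, so $\hat{x}$ correctly identifies the spike half in at least $2d/3$ blocks.

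The reduction is then complete: \Cref{lem:lb_idx_loc} gives $\Omega(m) = \Omega(1/\eps^p)$ queries per block (for constant $p$), and \Cref{lem:distinguishing_dist} lifts this to $\Omega(dm) = \Omega(d/\eps^p)$ total queries for correctly answering on $2d/3$ blocks with probability $2/3$. The main obstacle I anticipate is carefully verifying the per-block margin uniformly over all $x_i \in \R$, in particular the regime $x_i \in [-1,0]$ on a $D_0$ block where the second-half contribution is nontrivial (but only helpful), and ensuring the $\Theta(\eta)$ margin is not absorbed by the multiplicative $(1+\eps)$ factor on $\opt$ plus the additive $\eps\normp{Ax^\ast}^p$ slack; the bound $(1+1/\eta)^p \geq m(1+p\eta)$ is precisely what makes the construction work, with the resulting $\Omega(d/\eps^p)$ bound becoming tight (matching our upper bound) only for $p \geq 2$.
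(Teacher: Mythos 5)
Your proposal is correct and takes essentially the same route as the paper: identical planted-spike hard instance and block product of Problem~\ref{prob:lb_p>2}, identical clamped-ReLU $f$ and block-diagonal $A$, the reduction via \Cref{lem:lb_idx_loc} and \Cref{lem:distinguishing_dist}, and the same per-block margin argument keyed to $\operatorname{sign}(\hat x_i)$. The only cosmetic difference is that you absorb the $p$-dependent constant into a rescaled spike magnitude $1/\eta$ with $\eta\sim\eps/p$, whereas the paper (\Cref{lem:p>2_main_claim}) keeps the spike at $1/\eps$ and instead strengthens the hypothesized guarantee to $\eps/K$; these are interchangeable, and your direct lower bound on the bad-sign first-half cost avoids invoking the full content of \Cref{lem:p>2_LB_aux} for that step.
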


We remark that the lower bound holds for all $p\geq 1$.
To prove \Cref{thm:p>2_deterministic}, we reduce \Cref{prob:lb_p>2} below to our problem.

\begin{problem}\label{prob:lb_p>2}
	Suppose that $0<\eps<1$, $d$ is a positive integer, $m= 1/\eps^p$ and $n=2md$. 
	Let $v$ be the $2m$-dimensional vector whose entries are all $1$, i.e. $v = [1,\dots,1]^\top$, $D_0$ be the uniform distribution on $\{v + (1/\eps)\cdot e_1, \dots, v+(1/\eps)\cdot e_{m}\}$ and $D_1$ be the uniform distribution on $\{v + (1/\eps)\cdot e_{m+1}, \dots, v+(1/\eps)\cdot e_{2m}\}$ where $e_1,\dots,e_{2m}$ are the canonical basis vectors in $\R^{2m}$. 
	Let $b$ be the $n$-dimensional random vector formed by concatenating $d$ i.i.d. random vectors $b^{(1)},\dots,b^{(d)}$, where each $b^{(i)}$ is drawn from $D_0$ with probability $1/2$ and from $D_1$ with probability $1/2$, i.e.\ each $b^{(i)}$ is an all one vector with a value $1/\eps$ planted at a uniformly random entry.
	
	Given a query access to the entries of $b$, we would like to, with probability at least $2/3$, correctly identify whether $b^{(i)}$ is drawn from $D_0$ or $D_1$ for at least $2d/3$ indices $i$.
\end{problem}

By \Cref{lem:lb_idx_loc} and \Cref{lem:distinguishing_dist}, any deterministic algorithm that solves this problem requires $\Omega(d/\eps^p)$ queries to the entries of $b$.

Now, we construct the reduction.
Let $f$ be the function
\begin{align*}
	f(x)
	& =
	\begin{cases}
		0 & \text{if $x\leq 0$} \\
		x & \text{if $0\leq  x\leq 1$} \\
		1 & \text{if $1\leq x$.}
	\end{cases}
\end{align*}
Let $a$ be the $2m$-dimensional vector such that 
\begin{align*}
	a_i
	& =
	\begin{cases}
		1 & \text{for $i=1,\dots,m$} \\
		-1 & \text{for $i=m+1,\dots,2m$}
	\end{cases}
\end{align*}
and $A$ be the $n\times d$ block-diagonal matrix whose diagonal blocks are the same $a\in \R^{2m\times 1}$. 

Given a deterministic algorithm $\mathcal{A}$ that takes $f,A,\eps$ and a query access to the entries of $b$ as inputs and returns $\hat{x}\in \mathbb{R}^d$ satisfying \Cref{eq:main_obj}, we claim that $\hat{x}$ can be used to solve \Cref{prob:lb_p>2}.
This claim is proved in \Cref{lem:p>2_main_claim}.
\begin{lemma}\label{lem:p>2_main_claim}
	
	Let $f$, $A$, $b$ be as specified above. There exists $K$, a constant, such that given an $\hat x\in \R^d$ satisfying
	\[
	\normp{f(A\hat{x}) - b}^p
	\leq
	(1+\frac{\eps}{K})\opt + \frac{\eps}{K}\normp{Ax^*}^p,
	\]
	we can identify whether $b^{(i)}$ is drawn from $D_0$ or $D_1$ for at least $2d/3$ indices $i$.
\end{lemma}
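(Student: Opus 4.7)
My plan is to mirror Lemma~\ref{lem:p<=2_main_claim} and reduce to a block-by-block analysis, exploiting the block-diagonal structure of $A$. By this structure, $\normp{f(Ax)-b}^p = \sum_{i=1}^d \normp{f(a x_i) - b^{(i)}}^p$ and thus $\opt = \sum_i \min_{x_i\in\R} \normp{f(a x_i) - b^{(i)}}^p$, so each block can be analyzed independently. Since $b^{(i)} = v + (1/\eps)e_{j^*}$ for a single planted index $j^*$, the per-block loss is a piecewise polynomial in $x_i$ determined by which sub-interval of the piecewise definition of $f$ contains $x_i$ and $-x_i$.

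The key ingredient is a per-block dichotomy: the loss equals $2m$ on the ``right sign'' half-line and is bounded below by $2m + \Omega(pm\eps)$ on the ``wrong sign'' half-line. Concretely, for $b^{(i)}\sim D_0$ (so $j^*\in[m]$), taking any $x_i\geq 1$ gives $f(x_i)=1$ and $f(-x_i)=0$, so the cost is exactly $(1/\eps)^p + m = 2m$ (using $m = 1/\eps^p$); a short monotonicity check on each sub-interval rules out any smaller value, and the minimizer of smallest $\ell_p$-norm among these is $x_i = 1$. Conversely, for any $x_i\leq 0$ we have $f(x_i)=0$, so the first $m$ coordinates contribute at least $(m-1)\cdot 1 + (1+1/\eps)^p$; by Bernoulli's inequality $(1+1/\eps)^p \geq m(1+p\eps)$, giving a wrong-sign cost of at least $2m - 1 + pm\eps \geq 2m + \tfrac{pm\eps}{2}$ whenever $pm\eps\geq 2$, which holds for $p\geq 2$ and sufficiently small $\eps$. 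A symmetric calculation handles $b^{(i)}\sim D_1$. Summing over blocks yields $\opt = 2md$ and, since $|x^*_i|=1$, also $\normp{Ax^*}^p = 2md$.

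Finally, I would plug these into the hypothesis to obtain $\normp{f(A\hat x)-b}^p \leq (1+\eps/K)\cdot 2md + (\eps/K)\cdot 2md = 2md + 4\eps md/K$. Classifying $b^{(i)}\sim D_0$ iff $\hat x_i > 0$, let $\ell$ be the number of misclassified blocks; the per-block dichotomy gives $\normp{f(A\hat x)-b}^p \geq (d-\ell)\cdot 2m + \ell\cdot(2m + pm\eps/2) = 2md + \tfrac{pm\eps}{2}\cdot\ell$. Combining the upper and lower bounds yields $\ell \leq 8d/(pK)$, so any constant $K = K(p) \geq 24/p$ forces $\ell \leq d/3$, i.e., at least $2d/3$ blocks are classified correctly. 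The only delicate step is establishing the $\Omega(pm\eps)$ wrong-side gap uniformly over all $x_i\leq 0$ (not just $x_i\leq -1$), but on the intermediate interval $x_i\in[-1,0]$ the remaining term $m(1+x_i)^p$ is nonnegative, so monotonicity reduces the bound to the single dominating term $(1+1/\eps)^p$ already handled by Bernoulli's inequality.
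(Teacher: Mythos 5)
Your proposal is correct and takes essentially the same approach as the paper's proof: both decompose the objective block-by-block via the block-diagonal structure, establish the per-block dichotomy (the paper's \Cref{lem:p>2_LB_aux}, which you reprove inline via monotonicity on each sub-interval and Bernoulli's inequality to get the $\Omega(m\eps)$ wrong-sign gap), bound $\opt$ and $\normp{Ax^*}^p$ by $2md$, and then count misclassified blocks $\ell$ against the hypothesis to force $\ell \leq d/3$. The only superficial difference is that you carry the factor $p$ in the gap and hence obtain $K \gtrsim 1/p$ rather than the paper's fixed $K=12$; this changes nothing of substance.
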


We need the next lemma, whose proof is postponed to \Cref{sec:LB_aux}.

\begin{lemma}\label{lem:p>2_LB_aux}
	
	Let $b'$ be a $2m$-dimensional vector in which all entries are $1$ except one of them is $1+\frac{1}{\eps}$.
	Then 
	\begin{enumerate}[label=(\alph*)]
		\item it holds for $x \leq 0$ that $\norm*{f(a x)-b'}_p^p\geq \norm*{f(-a)-b'}_p^p $, and
		\item it holds for $x \geq 0$ that $\norm*{f(a x)-b'}_p^p\geq \norm*{f(a)-b'}_p^p$.
	\end{enumerate}
	
\end{lemma}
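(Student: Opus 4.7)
The idea is to exploit the saturation of $f$ at $\pm 1$ so that $f(ax)$ becomes independent of $x$ outside $[-1,1]$, after which only monotonicity on $[-1,0]$ and $[0,1]$ matters. I will prove part (b) in detail; part (a) is entirely symmetric and can be obtained by replacing $x$ with $-x$ and $a$ with $-a$, together with the observation that $f(-a) = (0,\ldots,0,1,\ldots,1)^\top$ plays the role of $f(a) = (1,\ldots,1,0,\ldots,0)^\top$.

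First I would record the explicit form of $f(ax)$. For $x\geq 0$, the definition of $f$ gives $f(a_i x) = f(x)$ for $i\leq m$ and $f(a_i x) = f(-x) = 0$ for $i>m$; thus $f(ax) = (f(x),\ldots,f(x),0,\ldots,0)^\top$, and in particular $f(ax) = f(a)$ whenever $x\geq 1$. So it suffices to handle $x\in[0,1]$, where $f(x)=x$, i.e.\ to show that the map $y\mapsto \|(y,\ldots,y,0,\ldots,0)^\top - b'\|_p^p$ is minimized at $y=1$ on $[0,1]$.

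Next I would split into two sub-cases depending on whether the planted spike sits in the first or second half of $b'$. If the spike is at some $i^*\leq m$, the objective reduces to
\[
g(y) = (m-1)(1-y)^p + \left(1+\tfrac{1}{\eps}-y\right)^p + m,
\]
which is strictly decreasing on $[0,1]$ because $g'(y) = -p(m-1)(1-y)^{p-1} - p(1+\tfrac{1}{\eps}-y)^{p-1} \leq 0$. If instead the spike is at some $i^*>m$, the objective is $m(1-y)^p + (m-1) + (1+\tfrac{1}{\eps})^p$, again decreasing in $y$. In both cases the minimum on $[0,1]$ is attained at $y=1$, i.e.\ at $x=1$, proving part (b).

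For part (a), the same analysis applied to $x\in[-1,0]$ yields $f(ax) = (0,\ldots,0,f(-x),\ldots,f(-x))^\top$, which equals $f(-a)$ once $x\leq -1$; monotonicity of the analogous objective in $y=f(-x)\in[0,1]$ gives the minimum at $x=-1$. I do not anticipate any serious obstacle: the only point requiring a bit of care is to remember that $1+1/\eps-y\geq 0$ on $[0,1]$ so that the absolute value can be dropped when raising to the $p$-th power, but this is immediate for small $\eps$.
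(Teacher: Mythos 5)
Your proposal is correct and follows essentially the same route as the paper: restrict to $x\in[-1,1]$ using the saturation of $f$, split on whether the spike lies in the first or second half of $b'$, and observe monotonicity of the resulting scalar expression on each sub-interval. The only cosmetic difference is that you establish monotonicity via the derivative while the paper simply reads off the minimum, and you write out both spike locations explicitly while the paper does one and invokes symmetry.
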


Now we are ready to prove \Cref{lem:p>2_main_claim}.

\begin{proof}[Proof of \Cref{lem:p>2_main_claim}]
	
	To prove the statement, we first give a bound for $\opt$.
	We have
	\begin{align*}
		\opt
		& =
		\min_{x\in \mathbb{R}^d} \normp{f(Ax) - b}^p
		=
		\sum_{i=1}^d \min_{x_i\in \mathbb{R}} \normp{f(ax_i) - b^{(i)}}^p \numberthis\label{eq:lower_p>2_opt_1}
	\end{align*}
	and hence we can look at each term $\min_{x_i\in \mathbb{R}}\normp{f(ax_i) - b}^p$ individually.
	By \Cref{lem:p>2_LB_aux}, we have 
	\begin{align*}
		\min_{x_i\in \mathbb{R}} \normp{f(ax_i) - b^{(i)}}^p
		& =
		\min\{\norm{f(-a)-b^{(i)}}_p^p,\norm{f(a)-b^{(i)}}_p^p\}. \numberthis\label{eq:lower_p>2_opt_2}
	\end{align*}
	Recall that $m=\frac{1}{\eps^p}$.
	For $i=1,\dots,d$, if $b^{(i)}$ is drawn from $D_0$, we have
	\begin{align*}
		\normp{f(a) - b^{(i)}}^p = m+\frac{1}{\eps^p} = 2m
		\quad \text{and} \quad 
		\normp{f(-a)-b^{(i)}}^p = (1+\frac{1}{\eps})^p + m-1 \geq (2+\eps)m
	\end{align*}
	and, if $b^{(i)}$ is drawn from $D_1$, we have
	\begin{align*}
		\normp{f(-a) - b^{(i)}}^p = m+\frac{1}{\eps^p} = 2m
		\quad \text{and} \quad 
		\normp{f(a)-b^{(i)}}^p = (1+\frac{1}{\eps})^p + m-1 \geq (2+\eps)m.
	\end{align*}
	By plugging them into \Cref{eq:lower_p>2_opt_2} and \Cref{eq:lower_p>2_opt_1}, we have
	\begin{align*}
		\opt \leq 2dm.
	\end{align*}
	
	Now, suppose that a solution $\hat{x}$ satisfies 
	\begin{align*}
		\normp{f(A\hat{x}) - b}^p
		& \leq
		(1+\frac{\eps}{K})\opt + \frac{\eps}{K} \normp{Ax^*}^p \\
		& \leq
		(1+\frac{\eps}{K})\cdot (2dm) + \frac{\eps}{K} \cdot 2dm \cdot 1^p \\
		& \leq
		2dm + \frac{4\eps dm}{K}.
	\end{align*}

	We declare that $b^{(i)}$ is drawn from $D_0$ if $\hat{x}_i>0$ and from $D_1$ otherwise.
	Suppose that our declaration is wrong on $\ell$ indices, then by \Cref{lem:p>2_main_claim},
	\begin{align*}
		\normp{f(A\hat{x}) - b}^p
		& \leq 
		\ell (2+\eps)m  + (d-\ell) (2m)
		=
		2dm + \eps \ell m .
	\end{align*}
	Therefore, 
	\[
	2dm + \eps \ell m \leq 2dm + \frac{4\eps dm}{K} ,
	\]
	which implies, when $K=12$, that
	\[
	\ell \leq \frac{d}{3},
	\]
	completing the proof of \Cref{lem:p>2_main_claim}.   
\end{proof}

To finish the proof of \Cref{thm:p>2_deterministic}, by \Cref{lem:p>2_main_claim}, with probability $2/3$, we can correctly identify whether $b^{(i)}$ is drawn from $D_0$ or $D_1$ for at least $2d/3$ indices $i$, i.e. we solve \Cref{prob:lb_p>2}.
Hence, we conclude that $\mathcal{A}$ must make $\Omega\big(d/\eps^p\big)$ queries to the entries of $b$.

\section{Conclusion}

In this paper, we consider the active regression problem of the single-index model, which asks to solve $\min_x \normp{f(Ax)-b}$, with $f$ being a Lipschitz function, $A$ fully accessible and $b$ only accessible via entry queries. 
The goal is to minimize the number of queries to the entries of $b$ while achieving an accurate solution to the regression problem.
Previous work on single-index model has only achieved constant-factor approximations~\citep{gajjar2023active,gajjar2023improved,ICLR24,COLT24}.
In this paper, we achieve a $(1+\eps)$-approximation with $\tilde{O}(d^{\frac{p}{2}\vee 1}/\eps^{p\vee 2})$ queries and we show that this query complexity is tight for $1\leq p\leq 2$ up to logarithmic factors. Furthermore, we prove that the $1/\eps^p$ dependence is tight for $p > 2$ and we leave the full tightness of $d^{p/2}/\eps^p$ as an open problem for future work.

\bibliography{ref}
\bibliographystyle{plainnat}

\appendix

\section{Omitted Proofs in \Cref{sec:prelim}}\label{sec:prelim_proofs}
\begin{proof}[Proof of \Cref{lem:lb_idx_loc}]
	Let $Q$ be the set of indices the algorithm reads and $\mathcal{A}(Q,i^*)$ be the output of the algorithm.
	Note that, if $i^*\notin Q$, then $\mathcal{A}(Q,i^*)$ does not depend on $i^*$ and we write it $\mathcal{A}(Q)$.
	
	Now, let $\mathcal{E}$ be the event of $\mathcal{A}(Q,i^*)$ being the correct set and $\mathcal{I}$ be the event of $i^*$ being chosen among these $q$ queries.
	Then, we have
	\begin{align*}
		\Pr( \mathcal{E})
		& =
		\Pr(  \mathcal{E} \mid \mathcal{I}) \Pr(\mathcal{I})+\Pr(  \mathcal{E} \mid \overline{\mathcal{I}}) \Pr(\overline{\mathcal{I}}).
	\end{align*}
	Note that 
	\begin{align*}
		\Pr(  \mathcal{E} \mid \mathcal{I}) \leq 1, \quad 
		\Pr(\mathcal{I}) = \frac{q}{2m}
		\quad \text{and}\quad 
		\Pr(\overline{\mathcal{I}}) = 1-\frac{q}{2m}.
	\end{align*}
	where $q=\abs{Q}$.
	
	Now, we evaluate $\Pr(\mathcal{E}\mid \overline{\mathcal{I}})$.
	Let $q_1$ (resp.\xspace $q_2$) be the size of the set $Q\cap \{1,\dots,m\}$ (resp.\xspace $Q\cap \{m+1,\dots,2m\}$), so $q_1+q_2=q$.
	Given the event $\overline{\mathcal{I}}$, recall that we have $\mathcal{A}(Q,i^*) = \mathcal{A}(Q)$.
	If $\mathcal{A}(Q)=\{1,\dots,m\}$, the event of $\mathcal{E}\mid \overline{\mathcal{I}}$ is equivalent to the event that $i^*$ belongs to $\{1,\dots,m\}$ but is not queried, thus we have
	\begin{align*}
		\Pr(\mathcal{E}\mid \overline{\mathcal{I}})
		& =
		1-\frac{m - q_1}{2m-q}
		=
		\frac{m - q_2}{2m-q}.
	\end{align*}
	Similarly, if $\mathcal{A}(Q)=\{m+1,\dots,2m\}$, we have
	\begin{align*}
		\Pr(\mathcal{E}\mid \overline{\mathcal{I}})
		& =
		1-\frac{m - q_2}{2m-q}
		=
		\frac{m - q_1}{2m-q}.
	\end{align*}
	Hence, we have
	\begin{align*}
		\Pr(\mathcal{E}\mid \overline{\mathcal{I}})
		& \leq
		\max\{ \frac{m - q_2}{2m-q}, \frac{m - q_1}{2m-q}\}.
	\end{align*}
	
	Namely, we have
	\begin{align*}
		\Pr( \mathcal{E})
		& \leq
		\frac{q}{2m}+\max\{ \frac{m - q_2}{2m-q}, \frac{m - q_1}{2m-q}\} (1-\frac{q}{2m}) \\
		& =
		\frac{q}{2m}+\max\{ \frac{1}{2} - \frac{q_1}{2m}, \frac{1}{2} - \frac{q_2}{2m}\} \\
		& =
		\frac{1}{2} + \max \{\frac{q_1}{2m}, \frac{q_2}{2m}\}.
	\end{align*}
	If $q\leq \frac{m}{5}$, it implies $\max\{q_1,q_2\} \leq \frac{m}{5}$ and hence $\Pr(\mathcal{E}) \leq \frac{3}{5}$.
\end{proof}

\begin{proof}[Proof of \Cref{lem:distinguishing_dist}]
	Suppose that an algorithm makes fewer than $\frac{\beta}{10}dm$ queries in total.
	Then there exist $\frac{9}{10}d$ blocks, each of which makes fewer than $\beta m$ queries. 
	Therefore, each of these blocks will make an error in distinguishing the distributions with probability at least $2/5$. 
	By a Chernoff bound, with probability at least $1/3$, at least $d'=\frac{2}{5}\cdot\frac{9}{10}\cdot d - \Theta(\sqrt{d})$ instances make an error. 
	It means that $d' > d/3$ and we arrive at a contradiction against the assumption on the correctness of the algorithm.
\end{proof}

\section{Omitted Proofs in \Cref{sec:lower}}
\label{sec:LB_aux}

\begin{proof}[Proof of \Cref{lem:p<=2_LB_aux}]
	Suppose that $b'$ contains $k$ occurrences of $u = \begin{bsmallmatrix} 3 \\ 2 \end{bsmallmatrix}$ and $\theta = k/m$. Then
	\[
	\frac{1}{m}\norm*{f(ax)-b'}_p^p 
	=  \theta (g(x))^p + (1-\theta) (h(x))^p,
	\]
	where
	\[
	g(x) = \norm*{\begin{bmatrix} f(x) \\ f(-x) \end{bmatrix} - \begin{bmatrix} 3\\ 2\end{bmatrix}}_p \quad\text{and}\quad
	h(x) = \norm*{\begin{bmatrix} f(x) \\ f(-x) \end{bmatrix} - \begin{bmatrix} 2\\ 3\end{bmatrix}}_p. 
	\]
	It suffices to show that both $h(x)$ and $g(x)$ attain a local minimum at $x=-6$ when $x\leq 0$ and at $x=6$ when $x\geq 0$.
	
	Now, we view the $2$-dimensional vectors as points in $\mathbb{R}^2$.
	For any $x\in \mathbb{R}$, let $\zeta(x)$ be the point $(f(x), f(-x))$.
	Also, let $\gamma\subset \R^2$ be the locus of $\zeta(x) = (f(x),f(-x))$, i.e. $\gamma := \setdef{\zeta(x)}{x\in \mathbb{R}}$. 
	It has a positive branch  $\gamma^+ := \setdef{\zeta(x)}{x \geq 0}$ and a negative branch $\gamma^- := \setdef{\zeta(x)}{x \leq 0}$. 
	
	We first consider $x\leq 0$.
	Note that $\zeta(-6)=(2,3)$ is on $\gamma^-$ and $x = -6$ is the only value such that $\zeta(x) = (2,3)$.
	Hence, we immediately have that $h(x)=0$ attains a local minimum at $x=-6$.
	For $g(x)$, consider the smallest $\ell_p$-ball centered at $(3,2)$ that touches $(2,3)$ on its boundary.
	It is not difficult to verify that this $\ell_p$-ball does not intersect $\gamma^-$ at any other point. (\Cref{fig:locus} provides a geometric intuition.)
	
    \begin{figure}[tb]
		\centering
		\includegraphics[width=0.3\textwidth]{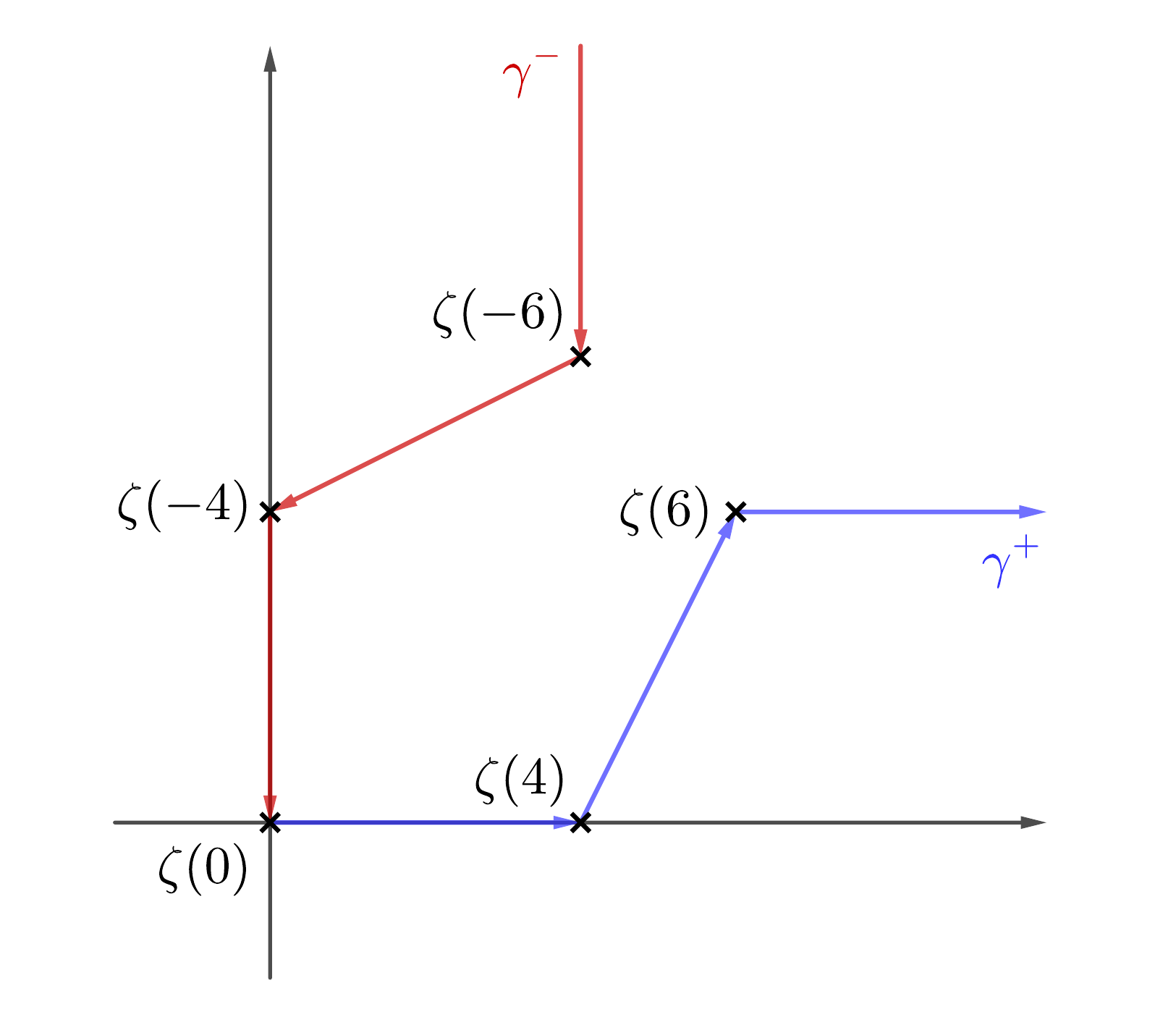}
		\includegraphics[width=0.3\textwidth]{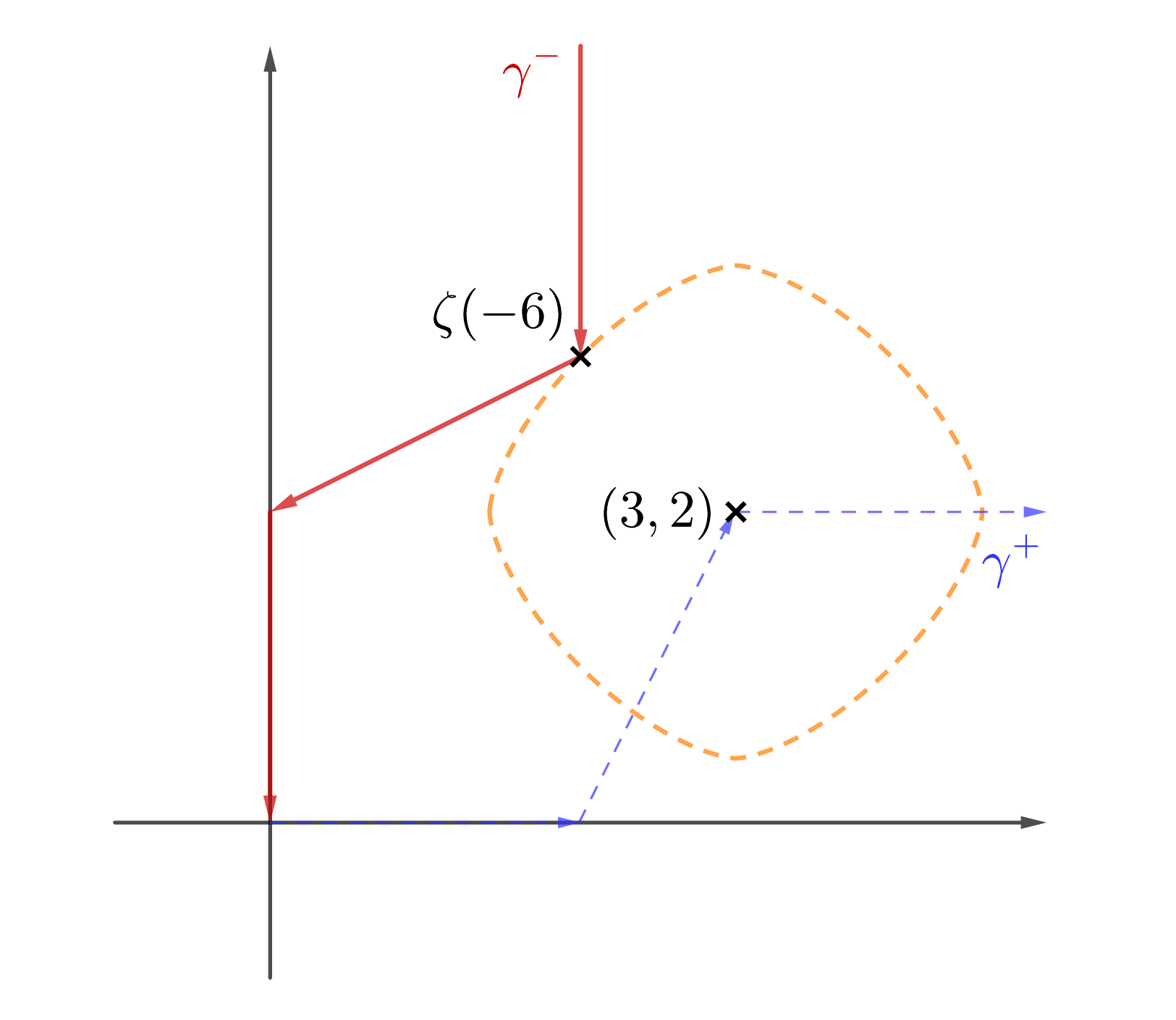}
		\includegraphics[width=0.3\textwidth]{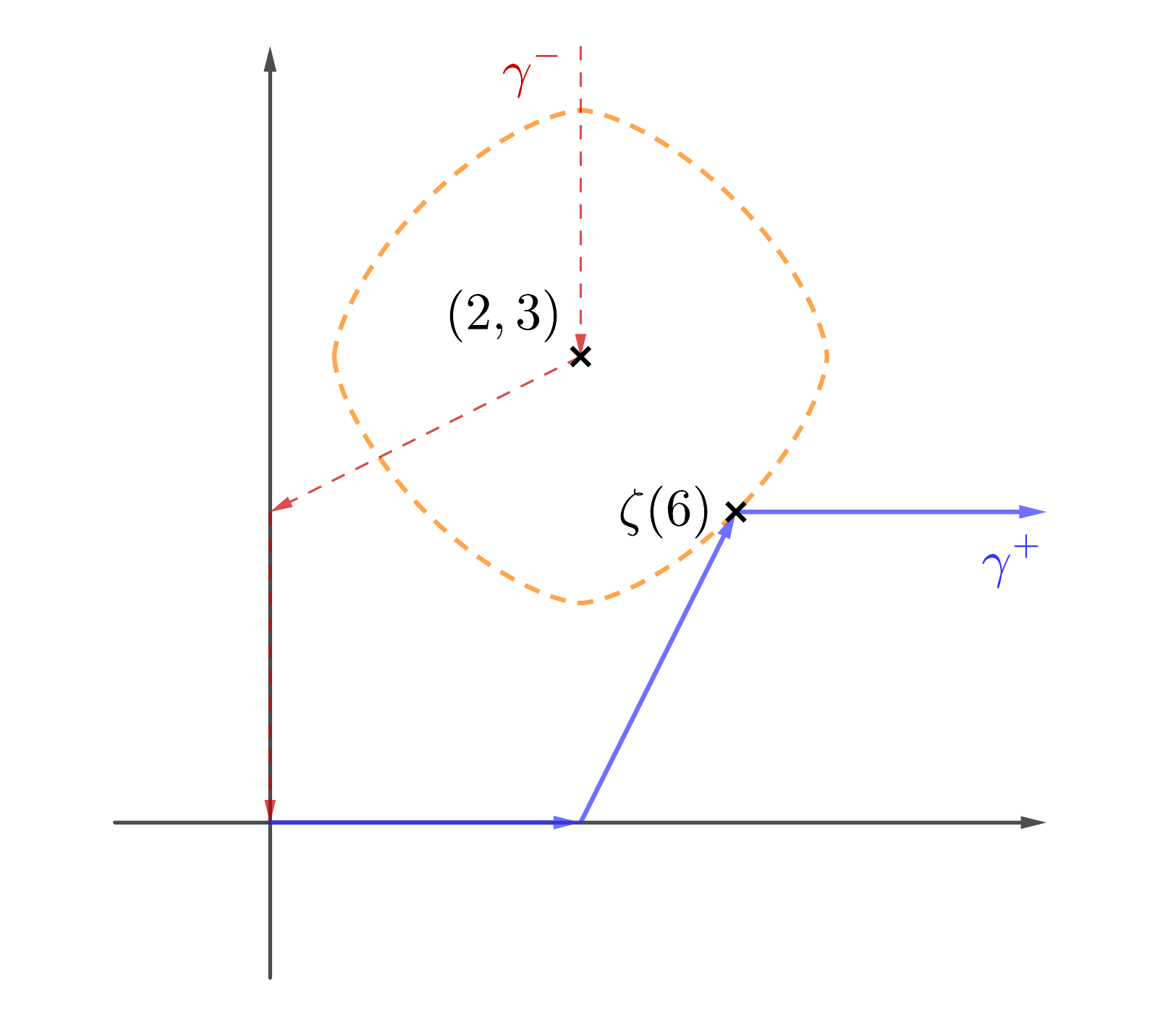}
		\caption{Illustration of the locus $\gamma$ (left), the minimizers when $x\leq 0$ (middle) and the minimizer when $x\geq 0$ (right)}
		\label{fig:locus}
	\end{figure}
 
	Since $\gamma^+$ and $\gamma^-$ are symmetric about $y=x$, we can show the symmetric result for $x\geq 0$.
\end{proof}

\begin{proof}[Proof of \Cref{lem:p>2_LB_aux}]
	
	We would like to show that the function $\normp{f(ax) - b'}^p$ attains a local minimum at $x=-1$ when $x\leq 0$ and at $x=1$ when $x\geq 0$.
	Note that, by the construction of $f$, $f(x) = f(1)=1$ for all $x\geq 1$ and $f(x) = f(-1) = 0$ for all $x\leq -1$.
	Therefore, we now restrict our domain to be $x\in [-1,1]$.
	
	Suppose that the index of the entry whose value is $1+\frac{1}{\eps}$ is in $\{1,\dots, m\}$.
	Then
	\begin{align*}
		\normp{f(ax) - b'}^p
		& =
		(1+\frac{1}{\eps} - f(x))^p + (m-1)(1-f(x))^p + m(1-f(-x))^p.
	\end{align*}
	Note that we drop the absolute value sign because $f(x)\leq 1$ on $\R$.
	When $-1\leq x\leq 0$, we have
	\begin{align*}
		\normp{f(ax) - b'}^p
		& =
		(1+\frac{1}{\eps})^p + (m-1)+ m(1+x)^p \\
		& \geq 
		(1+\frac{1}{\eps})^p + (m-1).
	\end{align*}
	where the equality holds if $x=-1$.
	When $0\leq x\leq 1$, we have
	\begin{align*}
		\normp{f(ax) - b'}^p
		& =
		(1+\frac{1}{\eps} - x)^p + (m-1)(1-x)^p + m \\
		& \geq
		\frac{1}{\eps^p} + m 
	\end{align*}
	where the equality holds if $x=1$.
	
	Similarly, we can prove the same result when the index of the entry whose value is $1+\frac{1}{\eps}$ is in $\{m+1,\dots, 2m\}$.
\end{proof}

\section{Lewis Weights of Row-Sampled Matrix}\label{sec:appendix_lewis}
In this section, we shall show the following theorem. 
\begin{theorem}
	Suppose that $A\in\R^{n\times d}$ has uniformly bounded Lewis weights $w_i(A) \lesssim d/n$. Let $S$ be an $n\times n$ diagonal matrix in which the diagonal elements are i.i.d.\ $\alpha^{-1/p}\Ber(\alpha)$ for some sampling rate $\alpha\in (0,1)$. When $\alpha n\gtrsim_p d\log d$ when $p < 4$ or $\alpha n\gtrsim_p d^2\log d$ when $p \geq 4$, it holds with probability at least $0.99$ that $w_i(SA)\lesssim d/m$, where $m$ is the number of nonzero rows of $S$.
\end{theorem}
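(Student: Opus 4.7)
The plan is to construct an explicit candidate Auerbach factorization of $SA$ starting from that of $A$ and to verify, via matrix Bernstein, that the defining characterization in Lemma~\ref{lem:lewis_weight_properties}(b) holds up to a constant factor. Let $U\in\R^{n\times d}$ be the matrix produced by Lemma~\ref{lem:lewis_weight_properties}(b) for $A$, so that $\colspace(U)=\colspace(A)$, $\|U_{i,\cdot}\|_2^p = w_i(A)$, and $W^{1/2-1/p}U$ has orthonormal columns with $W=\diag(w_1(A),\ldots,w_n(A))$. A Chernoff bound ensures $m\in[\alpha n/2,\, 2\alpha n]$ with high probability, so it suffices to prove $w_i(SA)\lesssim d/(\alpha n)$ on the nonzero rows.

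The natural candidate weights are $\widetilde w_i := S_{ii}^p w_i(A)$, equal to $\alpha^{-1}w_i(A)\lesssim d/(\alpha n)$ on the sampled rows and $0$ elsewhere, paired with the candidate factor $SU$, whose column space agrees with $\colspace(SA)$ and whose rows satisfy $\|(SU)_{i,\cdot}\|_2^p=\widetilde w_i$. The required orthonormality reduces to showing
\[
G := (SU)^\top \widetilde W^{\,1-2/p}(SU) = \sum_{i=1}^n S_{ii}^p\, w_i(A)^{1-2/p}\, U_{i,\cdot}^\top U_{i,\cdot}
\]
is close to $I_d$. Since $\E[S_{ii}^p]=1$ and $\sum_i w_i(A)^{1-2/p}U_{i,\cdot}^\top U_{i,\cdot}=I_d$ by the property of $U$, one has $\E G = I_d$. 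Each rank-one summand has operator norm at most $\alpha^{-1}w_i(A)\leq d/(\alpha n)$, and a direct calculation of $\sum_i \E[X_i^2]$ using $\|U_{i,\cdot}\|_2^2=w_i(A)^{2/p}$ yields a matrix variance of the same order. Matrix Bernstein therefore gives $\|G-I_d\|_{\mathrm{op}}\leq t$ with high probability whenever $\alpha n \gtrsim d\log d/t^{2}$.

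The main obstacle is upgrading this approximate identity to a bound on the \emph{true} Lewis weights $w_i(SA)$, since the pair $(\widetilde W, SU)$ is only an approximate Auerbach pair: rescaling $SU\mapsto SU\cdot G^{-1/2}$ restores orthonormality exactly, but the new row norms no longer equal $\widetilde w_i$. I would close the gap by a Lewis-type stability argument: writing $U_W := SU(SU^\top W^{1-2/p}SU)^{-1/2}$, the map sending $W$ to the diagonal matrix with entries $\|(U_W)_{i,\cdot}\|_2^p$ has the genuine Lewis weights of $SA$ as its fixed point, and its contraction rate is controlled by $\|G-I_d\|_{\mathrm{op}}$. For $p<4$, constant-factor closeness of $G$ to $I_d$ suffices, so the Bernstein bound with $\alpha n\gtrsim d\log d$ is enough. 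For $p\geq 4$ the Lipschitz factor of the iteration deteriorates because the exponent $1-2/p$ is at least $1/2$, and one needs the tighter closeness $\|G-I_d\|_{\mathrm{op}}\lesssim 1/\sqrt{d}$ to drive convergence; this stricter target, pushed through matrix Bernstein, yields the stated requirement $\alpha n\gtrsim d^{2}\log d$.
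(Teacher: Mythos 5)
Your proof is correct and follows essentially the same route as the paper's: you apply matrix Bernstein to the rescaled rows $w_i^{1-2/p}U_{i,\cdot}^\top U_{i,\cdot}$ (the paper works with $a_i' = M^{-1/2}a_i$, which is the same normalization since $A M^{-1/2}$ and $U$ have identical rows up to a common orthogonal rotation), obtaining $\lVert G - I_d\rVert \leq \eta$ from $\alpha n \gtrsim \eta^{-2} d\log d$, and then invoke the same $p<4$ vs.\ $p\geq 4$ distinction with $\eta$ a constant vs.\ $\eta\sim 1/\sqrt{d}$. The only difference in presentation is the final stability step, where the paper simply cites \citet[Lemmas~5.3 and~5.4]{CP15} for the fact that an approximate fixed point of the Lewis-weight iteration is close to the true Lewis weights, whereas you sketch the underlying contraction argument directly; that sketch would need to be filled in (or replaced by the citation) for a complete proof, but the mechanism you describe is exactly what those lemmas encapsulate.
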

This theorem is similar to~\citet[Lemma A.3]{CLS2022}, where the sampling rates are proportional to Lewis weights and no assumptions on the bounds of $w_i(A)$ were made. Our proof is also similar.
\begin{proof}
	Let $w_1,\dots,w_n$ denote the Lewis weights of $A$ and suppose that $S = \diag\{\sigma_1,\dots,\sigma_n\}$. We first show that with probability at least $0.995$, 
	\[
	\frac{1}{2}\sum_i w_i^{1-\frac{2}{p}} a_i a_i^\top \preceq \sum_i (\frac{w_i}{\alpha})^{1-\frac{2}{p}} (\sigma_i a_i) (\sigma_i a_i)^\top \preceq \frac{3}{2}\sum_i w_i^{1-\frac{2}{p}} a_i a_i^\top.
	\]
 Here, the $\preceq$ sign denotes semi-positive definiteness.
	We prove this claim by the matrix Bernstein inequality. Notice that 
	\[
	\sum_i (\frac{w_i}{\alpha})^{1-\frac{2}{p}} (\sigma_i a_i) (\sigma_i a_i)^\top = \sum_i \frac{\xi_i}{\alpha} w_i^{1-\frac{2}{p}} a_i a_i^\top,
	\]
	where $\xi_i$ are i.i.d.\ $\Ber(\alpha)$ variables.	
	Let $M = \sum_i w_i^{1-\frac{2}{p}}a_i a_i^\top$, $a_i' = M^{-\frac{1}{2}} a_i$, and $X_i  = \frac{\xi_i}{\alpha} w_i^{1-\frac{2}{p}} a_i' (a_i')^\top -  w_i^{1-\frac{2}{p}} a_i' (a_i')^\top $, then $\E X_i = 0$ and, by the definition of Lewis weights, $\norm{a_i'}^2 = w_i^{2/p}$. We bound
	\[
	\normtwo{X_i} \leq \frac{1}{\alpha} w_i^{1-\frac{2}{p}} \normtwo{a_i'}^2 = \frac{1}{\alpha} w_i \lesssim \frac{d}{\alpha n}.
	\]
	Also,
	\begin{align*}
		\normtwo*{\sum_i \E X_i X_i^\top} \lesssim \frac{1}{\alpha}\normtwo*{\sum_i w_i^{2(1-\frac{2}{p})} \normtwo{a_i}^2 a_i' (a_i')^\top} &= \frac{1}{\alpha}\normtwo*{\sum_i w_i \cdot w_i^{1-\frac{2}{p}} a_i' (a_i')^\top}\\
		&\lesssim \frac{d}{\alpha n}\normtwo*{\sum_i w_i^{1-\frac{2}{p}} a_i' (a_i')^\top} \\
		&= \frac{d}{\alpha n} \normtwo{I} \\
		&= \frac{d}{\alpha n}.
	\end{align*}
	It follows from matrix Bernstein inequality that
	\[
	\Pr\left\{ \normtwo*{\sum_i X_i}\geq \eta \right\} \leq 2d\exp\left( -c\frac{\alpha n \eta^2}{d} \right) \leq 0.005,
	\]
	provided that $\alpha n\gtrsim \eta^{-2}d\log d$. This shows that 
	\[
	(1-\eta) I \preceq \sum_i \frac{\xi_i}{\alpha}  w_i^{1-\frac{2}{p}} a_i' (a_i')^\top \preceq (1+\eta) I,
	\]
	which is equivalent to our claim. 
	
	When $\sigma_i > 0$,
	\begin{align*}
		&\quad\ (\sigma_i a_i)^\top \left(\sum_j \left(\frac{w_j}{\alpha}\right)^{1-\frac{2}{p}} (\sigma_j a_j) (\sigma_j a_j)^\top\right)^{-1} (\sigma_i a_i) \\
		&\leq \frac{1}{1-\eta} \sigma_i^2 a_i^\top \left(\sum_j w_j^{1-\frac{2}{p}}a_j a_j^\top\right)^{-1} a_i \\
		&= \frac{1}{1-\eta}\sigma_i^2 w_i^{\frac{2}{p}} \\
		&= \frac{1}{1-\eta}\left(\frac{w_i}{\alpha}\right)^{\frac{2}{p}}.
	\end{align*}
	and, similarly,
	\[
	(\sigma_i a_i)^\top \left(\sum_j (\frac{w_j}{\alpha})^{1-\frac{2}{p}} (\sigma_j a_j) (\sigma_j a_j)^\top\right)^{-1} (\sigma_i a_i)\geq \frac{1}{1+\eta} \left(\frac{w_i}{\alpha}\right)^{\frac{2}{p}}.
	\]
	We take $\eta$ to be a constant depending on $p$ for $p < 4$ and $\eta = 1/(C_p \sqrt{d})$ for $p \geq 4$. 
	It then follows from \citet[Lemmas 5.3 and 5.4]{CP15} that $w_{i}(SA) \lesssim w_{i'}(A)/\alpha \lesssim d/(\alpha n)$, where $i'$ is the index of the corresponding row in $A$. 
	
	By a Chernoff bound, with probability at least $0.995$, $m \lesssim \alpha n$. The result then follows.
\end{proof}

\section{Entropy Estimates} \label{sec:entropy_estimates}
In this section we provide a proof of \Cref{lem:entropy_estimates_consolidated} for completeness. The proof is decomposed into the following three lemmata, \Cref{lem:covering_small}, \Cref{lem:covering_large_1} and \Cref{lem:covering_large_2}.

\begin{lemma}\label{lem:covering_small}
	Suppose that $A\in \R^{n\times d}$ has full column rank and $W$ is a diagonal matrix whose diagonal entries are the Lewis weights of $A$.  It holds for $p\geq 1$ that
	\begin{align*}
		\log \mathcal{N}(B_{w,p}(W^{-1/p} A), \norm*{\cdot}_{w,p}, t)
		& \lesssim 
		d \log\frac{1}{t}.
	\end{align*}
\end{lemma}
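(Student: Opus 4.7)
The plan is to reduce this to the classical volumetric covering number bound for the unit ball of a finite-dimensional normed space. The key observation is that $B_{w,p}(W^{-1/p}A)$ is the unit ball of the $d$-dimensional normed space obtained by equipping the column space $E := \colspace(W^{-1/p}A)$ with the (semi-)norm $\norm*{\cdot}_{w,p}$ restricted to $E$. Since $A$ has full column rank and $w_i>0$ whenever $a_i\neq 0$, this restriction is genuinely a norm on $E$, and $\dim E = d$.

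First I would record this reduction explicitly: any vector $u\in B_{w,p}(W^{-1/p}A)$ can be written as $u = W^{-1/p}Ax$ for some $x\in\R^d$ with $\norm{u}_{w,p}\leq 1$, so $B_{w,p}(W^{-1/p}A)$ is precisely the unit ball $B$ of the normed space $(E,\norm{\cdot}_{w,p})$. The covering number to estimate is therefore $\mathcal{N}(B, \norm{\cdot}_{w,p}, t)$, i.e.\ the $t$-covering number of the unit ball of a $d$-dimensional normed space in its own norm.

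Next I would invoke the standard volumetric estimate: in any $d$-dimensional normed space, the unit ball $B$ satisfies $\mathcal{N}(B, \norm{\cdot}, t) \leq (1 + 2/t)^d$. The proof is the textbook packing/volume argument --- pick a maximal $t$-separated subset of $B$; the open balls of radius $t/2$ centered at these points are disjoint and all contained in $(1+t/2)B$, so comparing volumes (Lebesgue measure on $E$ transported to $\R^d$ via any linear isomorphism) bounds their number by $(1+2/t)^d$, and maximality forces the centers to be a $t$-covering. Taking logarithms gives
\[
\log \mathcal{N}(B_{w,p}(W^{-1/p}A), \norm{\cdot}_{w,p}, t) \leq d\log(1 + 2/t) \lesssim d\log(1/t),
\]
as claimed, where the final inequality uses $t\leq 1$ (for $t>1$ the left-hand side is $0$ and the bound is trivial if one interprets $\log(1/t)$ as $\log_+(1/t)$, or one restricts attention to $t\in(0,1]$ since only this range matters in Dudley's integral).

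There is no real obstacle here: the only subtlety to flag is that one must verify $\norm{\cdot}_{w,p}$ is a norm (not merely a seminorm) on $E$ --- which holds because $w_i>0$ on the support of any nonzero column-space vector and $A$ has full column rank --- and that the ambient dimension of $E$ is exactly $d$, so the volumetric bound produces $d$ and not $n$ in the exponent.
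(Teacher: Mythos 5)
Your proof is correct and follows the same route as the paper: reduce to the covering number of the unit ball of a $d$-dimensional normed space in its own norm, and apply the standard volumetric/packing argument to obtain $(1+2/t)^d$, hence $d\log(1/t)$ after taking logarithms. The extra remarks you add (that $\norm{\cdot}_{w,p}$ restricted to $\colspace(W^{-1/p}A)$ is a genuine norm and that the dimension is $d$, not $n$) are sensible sanity checks but do not change the argument.
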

\begin{proof}

This is a standard result following from a standard volume argument, which we reproduce below for completeness. 
Suppose that $E$ is the column space of $W^{-1/p} A$ and is endowed with norms $\norm*{\cdot}_{w,p}$. 
Using the notation simplification defined in \Cref{sec:prelim}, we denote by $B_{w,p}$ the unit ball of $E$ w.r.t.~$\norm*{\cdot}_{w,q}$, i.e. $B_{w,p} = B_{w,p}(W^{-1/p}A)$.
Consider a maximal $t$-separation set $N\subseteq B_{w,p}$, then the balls $x + \frac{t}{2}B_{w,p}$ ($x\in N)$ are contained in $(1+\frac{t}{2})B_{w,p}$ and are nearly disjoint (intersection has zero volume). 
Hence $\sum_{x\in N} \operatorname{vol}(x + \frac{t}{2}B_{w,p}) \leq \operatorname{vol}((1+\frac{t}{2})B_{w,p})$, that is, $|N| \cdot \operatorname{vol}(\frac{t}{2}B_{w,p}) \leq \operatorname{vol}((1+\frac{t}{2} )B_{w,p})$, leading to $|N| \leq (1+2/t)^d$.
It is easy to check that $N$ is a $t$-covering of $(B_{w,p}, \norm*{\cdot}_{w,p})$ and it implies $\mathcal{N}(B_{w,p}(W^{-1/p} A), \norm*{\cdot}_{w,p}, t) \leq |N| \leq  (1+2/t)^d$.

\end{proof}

\begin{lemma}\label{lem:covering_large_1}
	Suppose that $A\in \R^{n\times d}$ has full column rank and $W$ is a diagonal matrix whose diagonal entries are the Lewis weights of $A$. When $1 \leq p \leq 2$ and $q > 2$, it holds that
	\begin{align*}
		\log \mathcal{N}(B_{w,p}(W^{-1/p} A), \norm*{\cdot}_{w,q}, t)
		& \lesssim
		\frac{q\sqrt{\log d}}{t^p}
	\end{align*}
\end{lemma}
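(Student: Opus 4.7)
The plan is to apply Maurey's empirical method in the Lewis-weight normalized coordinates of the subspace. Using \Cref{lem:lewis_weight_properties}(b), I would choose $U\in\R^{n\times d}$ sharing the column space of $A$ with $w_i=\|U_{i,\cdot}\|_2^p$ and $W^{\frac{1}{2}-\frac{1}{p}}U$ having orthonormal columns, and set $V=W^{-1/p}U$ so that $V^\top WV=I$ and $\|V_{i,\cdot}\|_2=1$. Since $\|W^{-1/p}Uy\|_{w,r}=\|Vy\|_{w,r}$ for every $r\geq 1$, the task reduces to covering $\{Vy:\|Vy\|_{w,p}\leq 1\}$ in $\|\cdot\|_{w,q}$. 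The payoff of this rewriting is that every atom $V_{i,\cdot}$ has unit $\ell_2$-norm and, as a direct consequence of $|V_{ij}|\leq 1$ together with $V^\top WV=I$, satisfies $\|V_{i,\cdot}\|_{w,q}\leq 1$ for $q\geq 2$.

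Given $Vy$ in the unit ball, the identity $V^\top WV=I$ yields $Vy=\sum_i w_i\langle V_{i,\cdot},y\rangle\,V_{i,\cdot}$, and H\"older together with $\|Vy\|_{w,p}\leq 1$ controls the coefficients so that (after a suitable rescaling) $Vy$ becomes a signed convex combination of the atoms with weights $w_i/d$. The plan is then to sample $k$ indices $i_1,\dots,i_k$ i.i.d.\ from the probability distribution $\{w_i/d\}$ on $[n]$ and form the Maurey-type empirical approximation $\tilde x=\frac{1}{k}\sum_{j=1}^k \sigma_j V_{i_j,\cdot}$ with appropriate signs. Symmetrization combined with Khintchine's inequality in the Banach space $(\R^n,\|\cdot\|_{w,q})$ should give a moment bound of the form
\[
\E\|Vy-\tilde x\|_{w,q}\;\lesssim\;\frac{\sqrt q}{k^{1/p}}\,\sqrt{\log d},
\]
where the $\sqrt q$ factor is the Khintchine/type-2 constant of $L_q$ for $q\geq 2$, the $k^{-1/p}$ rate (rather than the naive $k^{-1/2}$) is obtained by splitting the coordinates of $Vy$ into a small set of heavy entries and a complementary light set and handling them separately, and the $\sqrt{\log d}$ factor comes from a $K$-convexity argument applied to the $d$-dimensional column space of $V$.

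Setting the right-hand side at most $t$ then forces $k\sim qt^{-p}\sqrt{\log d}$. Discretizing the sign pattern and the index choices on a sufficiently fine net yields a $t$-net of cardinality $\exp(Ck)$ with $C$ absolute, so that
\[
\log \mathcal{N}\bigl(B_{w,p}(W^{-1/p}A),\norm*{\cdot}_{w,q},t\bigr) \;\lesssim\; \frac{q\sqrt{\log d}}{t^p}.
\]

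The hard part will be the sharpening from $k^{-1/2}$ to $k^{-1/p}$ in the empirical error bound. It is precisely this step that uses the $\ell_p$-tail sparsity of vectors in $B_{w,p}$ and marks the boundary between $1\leq p\leq 2$ and $p>2$; the delicate coupling of the heavy/light truncation with the $K$-convexity estimate is also what keeps the intrinsic factor at $\sqrt{\log d}$ rather than $\sqrt{\log n}$, so the final bound depends on the subspace dimension $d$ alone. This interplay between Lewis-weight normalization and $\ell_p$-sparsity is the main technical burden of the proof.
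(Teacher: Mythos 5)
Your plan is genuinely different from the paper's, but the central analytic claim does not hold up.

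The paper does \emph{not} obtain the $t^{-p}$ rate from a single Maurey-type empirical argument. Instead it factors the covering through $\norm{\cdot}_{w,2}$,
\[
\log \mathcal{N}(B_{w,p}, \norm{\cdot}_{w,q}, t) \leq \log \mathcal{N}(B_{w,p}, \norm{\cdot}_{w,2}, \lambda) + \log \mathcal{N}\Bigl(B_{w,2}, \norm{\cdot}_{w,q}, \tfrac{t}{\lambda}\Bigr),
\]
handles the second term with dual Sudakov and Gaussian moments (\Cref{lem:covering_large_2}), handles the first term for $1<p\leq 2$ by Artstein--Milman--Szarek duality and interpolation with an auxiliary exponent $r = p'\vee\log d$ (and for $p=1$ by a plain $\lambda^{-2}$ Maurey estimate), and finally optimizes the intermediate scale $\lambda$. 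Each individual step gives only a $\lambda^{-2}$ or $(t/\lambda)^{-2}$ rate; the claimed $t^{-p}$ emerges from balancing the two.

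The gap in your proposal is the asserted bound
\[
\E\,\norm{Vy - \tilde x}_{w,q} \;\lesssim\; \frac{\sqrt{q}\,\sqrt{\log d}}{k^{1/p}},
\]
i.e.\ a $k^{-1/p}$ rate for the empirical approximation. Symmetrization plus Khintchine in $(\R^n,\norm{\cdot}_{w,q})$ for $q\geq 2$ gives a type-2 constant $\sim \sqrt q$ and hence the standard $k^{-1/2}$ rate; there is no known Maurey-type lemma that improves the exponent \emph{below} $1/2$ (that is, gives $k^{-1/p}$ with $p<2$) for a fixed target norm, and the $p=1$ case of the lemma (rate $t^{-1}$, i.e.\ $k^{-1}$) makes the discrepancy with the $k^{-1/2}$ Maurey ceiling most visible. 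Your heavy/light split is a real device for showing that few coordinates of a vector in $B_{w,p}$ can be large (via \Cref{lem:lewis_weight_properties}(d), $\abs{(W^{-1/p}Ax)_i}\leq w_i^{1/p}\norm{x}_{w,p}$ for $p\leq 2$), but by itself it leads to a Schütt-type estimate whose $t$-exponent depends on both $p$ and $q$ and does not collapse to $t^{-p}$; and coupling it with Maurey does not repair the $k^{-1/2}$ ceiling. The $\sqrt{\log d}$ you attribute to a $K$-convexity argument is also not in the right place: in the paper it arises from the choice $r\gtrsim\log d$ that neutralizes the $d^{2/r}$ term coming out of dual Sudakov, which is a different mechanism than the $K$-convexity constant of a $d$-dimensional space. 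As written, the plan would need to establish the $k^{-1/p}$ moment bound from scratch, and I do not believe that step is true; the two-scale factorization and the optimization over $\lambda$ are not cosmetic but are precisely what produce the $t^{-p}$ dependence.
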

\begin{proof}

Suppose that $E$ is the column space of $W^{-1/p} A$ and is endowed with norms $\norm*{\cdot}_{w,p}$ and an inner product $\langle\cdot,\cdot\rangle_w$.
We first have
\[
    \log \mathcal{N}(B_{w,p}, \norm*{\cdot}_{w,q}, t) \leq \log \mathcal{N}(B_{w,p}, \norm*{\cdot}_{w,2}, \lambda) + \log \mathcal{N}\left(B_{w,2}, \norm*{\cdot}_{w,q}, \frac{t}{\lambda}\right).
\]
Recall that $B_{w,p} = B_{w,p}(W^{-1/p}A)$ and $B_{w,2} = B_{w,2}(W^{-1/p}A)$.
For the second term, we can apply \Cref{lem:covering_large_2} directly and obtain that
\[
    \log \mathcal{N}\left(B_{w,2}, \norm*{\cdot}_{w,q}, \frac{t}{\lambda}\right) \lesssim \frac{\lambda^2}{t^2} d^{2/q} q.
\]	
Next we deal with the first term. 

We first consider the case $1<p\leq 2$. Let $p'$ be the conjugate index of $p$ and $r\geq p'$ to be determined. Define $\theta\in [0,1]$ by
\[
    \frac{1}{p'} = \frac{1-\theta}{2} + \frac{\theta}{r}.
\]
For $x,y\in B_{w,2}$, by H\"older's inequality,
\[
    \norm*{x-y}_{w,p'} \leq \norm*{x-y}_{w,2}^{1-\theta} \norm*{x-y}_{w,r}^\theta \leq 2^{1-\theta}\norm*{x-y}_{w,r}^\theta. 
\]
This implies that 
\[
    \log \mathcal{N}\left(B_{w,2}, \norm*{\cdot}_{w,p'}, \lambda\right) 
\leq 
\log \mathcal{N}\left(B_{w,2}, \norm*{\cdot}_{w,r}, 2\left(\frac{\lambda}{2}\right)^{1/\theta}\right) \\
\lesssim \left(\frac{2}{\lambda}\right)^{2/\theta} r d^{2/r},
\]
where the last inequality follows from \Cref{lem:covering_large_2}. Since
\[
    \frac{1}{\theta} = \frac{p'}{p'-2} \left(1 - \frac{2}{r}\right) = \frac{p}{2 - p} \left(1 - \frac{2}{r}\right),
\]
it follows that
\[
    \log \mathcal{N}\left(B_{w,2}, \norm*{\cdot}_{w,p'}, \lambda\right) \lesssim \left(\frac{2}{\lambda}\right)^{\frac{2p}{2-p}} r d^{2/r}.
\]
Choose $r = p'\vee(\log d)$, 
\[
    \log \mathcal{N}\left(B_{w,2}, \norm*{\cdot}_{w,p'}, \lambda\right) \lesssim \left(\frac{2}{\lambda}\right)^{\frac{2p}{2-p}} r.
\]
By duality~\citep{AMS04},
\[
    \log \mathcal{N}(B_{w,p}, \norm*{\cdot}_{w,2}, \lambda) \lesssim \left(\frac{2}{\lambda}\right)^{\frac{2p}{2-p}} r.
\]

Therefore,
\[
    \log \mathcal{N}(B_{w,p}, \norm*{\cdot}_{w,q}, t) \lesssim \left(\frac{2}{\lambda}\right)^{\frac{2p}{2-p}} r + \frac{\lambda^2}{t^2} d^{2/q} q.
\]
Optimizing $\lambda$ yields that
\[
    \log \mathcal{N}(B_{w,p}, \norm*{\cdot}_{w,q}, t) \lesssim \frac{q^{p/2}r^{1-p/2}}{t^p} \lesssim \left(\frac{1}{\sqrt{p-1}} + \sqrt{\log d}\right)\cdot \frac{q}{t^p}.
\]
This completes the proof for $1<p\leq 2$.

When $p=1$, Maurey's empirical method gives that (using the fact that, by \Cref{lem:lewis_weight_properties}(c), $\norm*{\cdot}_{w,2} \leq \norm*{\cdot}_{w,1}$ in $E$)
\[
    \log \mathcal{N}(B_{w,1}, \norm*{\cdot}_{w,2}, \lambda) \lesssim \frac{\log d}{\lambda^2}
\]
and thus
\[
    \log \mathcal{N}(B_{w,1}, \norm*{\cdot}_{w,q}, t) \lesssim \frac{\log d}{\lambda^2} + \frac{\lambda^2}{t^2} d^{2/q} q.
\]
Optimizing $\lambda$ yields that
\[
    \log \mathcal{N}(B_{w,1}, \norm*{\cdot}_{w,q}, t) \lesssim \frac{\sqrt{q\log d} \cdot d^{1/q}}{t}.
\]

\end{proof}

\begin{lemma}\label{lem:covering_large_2}
	Suppose that $A\in \R^{n\times d}$ has full column rank and $W$ is a diagonal matrix whose diagonal entries are the Lewis weights of $A$. 
	When $p, q \geq 2$, it holds that
	\begin{align*}
		\log \mathcal{N}(B_{w,p}(W^{-1/p}A), \norm*{\cdot}_{w,q}, t)
		& \lesssim
		\frac{d^{1-\frac{2}{p} + \frac{2}{q}}q}{t^2}
	\end{align*}
\end{lemma}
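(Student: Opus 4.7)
The plan is to reduce the problem to a dual Sudakov estimate for the Euclidean ball, using the Lewis basis to identify the $d$-dimensional space $E = \colspace(W^{-1/p}A)$ with a standard Euclidean space, and then compute the resulting Gaussian width via the Lewis weight identities.

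\textbf{Step 1: Reduce $B_{w,p}$ to $B_{w,2}$.} For $p\geq 2$, \Cref{lem:lewis_weight_properties}(c) gives, for any $u$ in the column space of $A$, $\norm{W^{1/2-1/p}u}_2 \leq d^{1/2-1/p}\norm{u}_p$. Translating to $x = W^{-1/p}u \in E$, this reads $\norm{x}_{w,2}\leq d^{1/2-1/p}\norm{x}_{w,p}$, so $B_{w,p}(W^{-1/p}A) \subseteq d^{1/2-1/p} B_{w,2}(W^{-1/p}A)$. Consequently,
\[
\log \mathcal{N}(B_{w,p}(W^{-1/p}A), \norm*{\cdot}_{w,q}, t) \leq \log \mathcal{N}(B_{w,2}(W^{-1/p}A),\norm*{\cdot}_{w,q}, t\cdot d^{1/p-1/2}),
\]
so it suffices to show $\log\mathcal{N}(B_{w,2}(W^{-1/p}A),\norm*{\cdot}_{w,q},s)\lesssim qd^{2/q}/s^2$ for all $s>0$.

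\textbf{Step 2: Lewis-basis change of coordinates.} By \Cref{lem:lewis_weight_properties}(b), there is $U\in \R^{n\times d}$ with $\colspace(U)=\colspace(A)$, $\norm{U_{i,\cdot}}_2^p=w_i$, and such that $W^{1/2-1/p}U$ has orthonormal columns. Parametrize elements of $E$ by $x = W^{-1/p}Uy$ for $y\in\R^d$. Then $\norm{x}_{w,2}^2 = \norm{W^{1/2-1/p}Uy}_2^2 = \norm{y}_2^2$, so $B_{w,2}(W^{-1/p}A)$ corresponds to the Euclidean unit ball $B_2^d$. Setting $V := W^{1/q-1/p}U \in \R^{n\times d}$, we have $\norm{x}_{w,q} = \norm{Vy}_q$, so the task becomes bounding $\log\mathcal{N}(B_2^d, \norm{V\cdot}_q, s)$.

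\textbf{Step 3: Dual Sudakov.} The dual Sudakov inequality (see, e.g., \citet{Vers18}) gives, for any norm $\norm{\cdot}_\square$ on $\R^d$,
\[
\log \mathcal{N}(B_2^d, \norm{\cdot}_\square, s) \lesssim \frac{1}{s^2}\bigl(\E \norm{g}_\square\bigr)^2,
\]
where $g\sim N(0,I_d)$. Applying this to $\norm{y}_\square := \norm{Vy}_q$ reduces the problem to bounding $\E\norm{Vg}_q$.

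\textbf{Step 4: Compute the Gaussian expectation.} The vector $Vg\in\R^n$ is Gaussian with $i$-th coordinate variance $\norm{V_{i,\cdot}}_2^2 = w_i^{2/q-2/p}\norm{U_{i,\cdot}}_2^2 = w_i^{2/q-2/p}\cdot w_i^{2/p} = w_i^{2/q}$ (using $\norm{U_{i,\cdot}}_2^p=w_i$). Therefore, with $Z_i\sim N(0,1)$ (possibly dependent),
\[
\E \norm{Vg}_q^q = \sum_{i=1}^n w_i \E\abs{Z_i}^q \lesssim q^{q/2}\sum_i w_i = q^{q/2}\cdot d,
\]
by \Cref{lem:lewis_weight_properties}(a). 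Hence $\E\norm{Vg}_q \leq (\E\norm{Vg}_q^q)^{1/q} \lesssim \sqrt{q}\,d^{1/q}$, so $(\E\norm{Vg}_q)^2 \lesssim q\,d^{2/q}$.

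\textbf{Step 5: Combine.} Steps 3--4 yield $\log\mathcal{N}(B_{w,2}(W^{-1/p}A),\norm*{\cdot}_{w,q},s) \lesssim qd^{2/q}/s^2$. Plugging $s=td^{1/p-1/2}$ into Step 1 produces the claimed bound $qd^{1-2/p+2/q}/t^2$.

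The main technical point is Step 4, where the Lewis identities $\norm{U_{i,\cdot}}_2^p=w_i$ and $\sum_i w_i = d$ combine exactly to give the clean Gaussian tail sum; everything else is a standard dual Sudakov / change-of-basis argument.
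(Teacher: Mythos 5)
Your proof is correct and follows essentially the same route as the paper: reduce $B_{w,p}$ to $B_{w,2}$ via \Cref{lem:lewis_weight_properties}(c), apply dual Sudakov, and compute the resulting Gaussian width using the Lewis basis together with Jensen's inequality and $\sum_i w_i = d$. The only difference is presentational — you make the change of coordinates to $B_2^d$ explicit via the matrix $V = W^{1/q-1/p}U$, whereas the paper works abstractly with an orthonormal-in-$\langle\cdot,\cdot\rangle_w$ basis $\{u_i\}$ of the column space, but the computations are identical.
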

\begin{proof}

Suppose that $E$ is the column space of $W^{-1/p}A$ and is endowed with norms $\norm*{\cdot}_{w,p}$ and an inner product $\langle\cdot,\cdot\rangle_w$. By Lemma~\ref{lem:lewis_weight_properties}(b), there exist $u_1,\dots,u_d\in \R^n$ such that
\begin{align*}
    \inner{u_i}{u_{i'}}_w = 0, \quad
    \norm{u_i}_{w,p}=1
    \quad\text{and}\quad
    \sum_{j=1}^d u_{ij}^2=1
    \quad\text{for any $i,i'=1,\dots,n$.}
\end{align*}

Recall that $B_{w,p} = B_{w,p}(W^{-1/p}A)$ and $B_{w,2} = B_{w,2}(W^{-1/p}A)$.
First, observe, by Lemma~\ref{lem:lewis_weight_properties}(c), that $B_{w,p} \subseteq d^{1/2-1/p}\cdot B_{w,2}$, thus 
\[
\log \mathcal{N}(B_{w,p}, \norm*{\cdot}_{w,q}, t) \leq 
\log \mathcal{N}\left(B_{w,2}, \norm*{\cdot}_{w,q}, \frac{t}{d^{1/2-1/p}}\right)
\]
and it suffices to show that
\[
\log \mathcal{N}\left(B_{w,2}, \norm*{\cdot}_{w,q}, t\right) \lesssim \frac{qd^{2/q}}{t^2}.
\]
Let $q'$ be the conjugate index of $q$, i.e. $\frac{1}{q} + \frac{1}{q'}=1$. 
By dual Sudakov minorization,
\[
    \log \mathcal{N}\left(B_{w,2}, \norm*{\cdot}_{w,q}, t\right) \lesssim \frac{1}{t^2} \left(\E_{g\sim N(0,I_d)} \sup_{x\in B_{w,q'}}\abs*{\inner*{\sum_{i=1}^d g_i u_i}{x}_w} \right)^2,
\]
By duality again,
\[
    \E_{g\sim N(0,I_d)} \sup_{x\in B_{w,q'}}\abs*{\inner*{\sum_{i=1}^d g_i u_i}{x}_w} = \E_{g\sim N(0,I_d)} \norm*{\sum_{i=1}^d g_i u_i}_{w,q}.
\]
Then,
\begin{align*}
    \E_{g\sim N(0,I_d)} \norm*{\sum_{i=1}^d g_i u_i}_{w,q} 
    & =
    \E_{g\sim N(0,I_d)} \left(\sum_{j=1}^n w_j \abs*{\sum_{i=1}^d g_i u_{ij}}^q\right)^{1/q} \\
    &\leq 
    \left(\E_{g\sim N(0,I_d)} \sum_{j=1}^n w_j \abs*{\sum_{i=1}^d g_i u_{ij}}^q\right)^{1/q} \\
    &\leq 
    \left(\bigg(\sum_{j=1}^n w_j \abs*{\bigg(\sum_{i=1}^d u_{ij}^2\bigg)^{\frac{1}{2}}}^q\bigg)\cdot \E_{g\sim N(0,1)} |g|^q\right)^{1/q} \\
    &\leq 
    \left(\E_{g\sim N(0,1)} |g|^q\right)^{1/q} \left(\sum_{j=1}^n w_j \right)^{1/q}  \qquad \text{since $\sum_{i=1}^d u_{ij}^2=1$}\\
    &\lesssim 
    \sqrt{q} d^{1/q}. \qedhere
\end{align*}

\end{proof}

\end{document}